\definecolor{TODO}{rgb}{0.6,0.6,0.6} % TO DO!!!!
\definecolor{TOCHECK}{rgb}{0.8,0.8,0.8} % TO CKECK!!!!
\newtheorem{theorem}{Theorem}
\newcommand{\btheo}{\begin{theorem}}
\newcommand{\etheo}{\end{theorem}}
\newcommand{\bproof}{\begin{proof}}
\newcommand{\eproof}{\end{proof}}
\newtheorem{definition}[theorem]{Definition}
\newcommand{\bdefi}{\begin{definition}}
\newcommand{\edefi}{\end{definition}}
\newtheorem{fact}[theorem]{Fact}
\newcommand{\bprop}{\begin{fact}}
\newcommand{\eprop}{\end{fact}}
\newtheorem{corollary}[theorem]{Corollary}
\newcommand{\bcor}{\begin{corollary}}
\newcommand{\ecor}{\end{corollary}}
\newtheorem{example}[theorem]{Example}
\newcommand{\bex}{\begin{example}}
\newcommand{\eex}{\end{example}}
\newtheorem{lemma}[theorem]{Lemma}
\newcommand{\blemma}{\begin{lemma}}
\newcommand{\elemma}{\end{lemma}}
\newtheorem{remark}[theorem]{Remark}
\newcommand{\bremark}{\begin{remark}}
\newcommand{\eremark}{\end{remark}}
\newtheorem{conj}[theorem]{Conjecture}
\newcommand{\bconj}{\begin{conj}}
\newcommand{\econj}{\end{conj}}
\def\0{{\tt 0}} % Ex.: BPSK modulation => 0 is encoded into +1
\def\1{{\tt 1}} % Ex.: BPSK modulation => 1 is encoded into -1
\def\?{{\tt *}} % erasure symbol
\newcommand{\qed}{{\hfill \footnotesize $\blacksquare$}}
\renewcommand{\mid}{\,|\,}
\newcommand {\bE} {\mathbb{E}}
\newcommand{\ind}{\mathbbm{1}}
\newcommand{\indicator}[1]{\ind_{\{ #1 \}}}
\newcommand{\bY}{\bar{Y}}
\newcommand{\bX}{\bar{X}}
\newcommand{\bZ}{\bar{Z}}
\newcommand{\by}{\bar{y}}
\newcommand{\bx}{\bar{x}}
\newcommand{\hu}{\hat{u}}
\newcommand{\bu}{\bar{u}}
\newcommand{\be}{\bar{e}}
\newcommand{\bv}{\bar{v}}
\newcommand{\bV}{\bar{V}}
\newcommand{\bU}{\bar{U}}
\newcommand {\disto}{\mathtt{d}}
\newcommand {\WZ}{\text{\tiny WZ}}
\newcommand {\GP}{\text{\tiny GP}}
\newcommand {\code}{\mathtt{C}}
\newcommand {\Pb}{P^B}
\newcommand {\quant}{\mathcal{Q}}
\begin{document}
\title{Polar Codes are Optimal for Lossy Source Coding}
\author{Satish Babu Korada and R{\"u}diger Urbanke\thanks{EPFL, School of Computer,
 \& Communication Sciences, Lausanne, CH-1015, Switzerland,
\{satish.korada, ruediger.urbanke\}@epfl.ch.
This work was partially
supported by the National
Competence Center in Research on Mobile Information and Communication
Systems (NCCR-MICS), a center supported by the Swiss National Science
Foundation under grant number 5005-67322.}
}

\maketitle
\begin{abstract}
We consider lossy source compression of a binary symmetric source
using polar codes and the low-complexity successive encoding
algorithm. It was recently shown by Ar\i kan that polar codes achieve
the capacity of arbitrary symmetric  binary-input discrete memoryless
channels under a successive decoding strategy. We show the equivalent
result for lossy source compression, i.e., we show that this
combination achieves the rate-distortion bound for a binary symmetric source. 
We further show the optimality of polar codes for various problems including the
binary Wyner-Ziv and the binary Gelfand-Pinsker problem.
\end{abstract}

\section{Introduction}
Lossy source compression is one of the fundamental problems of
information theory.  Consider a binary symmetric source (BSS) $Y$.
Let $\disto(\cdot,\cdot)$ denote the Hamming distortion function,
\begin{align*}
\disto(0,0) =\disto(1,1) = 0, \disto(0,1) =1.
\end{align*}
It is well known that in order to compress $Y$ with average distortion
$D$ the rate $R$ has to be at least $R(D) = 1- h_2(D)$, where
$h_2(\cdot)$ is the binary entropy function \cite{Sha59}, \cite[Theorem
10.3.1]{CoT91}.  Shannon's proof of this rate-distortion bound is
based on a random coding argument.  

It was shown by Goblick that in fact linear codes are sufficient
to achieve the rate-distortion bound \cite{Gob62},\cite[Section
6.2.3]{Ber71}.

Trellis based quantizers \cite{ViO74} were perhaps the first ``practical" solution 
to source compression. Their encoding complexity is linear in the blocklength of the code 
(Viterbi algorithm). For any rate strictly
larger than $R(D)$ the gap between the expected distortion and the design distortion $D$ vanishes
exponentially in the constraint length. However, the complexity of the encoding
algorithm also scales exponentially with the constraint length. 

Given the success of sparse graph codes combined with low-complexity message-passing 
algorithms for the channel coding problem, it
is interesting to investigate the performance of such a combination for lossy
source compression.

As a first question, we can ask if the codes themselves are suitable for the task.
In this respect, Matsunaga and Yamamoto \cite{MaY03a} showed that if the degrees
of a low-density parity-check (LDPC) ensemble are chosen as large as $\Theta(\log(N))$, where $N$ is the
blocklength, then  this ensemble
saturates the rate-distortion bound if optimal encoding is employed.
Even more promising, Martininian and Wainwright \cite{WaM07} proved that properly chosen MN codes
with {\em bounded} degrees are sufficient to achieve the rate-distortion bound under optimal encoding.

Much less is known about the performance of sparse graph codes under
{\em message-passing} encoding. In \cite{MaY03} the authors consider
binary erasure quantization, the source-compression  equivalent of
the binary erasure channel (BEC) coding problem.  They show that
LDPC-based quantizers fail if the parity check density is $o(\log(N))$
but that properly constructed low-density generator-matrix (LDGM) based
quantizers combined with message-passing encoders are optimal.
They exploit the close relationship between the channel coding problem
and the lossy source compression problem, together with the fact that LDPC
codes achieve the capacity of the BEC under message-passing decoding, to prove the
latter claim.

Regular LDGM codes were considered in \cite{Mur04}.  Using non-rigorous methods
from statistical physics it was shown that these codes approach rate-distortion
bound for large degrees. It was empirically shown that these codes have good
performance under a variant of belief propagation algorithm (reinforced belief propagation). In \cite{CMZ05} the authors consider check-regular LDGM
codes and show using non-rigorous methods that these codes approach the rate-distortion bound for large check degree.
Moreover, for any rate strictly larger than $R(D)$, the gap between the achieved
distortion and $D$ vanishes exponentially in the check degree. They also
observe that belief propagation inspired decimation (BID) algorithms do not
perform well in this context. 
In \cite{BMZ03}, survey propagation inspired decimation
(SID) was proposed as an iterative algorithm for finding the solutions of 
K-SAT (non-linear constraints) formulae  efficiently. Based on this success, the authors in
\cite{CMZ05} replaced the parity-check nodes with non-linear constraints,
and empirically showed that using SID one can achieve a performance close to the
rate-distortion bound.

The construction in \cite{MaY03} suggests that those LDGM codes whose duals (LDPC)
are optimized for the binary symmetric channel (BSC)
might be good candidates for the lossy compression of a BSS using message-passing encoding.
In \cite{WaM05} the authors consider such LDGM codes and empirically show that by using SID 
one can approach very close to the rate-distortion bound. They also mention that
even BID works well but that it is not as good as SID.  
Recently, in \cite{FiF07} it was experimentally shown that using
BID it {\em is} possible to approach the rate-distortion bound 
closely. The key to making basic BP work well in this context is to choose
the code properly. This suggests that in fact the more sophisticated
algorithms like SID may not even be necessary.

In \cite{GVW08} the authors consider a different approach. They show that for
any fixed $\gamma,\epsilon > 0$
the rate-distortion pair $(R(D)+\gamma,D+\epsilon)$ can be achieved with
complexity $C_1(\gamma) \epsilon^{-C_2(\gamma)}N$. Of course, the complexity
diverges as $\gamma$ and $\epsilon$ are made smaller. The idea there is to
concatenate a small code of rate $R+\gamma$ with expected distortion $D +
\epsilon$. The source sequence is then split into blocks of size equal to the code. 
The concentration with respect to the blocklength implies that under MAP
decoding the probability that the distortion is larger than $D+\epsilon$
vanishes. 

Polar codes, introduced by Ar\i kan in \cite{Ari08}, are the first provably
capacity achieving codes for arbitrary symmetric binary-input discrete
memoryless channels (B-DMC) with low encoding and decoding complexity. 
These codes are naturally suited for decoding via successive cancellation (SC) \cite{Ari08}. 
It was pointed out in \cite{Ari08} that an SC decoder can be
implemented with $\Theta(N\log(N))$ complexity.

We show that polar codes with an SC encoder are also optimal for lossy source compression.
More precisely, we show that 
for any design distortion $0 < D< \frac12$, and any $\delta>0$ and $0 < \beta < \frac12$, 
there exists a sequence of
polar codes of rate at most $R(D)+\delta$ and increasing length $N$ so 
that their expected distortion is at most $D+O(2^{-(N^{\beta})})$.  Their encoding
as well as decoding complexity is $\Theta(N\log(N))$.

\section{Introduction to Polar Codes}
Let $W: \{0,1\} \to \mathcal{Y}$ be a binary-input discrete memoryless channel (B-DMC). 
Let $I(W) \in [0,1]$ denote the mutual information between the input and output of $W$ 
with uniform distribution on the inputs, call it the symmetric mutual
information. 
Clearly, if the channel $W$ is symmetric, then $I(W)$ is the capacity of $W$.
Also, let $Z(W) \in [0,1]$ denote the Bhattacharyya 
parameter of $W$, i.e., $Z(W) = \sum_{y\in\mathcal{Y}} \sqrt{W(y \mid 0)W(y \mid 1)}$.

In the following, an upper case letter $U$ denotes a random variable and 
and $u$ denotes its realization. Let $\bU$ denote the
random vector $(U_0,\dots,U_{N-1})$. For any set $F$,
$|F|$ denotes its cardinality. 
Let $\bU_F$ denote $(U_{i_1},\dots,U_{i_{|F|}})$ and let $\bu_F$ denote
$(u_{i_1},\dots,u_{i_{|F|}})$, where $\{i_k\in F: i_k \leq i_{k+1}\}$. Let
$U_i^j$ denote the random vector $(U_i,\dots,U_j)$ and, similarly, $u_i^j$ denotes
$(u_i,\dots,u_j)$. We use the equivalent notation for other random
variables like $X$ or $Y$. Let Ber$(p)$ denote a Bernoulli random variable with
$\Pr(1) = p$.

The polar code construction is based on the following observation. Let 
\begin{align}\label{eqn:2by2}
G_2=\left[
\begin{array}{cc}
1 & 0 \\
1 & 1
\end{array}
\right].
\end{align}

Let $A_n:\{0,\dotsc,2^n-1\}\to\{0,\dotsc,2^n-1\}$ be a permutation defined
by the bit-reversal operation in \cite{Ari08}. 
Apply the transform $A_n G_2^{\otimes n}$ 
(where ``$\phantom{}^{\otimes n}$'' denotes the $n^{th}$ Kronecker power) to a block of 
$N = 2^n$ bits and transmit the output through independent copies of a B-DMC $W$
(see Figure \ref{fig:transform}). 
 As $n$ grows large, the channels seen by individual bits (suitably defined in
\cite{Ari08}) start \emph{polarizing}: they approach either a noiseless channel
or a pure-noise channel, where the fraction of channels becoming noiseless is
close to the symmetric mutual information $I(W)$.

\begin{figure}[ht]
\centering
\input{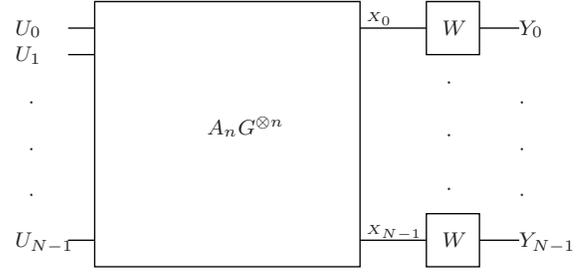}
\caption{The transform $A_n G_2^{\otimes n}$ is applied to the information word $\bU$
and the resulting vector $\bX$ is transmitted through the channel $W$. The received
word is $\bY$.}\label{fig:transform}
\end{figure}

In what follows, let $H_n = A_n G_2^{\otimes n}$. Consider a random vector
$\bU$ that is uniformly distributed over $\{0,1\}^N$. Let $\bX =
\bU H_n$, where the multiplication is performed over GF(2).
Let $\bY$ be the result of sending the components of $\bX$ over the channel $W$.
Let $P(\bU,\bX, \bY)$ denote the induced probability distribution on the set
$\{0,1\}^N\times \{0,1\}^N \times \mathcal{Y}^N$. The
channel between $\bU$ and $\bY$ is defined by the transition
probabilities  
\begin{align*}
%W_N(\by \mid \bu) \triangleq 
P_{\bY\mid \bU}(\by\mid \bu) = \prod_{i=0}^{N-1} W(y_i\mid x_i) =
\prod_{i=0}^{N-1} W(y_i\mid (\bu H_n)_i).
\end{align*}
%In terms of the probability law $P$,  
%\begin{align*}
%W_N(\by\mid \bu) = P_{\bY\mid \bU}(\by\mid \bu).
%\end{align*}

Define $W^{(i)}: \{0,1\} \to \mathcal{Y}^N \times \{0,1\}^{i-1}$ as the
channel with input $u_i$, output $(y_0^{N-1},u_0^{i-1})$, and transition
probabilities given by 
\begin{align}\label{eqn:transitionWi}
W^{(i)}(\by,u_0^{i-1}\mid u_i) & \triangleq P(\by,u_0^{i-1}\mid u_i) \nonumber \\
& =  \sum_{u_{i+1}^{N-1}}\frac{P(\by \mid \bu) P(\bu)}{P(u_i)}\nonumber\\
&=  \frac{1}{2^{N-1}} \sum_{u_{i+1}^{N-1}}P_{\bY\mid\bU}(\by\mid \bu).
\end{align}
Let $Z^{(i)}$ denote the Bhattacharyya parameter of the channel $W^{(i)}$,
\begin{align}\label{eqn:channelZ}
Z^{(i)} & = \sum_{y_0^{N-1},u_0^{i-1}} \sqrt{W^{(i)}(y_0^{N-1},u_0^{i-1}\mid
0)W^{(i)}(y_0^{N-1},u_0^{i-1}\mid 1)}.
\end{align}

The SC decoder operates as follows:
the bits $U_i$ are decoded in the order $0$ to $N-1$. 
The likelihood of $U_i$ is computed using the channel law
$W^{(i)}(\by,\hu_0^{i-1}\mid u_i)$, where $\hat{u}_0^{i-1}$ are the estimates of
the bits $U_0^{i-1}$ from the previous decoding steps.  

In \cite{Ari08} it was shown that the fraction of the channels $W^{(i)}$ that
are approximately noiseless approaches $I(W)$. More precisely, it was shown that
the $\{Z^{(i)}\}$ satisfy
\begin{align}\label{eqn:Zestimate1}
\lim_{n\to \infty}\frac{\lvert\left\{i\in\{0,\dotsc,2^n-1\}: Z^{(i)} < 
2^{-{\frac{5n}{4}}} \right\} \rvert}{2^n} = I(W).
\end{align}

In \cite{ArT08}, the above result was significantly strengthened to
\begin{align}\label{eqn:Zestimate2}
\lim_{n\to \infty}\frac{\lvert\left\{i\in\{0,\dotsc,2^n-1\}: Z^{(i)} < 
2^{-2^{n\beta}} \right\} \rvert}{2^n} = I(W),
\end{align}
which is valid for any $0 \leq \beta < \frac12$.

This suggests to use these noiseless channels (i.e., those channels at position
$i$ so that $Z^{(i)} <2^{-2^{n\beta}}$) for transmitting information while
fixing the symbols transmitted through the remaining channels to a value known both to
sender as well to the receiver. 
Following Ar\i kan, call those components $U_i$ of $\bar{U}$ which are fixed 
``frozen," (denote this set of positions as $F$) and the remaining ones ``information" bits. 
If the channel $W$ is symmetric we can assume without loss of
generality that the fixed positions are set to $0$. 
In \cite{Ari08} it was shown that the block error probability of the SC decoder
is bounded by $\sum_{i\in F}Z^{(i)}$, which is of order $O(2^{-2^{n\beta}})$
for our choice. 
Since the fraction of approximately noiseless channels tends to $I(W)$, this scheme
achieves the capacity of the underlying symmetric B-DMC $W$.

In \cite{Ari08} the following alternative interpretation was mentioned;
the above procedure can be seen as transmitting a codeword 
of a code defined through its generator matrix as follows. 
A polar code of dimension $0\leq k \leq 2^n$ is defined by choosing a subset of
the rows of $H_n$ as the generator matrix.
The choice of the generator vectors is based on the values of $Z^{(i)}$.
A polar code is then defined as the set of codewords of the form $\bar{x} =
\bar{u}H_n$, where the bits $i\in F$ are fixed to $0$.
The well known Reed-Muller codes can be considered as special cases of polar codes with
a particular rule for the choice of $F$.

Polar codes with SC decoding have an interesting, and of as yet not fully
explored, connection  to the recursive decoding of Reed-Muller codes as
proposed by Dumer \cite{Dum04}. The Plotkin $(u, u+v)$ construction in
Dumer's algorithm plays the role of the channel combining and channel
splitting for polar codes. Perhaps the two most important differences
are  (i) the construction of the code itself (how the frozen vectors
are chosen), and (ii) the actual decoding algorithm and the order in
which information bits are decoded. A better understanding of this connection
might lead to improved decoding algorithms for both constructions.

\begin{figure}[htp]
\centering
\input{ps/trellis_rev}
\caption{\label{fig:trellis} Factor graph representation used by the SC decoder.
%$\bar{X} = \bar{U} A_n G_2^{\otimes n}$.
$W(y_i\mid x_i)$ is the initial prior of the variable $X_i$, when $y_i$ is received at the 
output of a symmetric B-DMC $W$.}
\end{figure}

To summarize, the SC decoder operates as follows.

For each $i$ in the range $0$ till $N-1$:
\begin{itemize}
\item[(i)] If $i\in F$, then set $u_i = 0$.
\item[(ii)] If $i\in F^c$, then compute \begin{align*} l_i(\by,u_0^{i-1})
= \frac{W^{(i)}(\by, u_0^{i-1}\mid u_i=0)}{W^{(i)}(\by, u_0^{i-1}\mid
u_i=1)} \end{align*} and set \begin{align}\label{eqn:mapdecision}
u_i = \left\{ \begin{array}{cc} 0, & \text{ if } l_i > 1,\\ 1, &
\text{ if } l_i \leq1.  \end{array}\right.  \end{align} \end{itemize}

As explained in \cite{Ari08} using the factor graph representation shown in
Figure~\ref{fig:trellis}, the SC decoder can be implemented with complexity
$\Theta(N\log(N))$. A similar representation was considered for decoding of
Reed-Muller codes by Forney in \cite{For01}.

\subsection{Decimation and Random Rounding}
In the setting of channel coding there is typically one codeword (namely
the transmitted one) which has a posterior that is significantly larger than all other codewords. 
This makes it possible for a greedy message-passing algorithm to 
successfully move towards this codeword in small steps, using 
at any given moment ``local" information provided by the decoder.

In the case of lossy source compression there are typically many
codewords that, if chosen, result in similar distortion. Let us assume that 
these ``candidates'' are roughly uniformly spread
around the source word to be compressed. It is then clear that a local
decoder can easily get ``confused," producing locally conflicting
information with regards to the ``direction" into which one should
compress.

A standard way to overcome this problem is to combine the message-passing
algorithm with {\em decimation} steps. This works as follows; first run the
iterative algorithm for a fixed number of iterations and subsequently
decimate a small fraction of the bits. More precisely, this means
that for each bit which we decide to decimate we choose a {\em value}.
We then remove the decimated variable nodes and adjacent edges from
the graph. One is hence left with a {\em smaller} instance of
essentially the same problem. The same procedure is then repeated
on the reduced graph and this cycle is continued until all variables
have been decimated.

One can interpret the SC operation as a kind of decimation where
the order of the decimation is fixed in advance ($0,\dots,N-1$).
In fact, the SC decoder can be interpreted as a particular instance
of a BID.

When making the decision on bit $U_i$ using the SC decoder, it is
natural to choose that value for $U_i$ which maximizes the posterior.
Indeed, such a scheme works well in practice for source compression.
For the analysis however it is more convenient to use {\em randomized
rounding}.  In each step, instead of making the MAP decision we
replace \eqref{eqn:mapdecision} with
\begin{align*}
u_i = \left\{ \begin{array}{cc}
0, & \text{ w.p. } \frac{l_i}{1+l_i},\\
1, & \text{ w.p. } \frac{1}{1+l_i}.
\end{array}\right.
\end{align*} 
In words, we make the decision proportional to the likelihoods.
Randomized rounding as a decimation rule is not new. E.g., in
\cite{MRS07} it was used to analyze the performance of BID for
random $K$-SAT problems.

For lossy source compression, the SC operation is
employed at the encoder side to map the source vector to a codeword.
Therefore, from now onwards we refer to this operation as SC {\em encoding}.

\section{Main Result}
\subsection{Statement}
\begin{theorem}[Polar Codes Achieve the Rate-Distortion Bound]\label{the:main}
Let $Y$ be a BSS and fix the {\em design} distortion $D$, $0 < D < \frac12$.
For any rate $R > 1-h_2(D)$ and any $0 < \beta < \frac12$,
there exists a sequence of polar codes of length $N$ with rates $R_N <R$  
so that under SC encoding using randomized rounding they
achieve expected distortion $D_N$ satisfying
\begin{align*}
D_N & \leq D + O(2^{-(N^\beta)}). 
\end{align*}
The encoding as well as decoding complexity of these codes is $\Theta(N
\log(N))$.  \end{theorem}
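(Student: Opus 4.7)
The plan is to reduce lossy source coding to polar channel coding through a test-channel construction. Take the $\text{BSC}(D)$ as the test channel $W$, so $I(W)=1-h_2(D)<R$. Let $P$ be the joint distribution on $(\bU,\bX,\bY)$ in which $\bU$ is uniform on $\{0,1\}^N$, $\bX=\bU H_n$, and each $Y_i$ is obtained from $X_i$ through an independent $\text{BSC}(D)$. Under $P$, the marginal of $\bY$ is that of a length-$N$ BSS and $E_P[\disto(\bX,\bY)/N]=D$ by construction. Define the subchannels $W^{(i)}$ and their Bhattacharyya parameters $Z^{(i)}$ exactly as in the channel-coding setup above.

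Fix $\beta<\beta'<\tfrac12$ and let $F=\{i:Z^{(i)}>1-2^{-2^{n\beta'}}\}$; the complement $F^c$ will index the compressed description. Using the symmetric form of the Ar\i kan--Telatar polarization theorem (obtained by carrying the martingale argument of \eqref{eqn:Zestimate2} through on the process $1-Z_n$), $|F|/N\to 1-I(W)=h_2(D)<1-R$, so $|F^c|/N\leq R$ for all sufficiently large $N$. The encoder processes $i=0,\dots,N-1$ in order: for $i\in F$ it sets $u_i$ via shared uniform randomness, and for $i\in F^c$ it performs randomized rounding, drawing $u_i$ from $P(U_i\mid\bY,u_0^{i-1})$. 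The transmitted description is $\bu_{F^c}$; the decoder uses the shared randomness to fill in $\bu_F$ and outputs $\bx=\bu H_n$.

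Let $Q$ denote the joint law of $(\bU,\bY)$ induced by this procedure. By construction, $Q$ and $P$ have the same marginal on $\bY$, and as functions of $(\bY,U_0^{i-1})$ the conditional of $U_i$ coincides under $Q$ and $P$ for every $i\in F^c$; they differ only at $i\in F$, where the $Q$-conditional is uniform. A greedy coupling that matches $U_i$ whenever the conditionals agree then yields
\begin{align*}
\|Q-P\|_{TV}\leq\sum_{i\in F}E_P\bigl[\bigl|P(U_i=0\mid\bY,U_0^{i-1})-\tfrac12\bigr|\bigr].
\end{align*}
Writing $p_i=P(U_i=0\mid\bY,U_0^{i-1})$, the elementary inequality $(2p-1)^2\leq 2(1-2\sqrt{p(1-p)})$ combined with Jensen's inequality gives $E_P[|2p_i-1|]\leq\sqrt{2(1-Z^{(i)})}$. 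Each summand for $i\in F$ is therefore at most $\sqrt{2\cdot 2^{-2^{n\beta'}}}$, and summing over at most $N$ indices yields $\|Q-P\|_{TV}= O(N\cdot 2^{-2^{n\beta'}/2})=O(2^{-N^{\beta}})$.

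Since $\disto(\bX,\bY)/N\in[0,1]$, the inequality $|E_Q[f]-E_P[f]|\leq\|f\|_\infty\|Q-P\|_{TV}$ together with $E_P[\disto(\bX,\bY)/N]=D$ gives $E_Q[\disto(\bX,\bY)/N]\leq D+O(2^{-N^\beta})$. A standard averaging argument then fixes the shared randomness to a single realization achieving at most the average distortion, producing a deterministic code of the same rate. The encoder is the SC message-passing computation on the factor graph of Figure~\ref{fig:trellis} with BSC$(D)$ priors, so its complexity is $\Theta(N\log N)$. The main obstacle I anticipate is the symmetric polarization statement underpinning the bound on $|F|$: \eqref{eqn:Zestimate2} controls only the fraction of indices with $Z^{(i)}$ near $0$, and an additional argument (likely a companion martingale on $1-Z_n$ or on a dual channel parameter) is required to obtain the same $2^{-2^{n\beta'}}$ rate for those $i$ with $Z^{(i)}$ near $1$. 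The remaining ingredients---the coupling chain rule, the $\sqrt{1-Z^{(i)}}$ estimate, and derandomization---are routine.
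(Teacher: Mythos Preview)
Your proposal is correct and follows essentially the same route as the paper: define $P$ and $Q$, bound $\|Q-P\|_{TV}$ via a telescoping chain rule (the paper's Lemma~\ref{lem:totalvarbnd}), control each frozen-index term by $\sqrt{1-Z^{(i)}}$ through the identity $E_P[\sqrt{p_i(1-p_i)}]=Z^{(i)}/2$ and Cauchy--Schwarz (Lemma~\ref{lem:Zdelta}), use $E_P[\disto/N]=D$ (Lemma~\ref{lem:distortionunderP}), and derandomize by averaging over the frozen values. The ``obstacle'' you flag is exactly the one nontrivial ingredient the paper supplies separately: Theorem~\ref{thm:rateZbar}, proved via the lower bound $Z(W^{[0]})\geq\sqrt{2Z^2-Z^4}$ of Lemma~\ref{lem:lowerboundZ}, which shows that $X_n=1-Z_n^2$ obeys $X_{n+1}\leq X_n^2$ or $X_{n+1}\leq 2X_n$ with equal probability and hence inherits the Ar\i kan--Telatar rate.
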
 
\subsection{Simulation Results and Discussion} 
Let us consider how polar codes behave in practice.
Recall that the length $N$ of the code is always a power of $2$, i.e., $N = 2^n$.
Let us construct a polar code to achieve a distortion $D$. 
Let $W$ denote the channel BSC$(D)$ and let $R=R(D) + \delta$ for some $\delta >0$.

In order to fully specify the code we need to specify the set $F$, i.e., the set
of frozen components. We proceed as follows.
First we estimate the $Z^{(i)}$s for all $i\in\{0,N-1\}$ and sort
the indices $i$ in decreasing order of $Z^{(i)}$s. The set $F$
consists of the first $RN$ indices, i.e., 
it consists of the indices corresponding to the $RN$ largest $Z^{(i)}$s. 

This is similar to the channel code construction for the BSC$(D)$ but there
is a slight difference. For the case of channel coding we assign all indices $i$
so that $Z^{(i)}$ is {\em very} small, i.e., so that lets say $Z^{(i)}<\delta$, 
to the set $F^c$. Therefore,
the set $F$ consists of all those indices $i$ so that $Z^{(i)}\geq \delta$.

For the source compression, on the other hand, $F$ consists of all those
indices $i$ so that  $Z^{(i)}\geq 1-\delta$, i.e., of all those indices
corresponding to {\em very} large values of $Z^{(i)}$.

Putting it differently, in channel coding, the rate $R$ is chosen to
be strictly less than $1-h_2(D)$, whereas in source compression it
is chosen so that it is strictly larger than this quantity.
Figure~\ref{fig:LOSSY} shows the performance of the SC encoding algorithm combined with
randomized rounding. As asserted by Theorem~\ref{the:main},  the points approach 
the rate-distortion bound as the block length increases.

\begin{figure}[htp]
\centering
\input{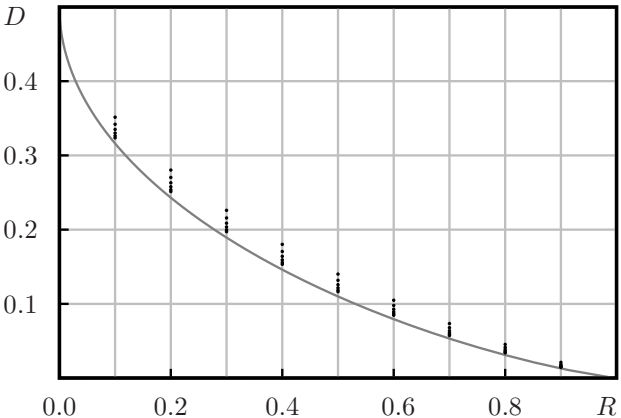}
\caption{\label{fig:LOSSY} The rate-distortion performance for the SC encoding
algorithm with randomized rounding for $n=9,11,13,15,17$ and $19$. As the block length increases the points move closer to the rate-distortion bound.}
\end{figure}

In \cite{HKU09} the performance of polar codes for lossy source compression was
already investigated empirically. Note that the construction used in \cite{HKU09} is 
different from the current construction. Let us recall. Consider a BSC$(p)$, where
$p = h_2^{-1}(1-h_2(D))$. Let the corresponding Bhattacharyya constants be 
$\tilde{Z}^{(i)}$s. In \cite{HKU09} first a channel code 
of rate $1-h_2(p) - \epsilon$ is constructed according to the values $\tilde{Z}^{(i)}$s.
Let $\tilde{F}$ be the corresponding frozen set. The set $F$ for the source code is given by 
\begin{align*}
F = \{N-1-i: i \in \tilde{F}^c\}.
\end{align*}
The rationale behind this construction is that the resulting source
code is the dual of the channel code designed for the BSC$(p)$. The
rate of the resulting source code is equal to $h_2(p) + \epsilon  =
1-h_2(D) + \epsilon$.  Although this code construction is different,
empirically the resulting frozen sets are very similar.

There is also a slight difference with respect to the  decimation
algorithm.  In \cite{HKU09} the decimation step is based on MAP
estimates, whereas in the current setting we use randomized rounding.

Despite all these differences the performance of both schemes is
comparable.

\section{The Proof}

From now on we restrict $W$ to be a BSC$(D)$, i.e.,
\begin{align*}
W(0\mid 1) &= W(1\mid 0) = D, \\
W(0\mid 0) &= W(1\mid 1) = 1-D.
\end{align*}
As immediate consequence we have 
\begin{align}\label{equ:channelsymmetry}
W(y\mid x) = W(y \oplus z \mid x \oplus z).
\end{align}
This extends in a natural way if we consider vectors. 

\subsection{The Standard Source Coding Model}
Let us describe lossy source compression using polar codes in more detail.
We refer to this as the ``Standard Model.''
In the following we assume that we want to compress the source with
average distortion $D$.

{\em Model: }
Let $\by=(y_0,\dots,y_{N-1})$ denote $N$ i.i.d. realizations of
the source $Y$. Let $F\subseteq \{0,\dots, N-1\}$ and let $\tilde{u}_F
\in\{0,1\}^{|F|}$ be a fixed vector. In the sequel we use the shorthand
``SM$(F,\tilde{u}_F)$" to denote the Standard Model with frozen set $F$ whose
components are fixed to $\tilde{u}_F$.  It is defined as follows.

{\em Encoding: }
Let $f^{\tilde{u}_F}:\{0,1\}^N \to
\{0,1\}^{N-|F|}$ denote the {\em encoding} function. 
For a given $\by$ we first compute $\bu$,
as described below, where $\bu = (u_0, \dots, u_{N-1})$.
Then $f^{\tilde{u}_F}(\by)=\bu_{F^c}$.
 
Given $\by$, for each $i$ in the range $0$ till $N-1$:
\begin{itemize}
\item[(i)] Compute
\begin{align*}
\l_i(\by,u_0^{i-1}) \triangleq\frac{W^{(i)}(\by, u_0^{i-1}\mid
u_i=0)}{W^{(i)}(\by, u_0^{i-1}\mid u_i=1)}.
\end{align*}
\item[(ii)] If $i \in F^c$ then set $u_i=0$ with probability
$\frac{l_i}{1+l_i}$
and equal to $1$ otherwise; if  $i \in F$ then set $u_i = \tilde{u}_i$.
\end{itemize}

{\em Decoding: }
The decoding function $\hat{f}^{\tilde{u}_F}: \{0,1\}^{N-|F|} \to
\{0,1\}^{N}$ maps $\bu_{F^c}$ back to the 
{\em reconstruction point} $\bx$ via $\bx = \bu H_n$, where $\bu_{F}=\tilde{u}_F$. 

{\em Distortion: }
The average distortion incurred by this scheme is given by $\bE[\disto(\bY,\bX)]$,
where the expectation is over the source randomness and the randomness involved
in the randomized rounding at the encoder.

{\em Complexity:} The encoding (decoding) task for source coding is the same as the decoding (encoding) task for
channel coding. As remarked before, both have complexity $\Theta(N\log N)$. 

Remark: Recall that $l_i$ is the posterior of the variable $U_i$ 
given the observations $\bY$ as well as $\bU_0^{i-1}$, under
the assumption that $\bU$ has uniform prior and that $\bY$ is the result of
transmitting $\bU H_n$ over a BSC$(D)$.

\subsection{Computation of Average Distortion}
The encoding function $f^{\tilde{u}_F}$ is random.
More precisely, in step $i$ of the encoding process, $i \in F^c$, 
we fix the value of $U_i$ proportional to the posterior (randomized rounding) $P_{U_i\mid
U_0^{i-1},\bY}(u_i\mid u_0^{i-1},\by)$. 
This implies that the probability of picking a vector $\bu$ given $\by$ is equal to
\begin{align*}
\begin{cases}
0, & \bu_F \neq \tilde{u}_F, \\
\prod_{i\in F^c} P_{U_i\mid
U_0^{i-1},\bY}(u_i\mid u_0^{i-1},\by), & \bu_F = \tilde{u}_F.
\end{cases}
\end{align*}
Therefore, the average (over $\by$ and the randomness of the encoder) 
distortion of SM$(F,\tilde{u}_F)$ is given by 
\begin{align}\label{eqn:distortionF}
D_N(F,\tilde{u}_F) = &\sum_{\by\in\{0,1\}^N} \frac{1}{2^N} \sum_{\bu_{F^c} \in \{0,1\}^{|F^c|}} 
\nonumber\\ 
&\prod_{i\in F^c} P(u_i\mid u_0^{i-1},\by)\disto(\by,\bu H_n),
\end{align}
where $U_i = \tilde{u}_i$ for $i\in F$.

We want to to show that there exists a set $F$ of cardinality roughly $N
h_2(D)$ and a vector $\tilde{u}_F$ such that
 $D_N(F,\tilde{u}_F) \approx
D$. This will show that polar codes
 achieve the rate-distortion bound.

For the proof it is more convenient not to determine the distortion
for a fixed choice of $\tilde{u}_F$ but to compute the average distortion 
over all possible choices (with a uniform distribution over
these choices). Later, in Section~\ref{sec:gauge}, we will see that the distortion {\em does not depend} on
the choice of $\tilde{u}_F$. A convenient choice is therefore to set it to zero. 
This will lead to the desired final result.

Let us therefore start by computing the {\em average distortion}.
Let $D_N(F)$ denote the distortion obtained by averaging $D_N(F,\tilde{u}_F)$ over all 
$2^{|F|}$ possible values of $\tilde{u}_F$. We will show that
$D_N(F)$ is close to $D$.

The distortion $D_N(F)$ can be written as
\begin{align*}
D_N(F) = &
\sum_{\tilde{u}_F \in \{0,1\}^{|F|}} 
\frac{1}{2^{|F|}} 
D_N(F,\tilde{u}_F)\\
= &\sum_{\tilde{u}_F}
\frac{1}{2^{|F|}} 
\sum_{\by} \frac{1}{2^N}
\\
&\sum_{\bu_{F^c}} \prod_{i\in F^c} P(u_i\mid u_0^{i-1},\by)\disto(\by,\bu H_n)\\
= &\sum_{\by} \frac{1}{2^N} \sum_{\bu} \frac{1}{2^{|F|}} \prod_{i\in F^c} P(u_i\mid u_0^{i-1},\by)\disto(\by,\bu H_n).
\end{align*}
Let $Q_{\bU, \bY}$ denote the distribution defined by $Q_{\bY}(\by) =
\frac{1}{2^N}$ and $Q_{\bU \mid \bY}$ defined by
\begin{align}\label{eqn:distQ}
Q(u_i\mid u_0^{i-1},\by) = \left\{
\begin{array}{cc}
\frac12, & \text{ if }i\in F, \\
P_{U_i\mid U_0^{i-1},\bY}(u_i\mid u_0^{i-1},\by), & \text{ if } i\in F^c.
\end{array}\right.
\end{align}
Then, 
\begin{align*}
D_N(F) = \bE_{Q}[{\disto}(\bY,\bU H_n)],
\end{align*}
where $\bE_Q[\cdot]$ denotes expectation with respect to the distribution $Q_{\bU, \bY}$.

Similarly, let $\bE_P[ \cdot ]$ denote the expectation with respect to the distribution
$P_{\bU,\bY}$. Recall that $P_{\bY}(\by) = \frac{1}{2^N}$ and that
we can write  $P_{\bU\mid \bY}$ in the form
\begin{align*}
P_{\bU\mid \bY}(\bu \mid \by)  = \prod_{i=0}^{N-1} 
P_{U_i\mid U_0^{i-1},\bY}(u_i\mid u_0^{i-1},\by).
\end{align*}
If we compare $Q$ to $P$ we see that they have the same 
structure except for the components $i \in F$.
Indeed, in the following lemma we show that the total variation distance between $Q$
and $P$ can be bounded in terms of how much the posteriors $Q_{U_i\mid U_0^{i-1},\bY}$ 
and $P_{U_i\mid U_0^{i-1},\bY}$ differ for $i \in F$.

\begin{lemma}[Bound on the Total Variation Distance]\label{lem:totalvarbnd}
Let $F$ denote the set of frozen indices and let the probability distributions
$Q$ and $P$ be as defined above. Then
%Then for any $\by \in \{0,1\}^N$
%\begin{align*}
%&\sum_{\bu}\vert Q_{\bU\mid \bY}(\bu\mid\by) - P_{\bU\mid\bY}(\bu\mid\by)\vert \\
%&\phantom{xxxxx} \leq 2\sum_{i\in F}\bE_{P_{\bU \mid \bY = \by}}\left[\Big\vert \frac12 - P_{U_i\mid U_0^{i-1},\bY}(0\mid
%U_0^{i-1},\by)\Big\vert\right].
%\end{align*}
%This further implies that
\begin{align*}
&\sum_{\bu,\by}|Q(\bu,\by)-P(\bu,\by)| \\
& \phantom{xxxxx} \leq 2 \sum_{i\in F}\bE_P\left[\Big\vert \frac12 - P_{U_i\mid U_0^{i-1},\bY}(0\mid
U_0^{i-1},\bY)\Big\vert\right].
\end{align*}
\end{lemma}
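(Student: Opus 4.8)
The plan is to bound the total variation distance between $Q$ and $P$ by a hybrid / telescoping argument over the coordinates in $F$, replacing one posterior at a time. Both $Q_{\bU\mid\bY}$ and $P_{\bU\mid\bY}$ factor as products of conditional kernels $\prod_{i=0}^{N-1}(\cdot\mid u_0^{i-1},\by)$, and they agree on every kernel with $i\in F^c$, differing only on the $|F|$ kernels with $i\in F$ (where $Q$ uses the uniform kernel $\frac12$ and $P$ uses $P_{U_i\mid U_0^{i-1},\bY}$). Since the $\bY$-marginals coincide ($Q_{\bY}=P_{\bY}=2^{-N}$), it suffices to work conditionally on $\by$ and bound $\sum_{\bu}\lvert Q(\bu\mid\by)-P(\bu\mid\by)\rvert$ uniformly.

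First I would set up the hybrid distributions: order the elements of $F$ as $i_1<\dots<i_m$ and define $R^{(k)}$ to be the distribution whose first $i_k$-and-beyond... more cleanly, let $R^{(k)}$ use the $P$-kernel at coordinates $i\in F$ with $i\le i_k$ and the uniform kernel at coordinates $i\in F$ with $i>i_k$ (and the common kernel everywhere on $F^c$), so that $R^{(0)}=Q$ and $R^{(m)}=P$. Then by the triangle inequality
\begin{align*}
\sum_{\bu}\lvert Q(\bu\mid\by)-P(\bu\mid\by)\rvert \le \sum_{k=1}^{m}\sum_{\bu}\lvert R^{(k-1)}(\bu\mid\by)-R^{(k)}(\bu\mid\by)\rvert.
\end{align*}
Consecutive hybrids $R^{(k-1)}$ and $R^{(k)}$ differ only in the single kernel at coordinate $i_k$. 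Factoring out the shared kernels before and after position $i_k$ and summing over those coordinates, each summand collapses to an expectation, over the prefix $U_0^{i_k-1}$ drawn according to the shared prefix law, of the $\ell_1$-distance between the two kernels at $i_k$, namely $\sum_{u_{i_k}}\lvert \frac12 - P_{U_{i_k}\mid U_0^{i_k-1},\bY}(u_{i_k}\mid U_0^{i_k-1},\by)\rvert = 2\lvert \frac12 - P_{U_{i_k}\mid U_0^{i_k-1},\bY}(0\mid U_0^{i_k-1},\by)\rvert$. The one point requiring care is that the prefix law governing $U_0^{i_k-1}$ in the hybrid $R^{(k-1)}$ (equivalently $R^{(k)}$) coincides with the $P$-marginal $P_{U_0^{i_k-1}\mid\bY}$ on that prefix: this holds because all three distributions $R^{(k-1)}$, $R^{(k)}$, $P$ use the $P$-kernels at every coordinate $<i_k$ that lies in $F$ (those are the $i_1,\dots,i_{k-1}$) and the common kernels on $F^c$, so their marginals on $U_0^{i_k-1}$ are identical. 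Hence each term equals $2\,\bE_P[\lvert\frac12 - P_{U_{i_k}\mid U_0^{i_k-1},\bY}(0\mid U_0^{i_k-1},\bY)\rvert\mid \bY=\by]$.

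Finally I would average over $\by$ with weight $2^{-N}$ (using $Q_{\bY}=P_{\bY}$) to turn the conditional expectations into the unconditional $\bE_P[\cdot]$ in the statement, and sum over $k=1,\dots,m$, i.e. over $i\in F$, yielding exactly
\begin{align*}
\sum_{\bu,\by}\lvert Q(\bu,\by)-P(\bu,\by)\rvert \le 2\sum_{i\in F}\bE_P\!\left[\Big\lvert \tfrac12 - P_{U_i\mid U_0^{i-1},\bY}(0\mid U_0^{i-1},\bY)\Big\rvert\right].
\end{align*}
The only genuine obstacle is the bookkeeping in the hybrid step — making sure the kernels before position $i_k$ are shared so that the prefix marginal is the $P$-marginal, and that the kernels after $i_k$ sum to one when marginalized — but this is a routine consequence of the common product structure of $Q$ and $P$ away from coordinate $i_k$; no analytic estimate is needed here, only the triangle inequality and Fubini-type rearrangement of finite sums.
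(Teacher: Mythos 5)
Your proposal is correct and is essentially the paper's own argument in different clothing: the paper's ``telescoping expansion'' of $\prod_i Q(u_i\mid\cdot)-\prod_i P(u_i\mid\cdot)$ is exactly your hybrid decomposition (prefix kernels from $P$, suffix kernels from $Q$, one swap at a time on the coordinates in $F$), after which both you and the paper apply the triangle inequality, marginalize the suffix kernels away, note the prefix marginal is the $P$-marginal, and pick up the factor of $2$ from summing $\lvert \tfrac12-P(u_i\mid\cdot)\rvert$ over $u_i\in\{0,1\}$. Your observation that the prefix law is the $P$-marginal because all kernels strictly before $i_k$ agree across $R^{(k-1)}$, $R^{(k)}$, and $P$ is the same bookkeeping the paper performs implicitly when it writes $\sum_{\bu_0^i}\lvert\tfrac12-P(u_i\mid\cdot)\rvert\prod_{j<i}P(u_j\mid\cdot)$ and calls it an expectation.
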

\begin{proof}
\begin{align*}
&\sum_{\bu}\vert Q(\bu\mid \by) - P(\bu\mid\by)\vert \\
&=\sum_{\bu}\Big\vert\prod_{i=0}^{N-1}  Q(u_i\mid u_0^{i-1},\by) -
\prod_{i=0}^{N-1}P(u_i\mid u_0^{i-1},\by)\Big\vert\\
 & = \sum_{\bu}\Big\vert \sum_{i=0}^{N-1} \Bigl[ \bigl( Q(u_i\mid u_0^{i-1},\by)-P(u_i\mid
u_0^{i-1},\by)\bigr) \; \cdot \\
& \phantom{==} \Bigl(\prod_{j=0}^{i-1} P(u_j\mid u_0^{j-1},\by) \Bigr) 
\Bigl(\prod_{j=i+1}^{N-1} Q(u_j\mid u_0^{j-1},\by)\Bigr) \Bigr]\Big\vert.
\end{align*}
In the last step we have used the following telescoping expansion:
\begin{align*}
A_0^{N-1} - B_0^{N-1} & = \sum_{i=0}^{N-1} A_{0}^{i} B_{i+1}^{N-1} - \sum_{i=0}^{N-1} A_{0}^{i-1} B_{i}^{N-1},
\end{align*}
where $A_{k}^{j}$ denotes here the product $\prod_{i=k}^{j} A_i$.

Now note that if $i \in F^c$ then $ Q(u_i\mid u_0^{i-1},\by)=P(u_i\mid
u_0^{i-1},\by)$, so that these terms vanish.
The above sum therefore reduces to
\begin{align*}
& \phantom{=} \sum_{\bu}\Big\vert\sum_{i \in F} \Bigl[ 
\underbrace{\bigl( Q(u_i\mid u_0^{i-1},\by)-P(u_i\mid
u_0^{i-1},\by)\bigr)}_{\leq \mid \frac12 -P(u_i\mid u_0^{i-1},\by)\mid }
\; \cdot \\
& \phantom{==} \Bigl(\prod_{j=0}^{i-1} P(u_j\mid u_0^{j-1},\by) \Bigr) 
\Bigl(\prod_{j=i+1}^{N-1} Q(u_j\mid u_0^{j-1},\by)\Bigr) \Bigr]\Big\vert \\
& \leq \sum_{i \in F} \sum_{\bu_0^{i}}
\Big\vert \frac12 -P(u_i\mid u_0^{i-1},\by)\Big\vert
\prod_{j=0}^{i-1} P(u_j\mid u_0^{j-1},\by) \\
& \leq  2 \sum_{i\in F} \bE_{P_{\bU\mid\bY = \by}}\left[\Big\vert \frac12 - P_{U_i\mid U_0^{i-1},\bY}(0\mid
U_0^{i-1},\by)\Big\vert\right].
\end{align*}
In the last step the summation over $u_i$ gives rise to the factor $2$, whereas
the summation over $u_{0}^{i-1}$ gives rise to the expectation.

Note that $Q_{\bY}(\by) = P_{\bY}(\by) = \frac{1}{2^N}$. The claim follows by taking
the expectation over $\bY$.
\end{proof}

\begin{lemma}[Distortion under $Q$ versus Distortion under $P$]\label{lem:QversusP}
Let $F$ be chosen such that for $i \in F$ 
\begin{align}\label{equ:QversusP}
\bE_P\left[\Big|\frac12 - P_{U_i\mid U_0^{i-1},\bY}(0\mid
U_0^{i-1},\bY)\Big|\right] \leq \delta_N.
\end{align}
The average distortion is then bounded by 
\begin{align*}
\frac{1}{N}\bE_{Q}[\disto(\bY,\bU H_n)] \leq \frac{1}{N}\bE_{P}[\disto(\bY,\bU
H_n)] + \vert F\vert 2\delta_N. 
\end{align*}
\end{lemma}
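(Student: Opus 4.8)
The plan is to use the elementary fact that swapping one probability measure for another changes the expectation of a \emph{bounded} function by at most its sup-norm times their $\ell_1$-distance, and then to plug in the bound on $\sum_{\bu,\by}|Q(\bu,\by)-P(\bu,\by)|$ already furnished by Lemma~\ref{lem:totalvarbnd}. The key observation that makes this work is that the quantity whose expectation we are comparing, the Hamming distortion $\disto(\by,\bu H_n)=\sum_{i=0}^{N-1}\disto(y_i,(\bu H_n)_i)$, is deterministically bounded: $0\le \disto(\by,\bu H_n)\le N$ for every $\by$ and every $\bu$.

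First I would write both expectations as sums over $(\bu,\by)$ and estimate, using $0\le\disto\le N$ and $\sum_{\bu,\by}(Q(\bu,\by)-P(\bu,\by))=0$,
\[
\bE_{Q}[\disto(\bY,\bU H_n)]-\bE_{P}[\disto(\bY,\bU H_n)]
=\sum_{\bu,\by}\disto(\by,\bu H_n)\,\bigl(Q(\bu,\by)-P(\bu,\by)\bigr)
\le N\sum_{\bu,\by}\bigl|Q(\bu,\by)-P(\bu,\by)\bigr|.
\]
(One could split the sum into the regions $\{Q\ge P\}$ and $\{Q<P\}$ to save a factor $\tfrac12$, but this is not needed for the stated bound.) Note that only the one-sided inequality $\bE_Q\le\bE_P+\cdots$ is required.

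Then I would invoke Lemma~\ref{lem:totalvarbnd}, which bounds the right-hand sum by $2\sum_{i\in F}\bE_P\bigl[\bigl|\tfrac12-P_{U_i\mid U_0^{i-1},\bY}(0\mid U_0^{i-1},\bY)\bigr|\bigr]$, followed by the hypothesis \eqref{equ:QversusP}, which bounds each of the $|F|$ summands by $\delta_N$. Combining, $\bE_{Q}[\disto(\bY,\bU H_n)]-\bE_{P}[\disto(\bY,\bU H_n)]\le 2N|F|\delta_N$, and dividing by $N$ gives the claim. I do not expect any genuine obstacle here; the only points requiring a little care are keeping the direction of the inequality straight (we need the upper bound on $\bE_Q$, so no absolute value around the difference of expectations is necessary) and tracking the constant, for which the stated $2\delta_N$ is a safe, mildly lossy choice.
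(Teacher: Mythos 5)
Your proposal is correct and follows essentially the same route as the paper: expand the difference of expectations as a sum over $(\bu,\by)$, bound the distortion by $N$ to reduce to the total-variation distance, apply Lemma~\ref{lem:totalvarbnd}, and then use the hypothesis~\eqref{equ:QversusP}. The aside about saving a factor of $\tfrac12$ by splitting into $\{Q\ge P\}$ and $\{Q<P\}$ is a minor extra observation not present in the paper, but the chain of inequalities is identical.
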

\begin{proof}
\begin{align*}
\bE_Q&[\disto(\bY, \bU H_n)] - \bE_P[\disto(\bY, \bU H_n)] \\
& =  \sum_{\bu,\by} \Big(Q(\bu,\by) - P(\bu,\by) \Big)\disto(\by, \bu H_n)\\
& \leq  N \sum_{\bu,\by} \Big\vert Q(\bu,\by) -
P(\bu,\by) \Big\vert\\
& \stackrel{\text{Lem.}~\ref{lem:totalvarbnd}}{\leq} 2 N\sum_{i\in F}
\bE_P\left[\Big\vert \frac12 - P_{U_i\mid U_0^{i-1},\bY}(0\mid U_0^{i-1},\bY)\Big\vert\right]\\
& \leq |F| 2 N \delta_N .
\end{align*}
\end{proof}

From Lemma~\ref{lem:QversusP} we see that the average (over $\by$ as well as $\tilde{u}_F$) distortion of the
Standard Model is upper bounded by the average distortion with respect
to $P$ plus a term which bounds the ``distance" between $Q$ and
$P$.

\begin{lemma}[Distortion under $P$]\label{lem:distortionunderP}
\begin{align*}
\bE_{P}[\disto(\bY,\bU H_n)] = N D.
\end{align*}
\end{lemma}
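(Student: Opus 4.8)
The plan is to exploit the fact that under $P$ the pair $(\bU,\bY)$ is distributed exactly as in the channel-coding set-up: $\bU$ is uniform on $\{0,1\}^N$, $\bX=\bU H_n$ is therefore also uniform on $\{0,1\}^N$ (since $H_n$ is invertible over $\GF(2)$), and $\bY$ is the output of passing $\bX$ through $N$ independent copies of the $\BSC(D)$. Consequently $\disto(\bY,\bU H_n)=\disto(\bY,\bX)=\wt(\bY\oplus\bX)$, and $\bY\oplus\bX$ is exactly the channel noise vector $\bZ$, whose components are i.i.d. $\text{Ber}(D)$. Hence $\bE_P[\disto(\bY,\bU H_n)]=\bE[\wt(\bZ)]=\sum_{i=0}^{N-1}\bE[Z_i]=ND$.

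First I would make precise the claim that $P_{\bX}$ is uniform: this is immediate because $\bU\mapsto\bU H_n$ is a bijection on $\{0,1\}^N$, so $P_{\bX}(\bx)=P_{\bU}(\bx H_n^{-1})=2^{-N}$. Next I would observe that, by definition of the Standard Model's decoding map and the channel, conditioned on $\bX=\bx$ the output $\bY$ has law $\prod_i W(y_i\mid x_i)$, i.e. $\bY=\bx\oplus\bZ$ with $\bZ$ having i.i.d. $\text{Ber}(D)$ entries independent of $\bx$ (this uses \eqref{equ:channelsymmetry} extended to vectors). Then I would write
\begin{align*}
\bE_P[\disto(\bY,\bU H_n)]
&=\bE_P[\wt(\bY\oplus\bX)]
=\bE[\wt(\bZ)]
=\sum_{i=0}^{N-1}\bP(Z_i=1)
=ND.
\end{align*}

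There is essentially no obstacle here; the only thing to be careful about is bookkeeping — making sure that the distortion really is the Hamming weight of the noise and that the expectation $\bE_P$ is taken with respect to the correct joint law $P_{\bU,\bY}$ rather than $Q_{\bU,\bY}$. The point is precisely that under $P$ (unlike under $Q$) \emph{all} coordinates of $\bU$ are fair coin flips, which is what makes $\bX$ uniform and the computation collapse to $ND$; the frozen coordinates only create a discrepancy under $Q$, and that discrepancy was already controlled in Lemma~\ref{lem:QversusP}.
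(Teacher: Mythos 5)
Your proposal is correct and uses essentially the same approach as the paper's own proof: both observe that under $P$ the pair $(\bX,\bY)$ with $\bX=\bU H_n$ is distributed as $N$ independent uses of the $\BSC(D)$ with uniform input, so the distortion is the expected Hamming weight of the channel noise, which equals $ND$. The paper phrases this by marginalizing out $\bU$ (using that $P_{\bU\mid\bX,\bY}$ is deterministic) and summing the per-coordinate distortion, while you phrase it via the noise vector $\bY\oplus\bX$; these are the same calculation in different clothes.
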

\begin{proof}
Let $\bX=\bU H_n$ and write
\begin{align*}
\bE_{P}&[\disto(\bY, \bU H_n)]  \\
& = \sum_{\bu, \by} P_{\bU,\bY} (\bu,\by) \;  \disto(\by,\bu H_n) \\
& =\sum_{\by, \bu, \bx} P_{\bU, \bX , \bY} (\bu, \bx , \by)
\; \disto(\by, \bu H_n) \\
& = \sum_{\by, \bu, \bx} P_{\bX ,\bY} (\bx,\by)  
\underbrace{P_{\bU \mid \bX, \bY} (\bu \mid \bx, \by)}_{\text{$\{0,
1\}$-valued}} \; \disto(\by,\bx) \\
& = \sum_{\by,\bx} P_{\bX , \bY} (\bx, \by) \; \disto(\by,\bx).
\end{align*}
Note that the unconditional distribution of $\bX$ as well as $\bY$ is the uniform one and that the 
channel between $\bX$ and $\bY$ is memoryless and identical for each component. Therefore, we can write 
this expectation as
\begin{align*}
\bE_{P}[\disto(\bY,\bU H_n)] 
& =N \sum_{x_0,y_0} P_{X_0,Y_0}(x_0 ,y_0)  \; \disto(y_0,x_0) \\ 
& \stackrel{(a)}{=} N \sum_{x_0 }P_{X_0}(x_0) \sum_{y_0} W(y_0 \mid x_0 ) \;
\disto(y_0,x_0) \\
& = N W (0 \mid 1 ) \stackrel{(b)}{=} N D.
\end{align*}
In the above equation, $(a)$ follows from the fact that $P_{Y\mid X}(y\mid x) = W(y\mid
x)$, and $(b)$ follows from our assumption that $W$ is a BSC$(D)$.
\end{proof}

This implies that if we use all the variables $\{U_i\}$ to
represent the source word, i.e., $F$ is empty, then the algorithm results in an average distortion
$D$. But the rate of such a code would be $1$. Fortunately, the last problem is
easily fixed. If we choose $F$ to consist of
those variables which are ``essentially random,'' then there is only a small distortion penalty
(namely, $|F|2 \delta_N$) to pay with respect to the previous case.  
But the rate has been decreased to $1-|F|/N$.

Lemma~\ref{lem:QversusP} shows that the guiding principle for choosing 
the set $F$ is to include the indices with small $\delta_N$ in
\eqref{equ:QversusP}. In the following lemma, we find a sufficient condition for
an index to satisfy \eqref{equ:QversusP}, which is easier to handle. 
\begin{lemma}[$Z^{(i)}$ Close to $1$ is Good]\label{lem:Zdelta}
If $Z^{(i)}  \geq 1- 2\delta_N^2$,
then 
\begin{align*}
\bE_P\left[\Big| \frac12 - P_{U_i\mid U_0^{i-1},\bY}(0\mid U_0^{i-1},\bY)\Big|\right] \leq
\delta_N.
\end{align*}  
\end{lemma}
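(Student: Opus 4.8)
The plan is to connect the Bhattacharyya parameter $Z^{(i)}$ of the channel $W^{(i)}$ to how close the posterior $P_{U_i\mid U_0^{i-1},\bY}(0\mid U_0^{i-1},\bY)$ is to $\frac12$. The intuition is that $Z^{(i)}$ close to $1$ means the channel $W^{(i)}$ is essentially useless, so knowing $(\bY, U_0^{i-1})$ tells us almost nothing about $U_i$, hence the posterior is almost uniform. First I would recall the standard fact (from the channel-coding literature, e.g.\ Ar\i kan's work) relating the Bhattacharyya parameter to an average over the output alphabet: writing $Z^{(i)} = \sum_{y_0^{N-1},u_0^{i-1}} \sqrt{W^{(i)}(y_0^{N-1},u_0^{i-1}\mid 0)\,W^{(i)}(y_0^{N-1},u_0^{i-1}\mid 1)}$, and recalling that under $P$ the input $U_i$ is uniform and $(\bY,U_0^{i-1})$ has marginal $\frac12\bigl(W^{(i)}(\cdot\mid 0)+W^{(i)}(\cdot\mid 1)\bigr)$, I would express the posterior at a generic output $(\by,u_0^{i-1})$ as $P_{U_i\mid U_0^{i-1},\bY}(0\mid u_0^{i-1},\by) = \frac{W^{(i)}(\by,u_0^{i-1}\mid 0)}{W^{(i)}(\by,u_0^{i-1}\mid 0)+W^{(i)}(\by,u_0^{i-1}\mid 1)}$.

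The key computation is then to bound $\bE_P\bigl[\bigl|\tfrac12 - P_{U_i\mid U_0^{i-1},\bY}(0\mid U_0^{i-1},\bY)\bigr|\bigr]$ in terms of $1-Z^{(i)}$. Setting $a = W^{(i)}(\by,u_0^{i-1}\mid 0)$ and $b = W^{(i)}(\by,u_0^{i-1}\mid 1)$, the quantity $\bigl|\tfrac12 - \tfrac{a}{a+b}\bigr| = \tfrac12\cdot\tfrac{|a-b|}{a+b}$, and averaging over $(\by,u_0^{i-1})$ against the marginal $\tfrac12(a+b)$ gives $\bE_P[\cdots] = \tfrac14\sum_{\by,u_0^{i-1}} |a-b|$. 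So the task reduces to showing $\sum |a-b| \le C\sqrt{1-Z^{(i)}}$ for a suitable constant (indeed $C = 2\sqrt 2$ works, giving the bound $\bE_P[\cdots] \le \tfrac{1}{\sqrt 2}\sqrt{1-Z^{(i)}}$, so $Z^{(i)}\ge 1-2\delta_N^2$ yields $\bE_P[\cdots]\le \delta_N$). This is where the main work lies, and I would do it via the elementary inequality $|a-b| = |\sqrt a - \sqrt b|(\sqrt a + \sqrt b) \le \sqrt{|\sqrt a - \sqrt b|^2}\cdot\sqrt{(\sqrt a+\sqrt b)^2} \le \sqrt{(a+b) - 2\sqrt{ab}}\cdot\sqrt{2(a+b)}$, and then applying Cauchy--Schwarz over the sum: $\sum|a-b| \le \sqrt{\sum\bigl((a+b)-2\sqrt{ab}\bigr)}\cdot\sqrt{\sum 2(a+b)} = \sqrt{2(1-Z^{(i)})}\cdot\sqrt{4} = 2\sqrt{2}\sqrt{1-Z^{(i)}}$, using that $\sum(a+b) = 2$ (each $W^{(i)}(\cdot\mid u_i)$ is a probability distribution over its output alphabet of size $2^{N-1}|\mathcal Y|^N$, summing to $1$) and $\sum\sqrt{ab} = Z^{(i)}$.

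The main obstacle is being careful with the normalizations: one must correctly identify the output marginal of the channel $W^{(i)}$ under $P$ (it is $\tfrac12$ times the sum of the two conditional laws, because $U_i$ is uniform), and one must remember that $W^{(i)}$ as defined in \eqref{eqn:transitionWi} is a genuine transition probability, so each $W^{(i)}(\cdot\mid u_i)$ sums to $1$ over the output space $\mathcal Y^N\times\{0,1\}^{i-1}$. Beyond that, everything is an application of Cauchy--Schwarz and the algebraic identity $\sum\bigl(\sqrt a-\sqrt b\bigr)^2 = \sum(a+b) - 2\sum\sqrt{ab} = 2 - 2Z^{(i)}$. Putting the pieces together gives $\bE_P[\cdots] = \tfrac14\sum|a-b| \le \tfrac14\cdot 2\sqrt{2}\sqrt{1-Z^{(i)}} = \tfrac{1}{\sqrt 2}\sqrt{1-Z^{(i)}} \le \tfrac{1}{\sqrt 2}\sqrt{2\delta_N^2} = \delta_N$, which is exactly the claim.
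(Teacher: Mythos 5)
Your argument is correct, gives the same final bound $\bE_P\bigl[\bigl|\tfrac12 - P_{U_i\mid U_0^{i-1},\bY}(0\mid\cdot)\bigr|\bigr] \le \tfrac{1}{\sqrt 2}\sqrt{1-Z^{(i)}}$, and rests on the same two ingredients as the paper's, but it is organized differently. The paper first establishes the clean identity $\bE_P\bigl[\sqrt{P_{U_i\mid\cdot}(0\mid\cdot)P_{U_i\mid\cdot}(1\mid\cdot)}\bigr] = \tfrac12 Z^{(i)}$, then passes from the Bhattacharyya quantity to $\bE_P\bigl[(\tfrac12 - P(0\mid\cdot))^2\bigr]$ via the algebraic identity $\tfrac14 - p(1-p) = (\tfrac12 - p)^2$ together with the bound $\tfrac12 + \sqrt{p(1-p)} \le 1$, and finally applies Cauchy--Schwarz in the form $\bE[|X|] \le \sqrt{\bE[X^2]}$ on the output probability space. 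You instead collapse the expectation into the counting-measure sum $\tfrac14\sum|a-b|$ over transition probabilities, factor $|a-b| = |\sqrt a - \sqrt b|(\sqrt a + \sqrt b)$, and apply Cauchy--Schwarz on the counting sum, using $\sum\sqrt{ab} = Z^{(i)}$ and $\sum(a+b)=2$ directly. What you gain is brevity and a somewhat more transparent use of the total-variation/Bhattacharyya relation; what the paper's route buys is a reusable intermediate fact ($\bE_P[\sqrt{P(0\mid\cdot)P(1\mid\cdot)}]=\tfrac12 Z^{(i)}$) phrased at the posterior level, which slots neatly into the surrounding development. Both are sound; the difference is the choice of inner-product space on which Cauchy--Schwarz is applied, not the underlying idea.
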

\begin{proof}
\begin{align*}
&\bE_P\left[\sqrt{P_{U_i\mid U_0^{i-1},\bY}(0\mid U_0^{i-1},\bY)P_{U_i\mid
U_0^{i-1},\bY}(1\mid U_0^{i-1}, \bY)}\right]\\
&=\sum_{u_0^{i-1},\by} P_{U_0^{i-1},\bY}(u_0^{i-1},\by)\\
&\phantom{===}\sqrt{P_{U_i\mid U_0^{i-1},\bY}(0\mid
u_0^{i-1},\by)P_{U_i\mid U_0^{i-1},\bY}(1\mid
u_0^{i-1}, \by)}\\
& = 
\sum_{u_0^{i-1},\by}\sqrt{P_{U_0^{i-1},U_i,\bY}(u_0^{i-1},0,\by)P_{U_0^{i-1},U_i,\bY}(u_0^{i-1},1,\by)}
\\
& =\sum_{u_0^{i-1},\by}\sqrt{\sum_{u_{i+1}^{N-1}}P_{\bU,\bY}((u_0^{i-1},0,u_{i+1}^{N-1}),\by)}\\
&\phantom{========}\sqrt{ \sum_{u_{i+1}^{N-1}}
P_{\bU,\bY}((u_0^{i-1},1,u_{i+1}^{N-1}),\by)}\\
& \stackrel{(a)}{=} \frac{1}{2^N}\sum_{u_0^{i-1},\by} \sqrt{\sum_{u_{i+1}^{N-1}} P_{\bY\mid\bU}(\by \mid
u_0^{i-1},0,u_{i+1}^{N-1})}\\
&\phantom{========}\sqrt{\sum_{ u_{i+1}^{N-1}} P_{\bY\mid\bU}(\by \mid u_0^{i-1},1,u_{i+1}^{N-1})}\\
&= \frac{1}{2}{Z^{(i)}}.
\end{align*}
The equality $(a)$ follows from the fact that $P_{\bU}(\bu) = \frac{1}{2^N}$ for all 
$\bu \in\{0,1\}^N$. 

Assume now that $Z^{(i)} \geq 1- 2\delta_N^2$.
Then
\begin{align*}
&\bE_P\left[\frac12 - \sqrt{P_{U_i\mid U_0^{i-1},\bY }(0\mid
U_0^{i-1},\bY)P_{U_i\mid U_0^{i-1},\bY }(1\mid U_0^{i-1},\bY)}\right]\\
& \phantom{===========================}\leq \delta_N^2.
\end{align*}
Multiplying and dividing the term inside the expectation with 
\begin{align*}
\frac12 +
\sqrt{P_{U_i\mid U_0^{i-1},\bY }(0\mid u_0^{i-1},\by)P_{U_i\mid U_0^{i-1},\bY }(1\mid u_0^{i-1},\by)},
\end{align*} 
and upper bounding this term in the denominator with $1$, we get 
\begin{align*}
\bE_P\left[\frac14 - P_{U_i\mid U_0^{i-1},\bY}(0\mid
U_0^{i-1},\bY)P_{U_i\mid U_0^{i-1},\bY}(1\mid
U_0^{i-1},\bY)\right].
\end{align*}
Now, using the equality $\frac14 - p\bar{p} = (\frac12-p)^2$, we get
\begin{align*}
\bE_P\left[\Big(\frac12 - P_{U_i\mid U_0^{i-1},\bY}(0\mid U_0^{i-1},\bY)\Big)^2\right] \leq
\delta_N^2.
\end{align*}
The result now follows by applying the Cauchy-Schwartz inequality.
\end{proof}

We are now ready to prove Theorem~\ref{the:main}. In order to show that there
exists a polar code which achieves the rate-distortion tradeoff, we show that
the size of the set $F$ can be made arbitrarily close to $N h_2(D)$ while keeping
the penalty term $|F| 2 \delta_N$ arbitrarily small.

{\em Proof of Theorem~\ref{the:main}:}

Let $\beta<\frac12$ be a constant and let $\delta_N = \frac{1}{{2N}}2^{-N^\beta}$. 
Consider a polar code with frozen set $F_N$,
\begin{align*}
F_N = \{i\in \{0,\dots,N-1\}: Z^{(i)} \geq 1-2\delta_N^2\}.
\end{align*}
For $N$ sufficiently large there exists a $\beta' < \frac12$ such that
$2\delta_N^2 > 2^{-N^{\beta'}}$. Theorem~\ref{thm:rateZbar} and equation \eqref{eqn:Z-count}
imply that 
\begin{align}\label{equ:sizeF}
\lim_{N=2^n, n\to \infty} \frac{|F_N|}{N} = h_2(D).
\end{align}
For any $\epsilon >0$ this implies
that for $N$ sufficiently large there exists a set $F_{N}$ such that
\begin{align*}
\frac{|F_N|}{N} \geq h_2(D) - \epsilon. 
\end{align*}
In other words
\begin{align*} 
R_N = 1-\frac{|F_N|}{N} \leq R(D)+\epsilon.
\end{align*}
Finally, from Lemma~\ref{lem:QversusP} we know that
\begin{align}\label{equ:average}
D_N(F_N) \leq D + 2|F_N|\delta_N \leq D + O(2^{-(N^\beta)})
\end{align} 
for any $0< \beta < \frac12$.

Recall that $D_N(F_N)$ is the average of the distortion over all choices of $\tilde{u}_F$.
Since the average distortion fulfills \eqref{equ:average} it follows that 
there must be at least
one choice of $\tilde{u}_{F_N}$ for which
\begin{align*}
D_N(F_N,\tilde{u}_{F_N}) \leq D +O(2^{-(N^\beta)})
\end{align*}
for any $0< \beta < \frac12$.

The complexity of the encoding and decoding algorithms
are of the order $\Theta(N\log(N))$ as shown in \cite{Ari08}.
\qed

\section{Value of Frozen Bits Does Not Matter}\label{sec:gauge}
In the previous sections we have considered $D_N(F)$, the average distortion
if we average over all choices of $\tilde{u}_F$. 
We will now show a stronger result, namely we will show that {\em all} choices
for $\tilde{u}_F$ lead to the same distortion, i.e., $D_N(F,\tilde{u}_F)$ is
independent of $\tilde{u}_F$. This implies that the components
belonging to the frozen set $F$ can be set to any value. A convenient choice is
to set them to $0$. In the
following let $F$ be a fixed set. The results here do not dependent on
the set $F$. 
\begin{lemma}[Gauge Transformation]\label{lem:gaugesymmetry}
Consider the Standard Model introduced in the previous section.
Let $\by,\by' \in\{0,1\}^N$ and let $u_0^{i-1} = {u'}_{0}^{i-1}
\oplus ((\by  \oplus \by') H_n^{-1})_0^{i-1}$. Then 
\begin{align*}
l_i(\by, u_0^{i-1}) = \left\{
\begin{array}{cc}
l_i(\by',{u'}_0^{i-1}), & \text{ if } ((\by \oplus \by') H_n^{-1})_i = 0,\\
1/l_i(\by',{u'}_0^{i-1}), & \text{ if } ((\by \oplus \by') H_n^{-1})_i =1.
\end{array}\right.
\end{align*}
\end{lemma}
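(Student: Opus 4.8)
The plan is to reduce everything to the channel symmetry \eqref{equ:channelsymmetry}, which in vector form reads $P_{\bY\mid\bX}(\by\mid\bx)=P_{\bY\mid\bX}(\by\oplus\bz\mid\bx\oplus\bz)$ for every $\bz\in\{0,1\}^N$, together with the linear relation $\bX=\bU H_n$. Write $\bz=\by\oplus\by'$ and $\bar{a}=\bz H_n^{-1}$, so that $\by=\by'\oplus\bar{a}H_n$ and the quantity appearing in the statement is just $\bar{a}$, i.e.\ $((\by\oplus\by')H_n^{-1})_0^{i-1}=a_0^{i-1}$ and $((\by\oplus\by')H_n^{-1})_i=a_i$. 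The hypothesis then says $u_0^{i-1}={u'}_0^{i-1}\oplus a_0^{i-1}$.

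First I would unfold the definition \eqref{eqn:transitionWi} of $W^{(i)}$, writing
$W^{(i)}(\by,u_0^{i-1}\mid u_i)=\frac{1}{2^{N-1}}\sum_{u_{i+1}^{N-1}}P_{\bY\mid\bU}(\by\mid\bu)$ with $\bu=(u_0^{i-1},u_i,u_{i+1}^{N-1})$, and then use $P_{\bY\mid\bU}(\by\mid\bu)=P_{\bY\mid\bX}(\by\mid\bu H_n)$. Substituting $\by=\by'\oplus\bar{a}H_n$ and applying the vector channel symmetry with $\bz=\bar{a}H_n$ gives
$P_{\bY\mid\bX}(\by\mid\bu H_n)=P_{\bY\mid\bX}(\by'\mid(\bu\oplus\bar{a})H_n)=P_{\bY\mid\bU}(\by'\mid\bu\oplus\bar{a})$.
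The second step is to absorb the shift $\bar{a}$ into the summation variable: as $u_{i+1}^{N-1}$ ranges over $\{0,1\}^{N-1-i}$, so does $u_{i+1}^{N-1}\oplus a_{i+1}^{N-1}$, so the sum is unchanged and collapses back into a $W^{(i)}$ at the arguments $(\by',\,u_0^{i-1}\oplus a_0^{i-1},\,u_i\oplus a_i)$. Using the hypothesis $u_0^{i-1}\oplus a_0^{i-1}={u'}_0^{i-1}$, this yields the clean identity
$W^{(i)}(\by,u_0^{i-1}\mid u_i)=W^{(i)}(\by',{u'}_0^{i-1}\mid u_i\oplus a_i)$ for both $u_i=0$ and $u_i=1$.

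Finally I would form the likelihood ratio $l_i(\by,u_0^{i-1})=W^{(i)}(\by,u_0^{i-1}\mid 0)/W^{(i)}(\by,u_0^{i-1}\mid 1)$ and split on the value of $a_i=((\by\oplus\by')H_n^{-1})_i$: if $a_i=0$ the numerator and denominator map to those of $l_i(\by',{u'}_0^{i-1})$ and the ratio is preserved; if $a_i=1$ the roles of $W^{(i)}(\cdot\mid 0)$ and $W^{(i)}(\cdot\mid 1)$ are swapped, giving the reciprocal $1/l_i(\by',{u'}_0^{i-1})$. This is exactly the claimed case distinction. The only real care needed is the GF(2) bookkeeping — tracking that the first $i$ coordinates of the shift $\bar{a}$ are precisely what the hypothesis cancels, and that the $i$-th coordinate is what triggers the inversion — so I expect the index/linear-algebra accounting to be the main (though routine) obstacle; the analytic content is entirely carried by the symmetry \eqref{equ:channelsymmetry} and the change of summation variable.
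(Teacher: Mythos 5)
Your proposal is correct and matches the paper's proof essentially step for step: both unfold $W^{(i)}$ as a sum over $u_{i+1}^{N-1}$, invoke the channel symmetry \eqref{equ:channelsymmetry} in vector form to trade $\by$ for $\by'$ at the cost of shifting $\bu$ by $(\by\oplus\by')H_n^{-1}$, absorb the first $i$ coordinates of the shift via the hypothesis on $u_0^{i-1}$, re-index the sum over the last $N-1-i$ coordinates, and then read off the likelihood ratio with the case split on $((\by\oplus\by')H_n^{-1})_i$. The only cosmetic difference is that you make the change of summation variable explicit, whereas the paper performs it silently.
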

\begin{proof}
\begin{align*}
& l_i(\by, u_0^{i-1})  \\
&= \frac{W^{(i)}(\by, u_0^{i-1}\mid 0 )}{W^{(i)}(\by, u_0^{i-1}\mid
1)} \\
&=\frac{\sum_{u_{i+1}^{N-1}} P(\by\mid
u_0^{i-1},0,u_{i+1}^{N-1})}{\sum_{u_{i+1}^{N-1}} P(\by\mid
u_0^{i-1},1,u_{i+1}^{N-1})} \\
& \stackrel{(\ref{equ:channelsymmetry})}{=} \frac{\sum_{u_{i+1}^{N-1}}
P(\bar{y}'\mid (u_0^{i-1},0,u_{i+1}^{N-1})\oplus(\by\oplus\by')H_n^{-1})}{\sum_{u_{i+1}^{N-1}}
P(\bar{y}'\mid (u_0^{i-1},1,u_{i+1}^{N-1})\oplus(\by\oplus\by')H_n^{-1})}\\
&= \frac{\sum_{u_{i+1}^{N-1}}
P(\by'\mid ({u'}_0^{i-1},0\oplus ((\by\oplus\by') H_n^{-1})_i,u_{i+1}^{N-1})}{\sum_{u_{i+1}^{N-1}}
P(\by'\mid ({u'}_0^{i-1},1\oplus ((\by \oplus \by')H_n^{-1})_i,u_{i+1}^{N-1})}\\
&=  \frac{W^{(i)}(\by', {u'}_0^{i-1}\mid 0\oplus ((\by\oplus \by') H_n^{-1})_i)}{W^{(i)}(\by', {u'}_0^{i-1}\mid
1\oplus ((\by \oplus \by')H_n^{-1})_i)}.
\end{align*}
The claim follows by considering the two possible values of $((\by \oplus
\by')H_n^{-1})_i$.
\end{proof}
Recall that the decision process involves randomized rounding on the basis of $l_i$.
Consider at first two tuples $(\by,u_0^{i-1})$ and $(\by',{u'}_0^{i-1})$ 
so that their associated $l_i$ values are equal;  we have seen in the previous
lemma that many such tuples exist. In this case, if both tuples
have access to the same source of randomness, we can couple the two instances
so that they make the same decision on $U_i$. An equivalent statement is true in the
case when the two tuples have the same reliability $|\log(l_i(\by,u_0^{i-1}))|$
but different signs. In this case there is a simple coupling that ensures
that if for the first tuple the decision is lets say $U_i=0$ then for the second tuple
it is $U_i=1$ and vice versa. Hence, if in the sequel we compare two instances of ``compatible" tuples
which have access to the same source of randomness, then we assume exactly this coupling.

\begin{lemma}[Symmetry and Distortion]\label{lem:outputsymmetry}
Consider the Standard model introduced in the previous section. Let $\bar{y}, \bar{y}'
\in \{0,1\}^N$, $F \subseteq \{0, \dots, N-1\}$,  and $\tilde{u}_F, \tilde{u}'_F \in \{0, 1\}^{|F|}$.
If $\tilde{u}_F = \tilde{u}'_F \oplus ((\by \oplus \by')H_n^{-1})_F$, then
under the coupling through a common source of randomness 
$f^{\tilde{u}_F}(\by) =
f^{\tilde{u}'_F}(\by') \oplus ((\by \oplus \by')H_n^{-1})_{F^c}$.
\end{lemma}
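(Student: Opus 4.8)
## Proof Plan for Lemma~\ref{lem:outputsymmetry}

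The plan is to prove the claim by induction on the bit index $i$, running the two encoding processes for $(\by, \tilde{u}_F)$ and $(\by', \tilde{u}'_F)$ in lockstep under the assumed coupling. The induction hypothesis at step $i$ will be that the partial decisions $u_0^{i-1}$ produced on input $(\by, \tilde{u}_F)$ and the partial decisions ${u'}_0^{i-1}$ produced on input $(\by', \tilde{u}'_F)$ satisfy the relation $u_0^{i-1} = {u'}_0^{i-1} \oplus ((\by \oplus \by')H_n^{-1})_0^{i-1}$. Write $\bz = (\by \oplus \by') H_n^{-1}$ for brevity throughout. The base case $i = 0$ is vacuous. For the inductive step, I would split into the two cases $i \in F$ and $i \in F^c$.

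First, if $i \in F$: by definition of the Standard Model encoding, $u_i = \tilde{u}_i$ and ${u'}_i = \tilde{u}'_i$. The hypothesis $\tilde{u}_F = \tilde{u}'_F \oplus \bz_F$ gives exactly $u_i = {u'}_i \oplus \bz_i$, which extends the induction hypothesis to step $i+1$. Second, if $i \in F^c$: here I invoke Lemma~\ref{lem:gaugesymmetry}. Since $u_0^{i-1} = {u'}_0^{i-1} \oplus \bz_0^{i-1}$ by the induction hypothesis, Lemma~\ref{lem:gaugesymmetry} tells us that either $l_i(\by, u_0^{i-1}) = l_i(\by', {u'}_0^{i-1})$ (when $\bz_i = 0$) or $l_i(\by, u_0^{i-1}) = 1/l_i(\by', {u'}_0^{i-1})$ (when $\bz_i = 1$). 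In the first case the coupling (common randomness, equal likelihoods) forces $u_i = {u'}_i$, so $u_i = {u'}_i \oplus \bz_i$; in the second case the coupling for equal reliability but opposite sign forces $u_i = {u'}_i \oplus 1 = {u'}_i \oplus \bz_i$. Either way the induction hypothesis propagates to $i+1$. One small point to check carefully is that the coupling described in the paragraph preceding the lemma applies verbatim here — it is stated for "compatible" tuples with equal $|\log l_i|$, which is precisely the situation Lemma~\ref{lem:gaugesymmetry} delivers, so this is fine.

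Carrying the induction through all $N$ steps yields $\bu = \bu' \oplus \bz$, i.e., $u_j = {u'}_j \oplus ((\by \oplus \by')H_n^{-1})_j$ for all $j$. Restricting to $F^c$ gives $\bu_{F^c} = \bu'_{F^c} \oplus ((\by \oplus \by')H_n^{-1})_{F^c}$, and since by definition $f^{\tilde{u}_F}(\by) = \bu_{F^c}$ and $f^{\tilde{u}'_F}(\by') = \bu'_{F^c}$, this is exactly the asserted identity.

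I do not expect any serious obstacle here; the argument is a routine coupling induction. The one place that needs care is making the coupling precise enough that "same source of randomness" genuinely forces the matched/anti-matched decision in each case — i.e., realizing the randomized rounding at step $i$ via a single uniform random variable $\Theta_i \in [0,1]$ shared by both runs, deciding $u_i = 0$ iff $\Theta_i \le l_i/(1+l_i)$, and checking that under $l_i \mapsto 1/l_i$ one has $l_i/(1+l_i) \mapsto 1/(1+l_i)$, so that the same $\Theta_i$ flips the outcome exactly as required. This bookkeeping is the substance of the proof but is entirely mechanical.
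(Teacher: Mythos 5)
Your proposal is correct and follows essentially the same route as the paper: induction on the bit index, splitting into $i \in F$ (handled by the hypothesis on $\tilde{u}_F$) and $i \in F^c$ (handled by Lemma~\ref{lem:gaugesymmetry} together with the coupling). The extra detail you give about realizing the coupling via a shared uniform threshold $\Theta_i$ and checking that $l_i \mapsto 1/l_i$ flips the decision is exactly the ``common source of randomness'' coupling the paper sets up in the paragraph preceding the lemma.
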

\begin{proof}
Let $\bu,\bu'$ be the two $N$ dimensional vectors
generated within the Standard Model.  We use induction.
Fix $0 \leq i \leq N-1$. We assume that for $j < i$, 
$u_j = u'_j \oplus ((\by \oplus \by')H_n^{-1})_j$. This is in particular correct if $i=0$,
which serves as our anchor.

By Lemma \ref{lem:gaugesymmetry} we conclude that under our coupling the 
respective decisions are related as $u_i = u'_i \oplus ((\by \oplus \by')H_n^{-1})_i$ if $i \in F^c$. 
On the other hand, if $i \in F$, then the claim is true by assumption. 
\end{proof}

Let $\bv\in\{0,1\}^{|F|}$ and let $A(\bv) \subset \{0,1\}^N$ denote the coset
\begin{align*} 
A(\bv) = \{\by: (\by H_n^{-1})_F = \bv\}.
\end{align*}
The set of source words $\{0,1\}^N$ can be partitioned as 
\begin{align*}
\{0,1\}^N = \cup_{\bv \in \{0,1\}^{|F|}} A(\bv).
\end{align*}
Note that all the cosets $A(\bv)$ have equal size.

The main result of this section is the following lemma. The lemma implies that the
distortion of SM$(F,\tilde{u}_F)$ is independent of $\tilde{u}_F$.
\begin{lemma}[Independence of Average Distortion w.r.t. $\tilde{u}_F$]\label{lem:symmetryofdist}
Fix $F\subseteq\{0,\dots,N-1\}$. The average distortion $D_N(F,\tilde{u}_F)$ of
the model SM$(F,\tilde{u}_F)$ is independent of the choice of $\tilde{u}_F\in
\{0,1\}^{|F|}$.
\end{lemma}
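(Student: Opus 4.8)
The plan is to show that each coset $A(\bv)$ contributes an equal amount to the distortion of $\text{SM}(F,\tilde{u}_F)$ regardless of $\tilde{u}_F$, and moreover that as $\by$ ranges over a fixed coset the map $\tilde{u}_F \mapsto \tilde{u}_F \oplus ((\by \oplus \by')H_n^{-1})_F$ implements a bijection which, via Lemma~\ref{lem:outputsymmetry}, leaves the per-source-word distortion invariant. Concretely, I would start from the formula \eqref{eqn:distortionF} for $D_N(F,\tilde{u}_F)$, rewrite the probability of picking $\bu$ given $\by$ in the compact form
\begin{align*}
g^{\tilde{u}_F}(\by) \triangleq \prod_{i\in F^c} P(u_i\mid u_0^{i-1},\by),
\end{align*}
understood under the coupling to a common source of randomness, so that $\disto(\by, f^{\tilde{u}_F}(\by)H_n)$ is a well-defined random variable and $D_N(F,\tilde{u}_F) = \frac{1}{2^N}\sum_{\by} \bE[\disto(\by, f^{\tilde{u}_F}(\by)H_n)]$, the expectation being over the encoder randomness.

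Next I would fix a reference frozen vector, say $\tilde{u}'_F = \mathbf{0}$, and compare $D_N(F,\tilde{u}_F)$ with $D_N(F,\mathbf{0})$. The key step: for each $\by$, choose $\by'$ in the same coset as a suitable shift so that $((\by\oplus\by')H_n^{-1})_F = \tilde{u}_F$; this is possible precisely because $H_n$ is invertible over $\text{GF}(2)$, so the linear map $\bz \mapsto (\bz H_n^{-1})_F$ is surjective onto $\{0,1\}^{|F|}$, and within each coset $A(\bv)$ one can realize any prescribed value of $((\by\oplus\by')H_n^{-1})_F$ by an appropriate choice of $\by'$ — in fact $\by \mapsto \by'$ is a bijection of $\{0,1\}^N$ onto itself (translation by a fixed vector depending on $\bv$ and $\tilde u_F$). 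By Lemma~\ref{lem:outputsymmetry}, under the common-randomness coupling $f^{\tilde{u}_F}(\by) = f^{\mathbf{0}}(\by')\oplus((\by\oplus\by')H_n^{-1})_{F^c}$, hence the reconstruction points satisfy $f^{\tilde{u}_F}(\by)H_n = f^{\mathbf{0}}(\by')H_n \oplus (\by\oplus\by')$, since $\big(((\by\oplus\by')H_n^{-1})_{F^c}$ extended by zeros on $F\big)H_n = \by\oplus\by'$ once one also accounts for the frozen coordinates — here I would need the small observation that $((\by\oplus\by')H_n^{-1})_F = \tilde u_F$ is exactly the frozen part of $f^{\tilde u_F}$'s preimage, so the full vector $\bu$ satisfies $\bu H_n = \bu' H_n \oplus (\by\oplus\by')$. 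Therefore $\disto(\by, f^{\tilde{u}_F}(\by)H_n) = \disto(\by, f^{\mathbf{0}}(\by')H_n\oplus(\by\oplus\by')) = \disto(\by\oplus(\by\oplus\by'), f^{\mathbf 0}(\by')H_n) = \disto(\by', f^{\mathbf{0}}(\by')H_n)$, using translation invariance of Hamming distortion.

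Summing over $\by$ and using that $\by\mapsto\by'$ is a bijection of $\{0,1\}^N$ gives $D_N(F,\tilde{u}_F) = D_N(F,\mathbf{0})$, which is the claim. I expect the main obstacle to be bookkeeping rather than conceptual: one must be careful that the coupling of randomness used in Lemma~\ref{lem:outputsymmetry} is genuinely measure-preserving (so that taking expectations commutes with the relabeling $\by\mapsto\by'$), and one must verify the identity $\bu H_n = \bu' H_n \oplus (\by\oplus\by')$ cleanly — i.e. that the ``extend by the frozen part and multiply by $H_n$'' operation recovers exactly $\by\oplus\by'$, which is just the statement that $(\by\oplus\by')H_n^{-1}$ decomposed into its $F$ and $F^c$ parts and then re-multiplied by $H_n$ returns $\by\oplus\by'$. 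Both are routine once the coupling is set up, so the lemma follows.
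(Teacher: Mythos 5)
Your proposal is correct and follows essentially the same route as the paper: apply Lemma~\ref{lem:outputsymmetry} under the common-randomness coupling, observe that the full vectors satisfy $\bu H_n = \bu' H_n \oplus (\by\oplus\by')$, use translation-invariance of the Hamming distortion to conclude $\disto(\by,\bu H_n)=\disto(\by',\bu' H_n)$, and sum using the bijectivity of $\by\mapsto\by'$. The paper organizes the bijection via the coset decomposition $\{0,1\}^N=\cup_\bv A(\bv)$ and compares two arbitrary $\tilde u_F,\tilde u'_F$, whereas you compare against the fixed reference $\mathbf 0$ and invoke a single global translation $\by'=\by\oplus\bz$ with $(\bz H_n^{-1})_F=\tilde u_F$ (your parenthetical ``depending on $\bv$'' is unnecessary and mildly misleading, since one $\bz$ works for all cosets), but these are purely cosmetic differences in bookkeeping.
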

\begin{proof}
Let $\tilde{u}_F,\tilde{u}'_F \in \{0,1\}^{|F|}$ be two fixed vectors. 
We will now show that $D_N(F,\tilde{u}_F) = D_N(F,\tilde{u}'_F)$. 
Let $\by,\by'$ be two source words such that $\by \in A(\bv)$ and $\by' \in
A(\bv\oplus \tilde{u}_F\oplus \tilde{u}'_F)$, 
i.e., $\tilde{u}'_F = \tilde{u}_F \oplus ((\by\oplus \by')H_n^{-1})_F$. 
Lemma \ref{lem:outputsymmetry} implies
that 
\begin{align*}
f^{\tilde{u}'_F}(\by') = f^{\tilde{u}_F}(\by) \oplus ((\by
\oplus \by')H_n^{-1})_{F^c}.
\end{align*}

This implies that the reconstruction words are related as
%$\hat{f}^{\tilde{u}_F}(f^{\tilde{u}_F}(\by))$ and
%$\hat{f}^{\tilde{u}'_F}(f^{\tilde{u}'_F}(\by'))$ satisfy 
\begin{align*}
\hat{f}^{\tilde{u}_F}(f^{\tilde{u}_F}(\by)) =
\hat{f}^{\tilde{u}'_F}(f^{\tilde{u}'_F}(\by')) \oplus (\by\oplus \by')H_n^{-1}.
\end{align*}
Note that $\hat{f}^{\tilde{u}_F}({f}^{\tilde{u}_F}(\by))\oplus \by$ is the
quantization error. Therefore
\begin{align*}
\disto(\by, \hat{f}^{\tilde{u}_F}(f^{\tilde{u}_F}(\by))) = \disto(\by',
\hat{f}^{\tilde{u}_F}(f^{\tilde{u}'_F}(\by'))),
\end{align*}
which further implies 
\begin{align*}
\sum_{\by\in A(\bv)}\disto(\by, \hat{f}^{\tilde{u}_F}(f^{\tilde{u}_F}(\by))) = \sum_{\by\in
A(\bv\oplus\tilde{u}_F\oplus\tilde{u}'_F)}\disto(\by, \hat{f}^{\tilde{u}'_F}(f^{\tilde{u}'_F}(\by))).
\end{align*}
Hence, the average distortions satisfy
\begin{align*}
\sum_{\by} &\frac{1}{2^N}\disto(\by,\hat{f}^{\tilde{u}_F}(f^{\tilde{u}_F}(\by)
)) \\
= & \sum_{\bv \in \{0,1\}^{|F|}} \frac{1}{2^N} \sum_{\by \in
A(\bv)}\disto(\by,\hat{f}^{\tilde{u}_F}(f^{\tilde{u}_F}(\by)))\\
= & \sum_{\bv \in \{0,1\}^{|F|}} \frac{1}{2^N} \sum_{\by \in
A(\bv\oplus\tilde{u}_F\oplus\tilde{u}'_F)}\disto(\by,\hat{f}^{\tilde{u}'_F}(f^{\tilde{u}'_F}(\by)))\\
= & \sum_{\bv \in \{0,1\}^{|F|}} \frac{1}{2^N} \sum_{\by \in
A(\bv)}\disto(\by,\hat{f}^{\tilde{u}'_F}(f^{\tilde{u}'_F}(\by)))\\
= & \sum_{\by} \frac{1}{2^N}\disto(\by,\hat{f}^{\tilde{u}'_F}(f^{\tilde{u}'_F}(\by))).
\end{align*}

As mentioned before, the functions $f^{\tilde{u}_F}$ and $f^{\tilde{u}'_F}$ are not
deterministic and the above equality is valid under the assumption of coupling
with a common source of randomness. Averaging over this common randomness, we
get $D_N(F,\tilde{u}_F) = D_N(F,\tilde{u}'_F)$.
\end{proof}

%In the previous lemma, to show that the average distortion of the coset
%$A(\bv)$ under SM$(F,\tilde{u}_F)$ is equal to the average distortion of the coset
%$A(\bv\oplus \tilde{u}_F\oplus \tilde{u}'_F)$ under SM$(F,\tilde{u}'_F)$, we
%proceed by first showing that the quantization noise is the same.
Let $\quant^{\tilde{u}_F}$ denote the empirical distribution of the quantization
noise, i.e.,
\begin{align*}
\quant^{\tilde{u}_F}(\bar{x}) = \bE[\indicator{\bY \oplus
\hat{f}^{\tilde{u}_F}(f^{\tilde{u}_F}(\bY)) =\bar{x}}],
\end{align*}
where the expectation is over the randomness involved in the source and randomized rounding.
Continuing with the reasoning of the previous lemma, we can indeed show that the
distribution $\quant^{\tilde{u}_F}$ is independent of $\tilde{u}_F$. Combining
this with Lemma~\ref{lem:totalvarbnd}, we can bound the distance
between $\quant^{\tilde{u}_F}$ and an i.i.d. Ber$(D)$ noise. This
will be useful in settings which involve both channel and source coding, like the
Wyner-Ziv problem, where it is necessary to show that the quantization noise is
close to a Bernoulli random variable.

\begin{lemma}[Distribution of the Quantization Error]\label{lem:quantdist}
Let the frozen set $F$ be 
\begin{align*}
F = \{i : Z^{(i)} \geq 1-2\delta_N^2\}.
\end{align*}
Then for $\tilde{u}_F$ fixed, 
\begin{align*}
\sum_{\bar{x}}|\quant^{\tilde{u}_F}(\bar{x}) - \prod_{i}W(x_i\mid0)| \leq
2|F|\delta_N.
\end{align*}
\end{lemma}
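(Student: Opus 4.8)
The plan is to reduce the statement to Lemma~\ref{lem:totalvarbnd} and Lemma~\ref{lem:Zdelta} by exhibiting both $\quant^{\tilde{u}_F}$ and the i.i.d.\ law $\prod_i W(x_i\mid 0)$ as the images of the two joint measures $Q_{\bU,\bY}$ and $P_{\bU,\bY}$ under one and the same deterministic map, namely $\Psi:(\bu,\by)\mapsto \by\oplus\bu H_n$.

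First I would observe that, for a given source word $\by$ and a given realization of the rounding randomness, the quantization error is exactly $\by\oplus\bu H_n=\Psi(\bu,\by)$, where $\bu$ is the internal vector produced by SM$(F,\tilde{u}_F)$ (so $\bu_F=\tilde{u}_F$ and $\bu_{F^c}=f^{\tilde{u}_F}(\by)$). Hence $\quant^{\tilde{u}_F}$ is the law of $\Psi(\bU,\bY)$ when $(\bU,\bY)$ is drawn from the joint distribution induced by SM$(F,\tilde{u}_F)$. Averaging that joint distribution over $\tilde{u}_F$ chosen uniformly collapses the constraint $\bu_F=\tilde{u}_F$ to a factor $2^{-|F|}$ and reproduces exactly $Q_{\bU,\bY}$ from \eqref{eqn:distQ}; combined with the (already argued) invariance of $\quant^{\tilde{u}_F}$ with respect to $\tilde{u}_F$ — which follows from the coupling-through-common-randomness argument of Lemma~\ref{lem:symmetryofdist} — this gives that $\quant^{\tilde{u}_F}$ equals the law of $\Psi(\bU,\bY)$ under $Q$, for every fixed $\tilde{u}_F$.

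Next I would identify $\prod_i W(x_i\mid 0)$ with the law of $\Psi(\bU,\bY)$ under $P$: since $\bX=\bU H_n$ and, under $P$, $\bY=\bX\oplus\bZ$ with $\bZ$ i.i.d.\ Ber$(D)$ independent of $\bX$, we have $\Psi(\bU,\bY)=\bY\oplus\bU H_n=\bZ$, whose law is $\prod_i W(x_i\mid 0)$ because $W(\cdot\mid 0)$ is Ber$(D)$ for the BSC$(D)$. Because pushing two distributions through a common deterministic map can only decrease total variation distance (triangle inequality, summing over the fibers of $\Psi$),
\begin{align*}
\sum_{\bar{x}}\Bigl|\quant^{\tilde{u}_F}(\bar{x})-\prod_i W(x_i\mid 0)\Bigr| \;\leq\; \sum_{\bu,\by}\bigl|Q(\bu,\by)-P(\bu,\by)\bigr|.
\end{align*}
Lemma~\ref{lem:totalvarbnd} then bounds the right-hand side by $2\sum_{i\in F}\bE_P[\,|\tfrac12-P_{U_i\mid U_0^{i-1},\bY}(0\mid U_0^{i-1},\bY)|\,]$, and since $F=\{i:Z^{(i)}\geq 1-2\delta_N^2\}$, Lemma~\ref{lem:Zdelta} bounds each summand by $\delta_N$, yielding the claimed $2|F|\delta_N$.

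I expect the one genuinely delicate point to be the first step — justifying that $\quant^{\tilde{u}_F}$, which is defined with $\tilde{u}_F$ held fixed, coincides with the image of the symmetric measure $Q$. This is not a mere rearrangement: it relies on the invariance of $\quant^{\tilde{u}_F}$ under changes of $\tilde{u}_F$, which in turn needs Lemma~\ref{lem:outputsymmetry} and the common-randomness coupling exactly as in the proof of Lemma~\ref{lem:symmetryofdist}. Once that is in place, the remainder is the short chain of inequalities above together with two lemmas already established.
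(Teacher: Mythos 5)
Your proof is correct and proceeds from the same key ingredients as the paper (Lemma~\ref{lem:outputsymmetry}, Lemma~\ref{lem:totalvarbnd}, Lemma~\ref{lem:Zdelta}), but the final reduction is organized a bit differently. The paper exploits the coset decomposition $\{0,1\}^N = \cup_{\bv} A(\bv)$ to show that for a fixed $\tilde{u}_F$ the error distribution equals that of the all-zero source word with random frozen bits, thereby establishing an \emph{equality} $\sum_{\bar{x}}|\quant^{\tilde{u}_F}(\bar{x})-\prod_i W(x_i\mid 0)| = \sum_{\bu}|Q(\bu\mid\bar{0})-P_{\bU\mid\bY}(\bu\mid\bar{0})|$ (using that $\bu\mapsto\bu H_n$ is a bijection) and then bounds the conditional total variation at $\by=\bar{0}$. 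That last step tacitly uses that the per-$\by$ bound in the proof of Lemma~\ref{lem:totalvarbnd} is independent of $\by$ (another consequence of the gauge symmetry), so that Lemma~\ref{lem:Zdelta} applies to the conditional expectation. You instead average $\quant^{\tilde{u}_F}$ over $\tilde{u}_F$ (which, by the invariance you correctly cite, leaves it unchanged) to realize it as the pushforward of the \emph{joint} law $Q$ under $\Psi(\bu,\by)=\by\oplus\bu H_n$, identify $\prod_i W(\cdot\mid 0)$ as the pushforward of $P$ under the same map, and invoke the data-processing inequality for total variation. This routes directly through the statement of Lemma~\ref{lem:totalvarbnd} rather than its per-$\by$ internals, sidestepping the conditional-vs-unconditional subtlety. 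Both routes are sound; yours is marginally cleaner in that respect, while the paper's produces an exact equality before applying the bound.
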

\begin{proof}
Recall that $P_{\bX\mid\bY}(\bx\mid \by)=\prod_i W(x_i \mid y_i)$. 
Let $\bv \in \{0,1\}^{|F|}$ be a fixed vector. Consider a vector $\by \in
A(\bv)$ and set $\by'=\bar{0}$. Lemma \ref{lem:outputsymmetry} implies that
$f^{\tilde{u}_F}(\by) = f^{\tilde{u}_F \oplus \bv}(\bar{0}) \oplus (\by
H_n^{-1})_{F^c}$. Therefore, 
\begin{align*}
\by \oplus \hat{f}^{\tilde{u}_F}(f^{\tilde{u}_F}(\by)) =
\bar{0} \oplus \hat{f}^{\tilde{u}_F\oplus \bv}(f^{\tilde{u}_F\oplus\bv}(\bar{0})).
\end{align*}
This implies that all vectors belonging to $A(\bv)$ have the same quantization
error and this error is equal to the error incurred by the all-zero word
when the frozen bits are set to $\tilde{u}_F \oplus \bv$.

Moreover, the uniform distribution of the source induces a uniform distribution on the 
sets $A(\bv)$ where $\bv \in \{0, 1\}^{|F|}$. Therefore, the distribution of the
quantization error $\quant^{\tilde{u}_F}$ is the same as first picking the
coset uniformly at random, i.e., the bits $\tilde{u}_F$, and then generating the
error $\bar{x}$ according to $\bar{x} = \hat{f}^{\tilde{u}_F}(f^{\tilde{u}_F}(\bar{0}))$.
The distribution of the vector $\bu$ where $\bu = \bar{x}H_n^{-1}$ is indeed 
the distribution $Q$ defined in \eqref{eqn:distQ}.
Recall that in the distribution $P_{\bU,\bX,\bY}$, $\bU$ and $\bX$ are related
as $\bU = \bX H_n^{-1}$.
Therefore, the distribution induced by $W(\bx\mid\by)$ on $\bU$ is $P_{\bU\mid
\bY}$. Since multiplication with $H_n^{-1}$ is a one-to-one mapping, the total
variation distance can be bounded as 
\begin{align*}
\sum_{\bar{x}}|\quant^{\tilde{u}_F}(\bar{x})-\prod_i W(\bar{x}\mid\bar{0})| 
&= \sum_{\bu} |Q(\bu\mid \bar{0}) - P_{\bU\mid\bY}(\bu\mid\bar{0})| \\
& \stackrel{(a)}{\leq} 2|F|\delta_N.
\end{align*} 
The inequality $(a)$ follows from Lemma~\ref{lem:totalvarbnd} and
Lemma~\ref{lem:Zdelta}.
\end{proof}

\section{Beyond Source Coding}\label{sec:extensions}
Polar codes were originally defined in the context of channel coding in
\cite{Ari08}, where it was shown that they achieve the capacity of symmetric
B-DMCs. Now we have seen that polar codes achieve the rate-distortion tradeoff for lossy
compression of a BSS.
The natural question to ask next is whether these codes are suitable for
problems that involve both quantization as well as error correction.

Perhaps the two most prominent examples are the source coding problem
with side information (Wyner-Ziv problem \cite{WyZ76}) as well as the channel coding problem
with side information (Gelfand-Pinsker problem \cite{GeP83}). 
As discussed in \cite{ZSE02}, nested linear codes are required to tackle these problems. 
Polar codes are equipped with such a nested structure and are, hence, natural
candidates for these problems. We will show that, by taking advantage of this structure,
one can construct polar codes that are optimal in both settings 
(for the binary versions of these problems). 
Hence, polar codes provide the first provably optimal low-complexity solution.

In \cite{WaM07} the authors constructed MN codes which have the
required nested structure. They show that these codes achieve the optimum
performance under MAP decoding. How these codes perform
under low complexity message-passing algorithms is still an open problem. 
Trellis and turbo based codes were considered in
\cite{CPR03,PrR03,LXG03,YSXZ08} for
the Wyner-Ziv problem. It was empirically shown that they achieve good
performance with low complexity message-passing algorithms.
A similar combination was considered in \cite{CPR01,ErB05,SLSX05}
for the Gelfand-Pinsker problem. Again, empirical results close to the optimum
performance were obtained.

We end this section by applying polar codes to a multi-terminal setup.
One such scenario was considered in \cite{HKU09}, where
it was shown that polar codes are optimal for 
lossless compression of a correlated binary source (the Slepian-Wolf problem \cite{SlW73}).
The result follows by mapping the lossless source compression task to 
a channel coding problem. 

Here we consider another multi-terminal setup known as
the one helper problem \cite{Wyn73}. 
This problem involves channel coding at one terminal and source coding at the
other. We again show that polar codes achieve optimal performance under low-complexity
encoding and decoding algorithms.

\subsection{Binary Wyner-Ziv Problem}
Let $Y$ be a BSS and let the decoder have access to a random variable
$Y'$. This random variable is usually called the {\em side information}.
We assume that $Y'$ is correlated to $Y$ as $Y' = Y + Z$, where $Z$
is a Ber$(p)$ random variable. The task of the encoder is to compress
the source $Y$, call the result $X$, such that a decoder with access to $(Y', X)$ can reconstruct
the source to within a distortion $D$.

\begin{figure}[ht]
\centering
\input{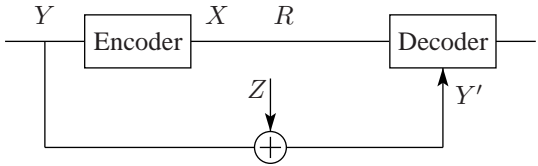}
\caption{The side information $Y'$ is available at the decoder. The decoder
wants to reconstruct the source $Y$ to within a distortion $D$ given $X$.}\label{fig:wynerziv}
\end{figure}

Wyner and Ziv \cite{WyZ76} have shown that the rate-distortion curve for this problem
is given by 
\begin{align*}
{l.c.e.} \Big\{(R_{\WZ}(D),D), (0,p)\Big\},
\end{align*}
where $R_{\WZ}(D) = h_2(D\ast p) - h_2(D)$, $l.c.e.$
denotes the {\em lower convex envelope}, and $D \ast p = D(1-p) + p(1-D)$.
Here we focus on achieving the rates of the form $R_{\WZ}(D)$. The remaining
rates can be achieved by appropriate time-sharing with the pair  $(0,p)$.

The proof is based on the following nested code construction. 
Let $\code_s$ denote the polar code defined by the frozen set $F_s$ with
the frozen bits $\bu_{F_s}$ set to $0$. 
Let $\code_c(\bv)$ denote the code defined by the frozen set
$F_c \supset F_s$ with the frozen bits $\bu_{F_s}$ set to $0$ and
$\bu_{F_c\backslash F_s} = \bv$. 
This implies that the code
$\code_s$ can be partitioned as $\code_s = \cup_{\bv}\code_c(\bv)$. 

The code $\code_s$ is designed to be a good source
code for distortion $D$ and for each $\bv$ the code $\code_c(\bv)$ is designed to be a good 
channel code for the BSC$(D\ast p)$. 

The encoder compresses the source vector $\bY$ to a vector $\bU_{F_s^c}$ through the map
$\bU_{F_s^c} = f^{\bar{0}}(\bY)$. The reconstruction vector $\bX$ is given by
$\bX = \hat{f}^{\bar{0}}(f^{\bar{0}}(\bar{Y}))$.
Since the code $\code_s$ is a good
source code, the quantization error $\bY \oplus \bX$ is close
to a Ber$(D)$ vector (see Lemma~\ref{lem:quantdist}). This implies that the vector 
$\bY'$ which is available at
the decoder is statistically equivalent to the output of a
BSC$(D\ast p)$ when the input is $\bX$.
The encoder transmits the vector $\bV = \bU_{F_c\backslash F_s}$ to the
decoder. This informs the decoder of the code $\code_c(\bV)$ which is used.
Since this code $\code_c(\bV)$ is designed for the BSC$(D\ast p)$, the decoder
can with high probability determine $\bX$ given $\bY'$. 
By construction, $\bX$ represents $\bY$ with distortion roughly $D$
as desired.

\begin{theorem}[Optimality for the Wyner-Ziv Problem]\label{the:mainWZ}
Let $Y$ be a BSS and $Y'$ be a Bernoulli random variable correlated to $Y$ as
$Y'=Y\oplus Z$, where $Z\sim \text{Ber}(p)$. 
Fix the {\em design} distortion $D$, $0 < D < \frac12$.
For any rate $R > h_2(D\ast p)-h_2(D)$ and any $0 < \beta < \frac12$,
there exists a sequence of nested polar codes of length $N$ with rates $R_N <R$  
so that under SC encoding using randomized rounding at the
encoder and SC decoding at the decoder, they
achieve expected distortion $D_N$ satisfying
\begin{align*}
D_N & \leq D + O(2^{-(N^\beta)}),
\end{align*}
and the block error probability satisfying
\begin{align*}
\Pb_N \leq O(2^{-(N^\beta)}).
\end{align*}
The encoding as well as decoding complexity of these codes is $\Theta(N \log(N))$.
\end{theorem}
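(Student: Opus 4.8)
The plan is to make the nested-code sketch given just above the theorem rigorous, reducing the distortion claim to Theorem~\ref{the:main} together with the gauge-symmetry results of Section~\ref{sec:gauge}, and the block-error claim to Ar\i kan's analysis of the SC decoder. Fix $0<\beta<\frac12$, choose $\beta_0\in(\beta,\frac12)$, and set $\delta_N=\frac{1}{2N}2^{-N^\beta}$ as in the proof of Theorem~\ref{the:main}. Let $\code_s$ be the source code used there, with frozen set $F_s=\{i:Z^{(i)}\ge 1-2\delta_N^2\}$, where $Z^{(i)}$ refers to $\BSC(D)$; let $\hat Z^{(i)}$ denote the Bhattacharyya parameters for $\BSC(D\ast p)$ and put $F_c=\{i:\hat Z^{(i)}\ge 2^{-N^{\beta_0}}\}$; for $\bv\in\{0,1\}^{|F_c\setminus F_s|}$ let $\code_c(\bv)$ be the code with frozen set $F_c$, the positions of $F_s$ frozen to $0$ and those of $F_c\setminus F_s$ set to $\bv$. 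The first thing to verify is $F_s\subseteq F_c$, so that the construction is genuinely nested and $\code_s=\cup_{\bv}\code_c(\bv)$. Since $0<p,D<\frac12$ we have $D\ast p>D$, so $\BSC(D\ast p)$ is stochastically degraded with respect to $\BSC(D)$; degradation is preserved by the polar transform and the Bhattacharyya parameter is monotone under degradation, hence $\hat Z^{(i)}\ge Z^{(i)}$ for all $i$, so $i\in F_s$ forces $\hat Z^{(i)}\ge 1-2\delta_N^2>2^{-N^{\beta_0}}$ for $N$ large, i.e.\ $i\in F_c$. The transmitted rate is $R_N=|F_c\setminus F_s|/N=|F_c|/N-|F_s|/N$; by \eqref{equ:sizeF}, $|F_s|/N\to h_2(D)$, and by \eqref{eqn:Zestimate2} applied to $\BSC(D\ast p)$ (legitimate since $\beta_0<\frac12$), $|F_c|/N\to h_2(D\ast p)$, so $R_N\to h_2(D\ast p)-h_2(D)=R_{\WZ}(D)$ and $R_N<R$ for $N$ large.

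Next I would pin down the statistics seen by the decoder. The encoder computes $\bU_{F_s^c}=f^{\bar{0}}(\bY)$, sets $\bU_{F_s}=\bar{0}$, forms $\bX=\bU H_n$, and transmits $\bV=\bU_{F_c\setminus F_s}$; the decoder observes $(\bV,\bY')$ with $\bY'=\bY\oplus\bZ$, $\bZ\sim\text{Ber}(p)^N$. Write $\bar{E}=\bY\oplus\bX$ for the quantization error, so $\bY'=\bX\oplus\bar{E}\oplus\bZ$. Running the gauge-symmetry argument of Lemma~\ref{lem:outputsymmetry} exactly as in the proof of Lemma~\ref{lem:quantdist} shows that $\bX$ is uniform over $\code_s$ and \emph{independent} of $\bar{E}$, and that $\bar{E}$ has law $\quant^{\bar{0}}$. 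Since $\bZ$ is independent of $\bY$ and of the encoder's randomness, it is independent of $(\bX,\bar{E})$; hence $\bY'=\bX\oplus\bar{E}'$ with $\bar{E}':=\bar{E}\oplus\bZ$ independent of $\bX$ and having law $\quant^{\bar{0}}\ast\text{Ber}(p)^N$. By Lemma~\ref{lem:quantdist} (with $F=F_s$), $\quant^{\bar{0}}$ is within total variation $2|F_s|\delta_N$ of $\text{Ber}(D)^N$, and convolving with the fixed kernel $\text{Ber}(p)^N$ does not increase total variation, so $\bar{E}'$ is within total variation $2|F_s|\delta_N$ of $\text{Ber}(D\ast p)^N=\text{Ber}(D)^N\ast\text{Ber}(p)^N$. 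As $\bV$ is a deterministic function of $\bX$, the law of $(\bX,\bV,\bY')$ in our scheme is therefore within total variation $2|F_s|\delta_N$ of the ``ideal'' law obtained by keeping the same $(\bX,\bV)$ but feeding $\bX$ through a genuine $\BSC(D\ast p)^{\otimes N}$.

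For the block error probability, observe that in the ideal law $\bX$ is uniform over $\code_s$, so $\bV$ is uniform and $\bX\mid\bV$ is uniform over the polar channel code $\code_c(\bV)$, whose information set is $F_c^c$ for $\BSC(D\ast p)$; since this channel is symmetric, the SC-decoder bound of \cite{Ari08} applies uniformly over frozen values and transmitted codewords and gives ideal error probability at most $\sum_{i\in F_c^c}\hat Z^{(i)}<N\,2^{-N^{\beta_0}}$. The decoder error event is measurable with respect to $(\bX,\bV,\bY')$, so the actual block error probability is at most $N\,2^{-N^{\beta_0}}+2|F_s|\delta_N=O(2^{-(N^\beta)})$, i.e.\ $\Pb_N=O(2^{-(N^\beta)})$. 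For the distortion: on $\{\hat{\bX}=\bX\}$ the per-symbol distortion is $\frac1N\disto(\bY,\bX)$, whose expectation is $D_N(F_s,\bar{0})=D_N(F_s)$ by Lemma~\ref{lem:symmetryofdist}, which is at most $D+2|F_s|\delta_N$ by the proof of Theorem~\ref{the:main}; on the complement the per-symbol distortion is at most $1$. Hence $D_N\le D+2|F_s|\delta_N+\Pb_N=D+O(2^{-(N^\beta)})$. Finally, the encoder performs one SC encoding pass and the decoder one SC decoding pass, each of complexity $\Theta(N\log N)$.

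The step I expect to be the genuine obstacle is the nesting $F_s\subseteq F_c$ with matching rates: everything else is bookkeeping around results already proved, but this step is what forces a single polar transform to serve simultaneously as a good source code for $\BSC(D)$ and a good channel code for $\BSC(D\ast p)$, and it relies on the monotonicity $\hat Z^{(i)}\ge Z^{(i)}$, i.e.\ on the preservation of channel degradation under the polar transform. A secondary point needing care is the exact independence of the reconstruction $\bX$ from the quantization error $\bar{E}$ — without it, the mere marginal closeness of $\bar{E}$ to $\text{Ber}(D)^N$ would not control the channel-decoding error — but this independence is precisely what the gauge-symmetry argument underlying Lemma~\ref{lem:quantdist} supplies.
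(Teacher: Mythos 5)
Your proof is correct, and it follows the paper's nested-code construction closely; the place where you genuinely diverge is the block-error analysis, and it is worth seeing how the two routes differ. The paper does \emph{not} establish (or use) independence of $\bX$ and $\bar E$. Instead it invokes the channel-coding analogue of Lemma~\ref{lem:symmetryofdist}: for a symmetric channel the SC-decoder error event is a deterministic function of the additive noise alone, independent of the transmitted codeword and of the frozen-bit value. So the paper writes $\Pb_N = \bE[\indicator{\bar E\oplus\bZ\in\mathcal E}]$, couples $\bar E$ directly to an i.i.d.\ $\text{Ber}(D)$ vector $\bar B$ via the optimal coupling, and concludes $\Pb_N \le \Pr(\bar B\oplus\bZ\in\mathcal E) + \Pr(\bar E\ne\bar B)$, with both terms $O(2^{-N^\beta})$. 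Your route instead shows the joint law of $(\bX,\bV,\bY')$ is close in total variation to an ideal law in which $\bX$ goes through a genuine $\BSC(D\ast p)^{\otimes N}$, and transfers the error event across. This is also valid, but it forces you to establish that $\bX$ is uniform over $\code_s$ and independent of $\bar E$, which is a strictly stronger statement than what Lemma~\ref{lem:quantdist} records (Lemma~\ref{lem:quantdist} only gives the marginal of $\bar E$). Your instinct that this independence follows from the same gauge-symmetry mechanism is correct: Lemma~\ref{lem:outputsymmetry} shows $\bar E$ depends only on the coset $\bW=(\bY H_n^{-1})_{F_s}$ and the encoder randomness, while $\bY\mid\bW$ is uniform over the coset $A(\bW)$, so $\bX=\bY\oplus\bar E$ is, conditionally on $(\bW,\bar E)$, uniform over $A(\bW)\oplus\bar E = A(\bar 0)=\code_s$ independently of $(\bW,\bar E)$. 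But since this is not literally any lemma in the paper, it would need to be spelled out rather than cited. The paper's coupling-on-the-noise argument avoids this extra step entirely, which is why it is slightly more economical.

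Two smaller points. First, your treatment of the end-to-end distortion is actually \emph{more} complete than the paper's: the theorem's $D_N$ should be $\frac1N\bE[\disto(\bY,\hat\bX)]$, whereas the paper's displayed bound $D_N\le D+2|F_s|\delta_N$ controls only $\frac1N\bE[\disto(\bY,\bX)]$ and tacitly absorbs the decoding-error contribution into the $O(\cdot)$; your explicit split $\frac1N\bE[\disto(\bY,\hat\bX)]\le \frac1N\bE[\disto(\bY,\bX)]+\Pb_N$ is the right way to say it. Second, your identification of the nesting $F_s\subseteq F_c$ via monotonicity of the Bhattacharyya parameters under degradation, and degradation being preserved by the polar transform, is precisely the structural point the paper also relies on (stated there as ``degradation of $\BSC(D\ast p)$ with respect to $\BSC(D)$ implies $F_s\subset F_c$''), so you have correctly located the load-bearing step.
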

\begin{proof}
Let $\epsilon > 0$ and $0 < \beta < \frac12$ be some constants. Let $Z^{(i)}(q)$
denote the $Z^{(i)}$s computed with $W$ set to BSC$(q)$. 
Let $\delta_N = \frac{1}{N}2^{-(N^{\beta})}$.
Let $F_s$ and $F_c$ denote the sets
\begin{align*}
F_s & = \{i: Z^{(i)}(D) \geq 1-\delta_N^2\},\\
F_c & = \{i: Z^{(i)}(D\ast p) \geq \delta_N\}.
\end{align*} 
Theorem~\ref{thm:rateZbar} implies that for $N$ sufficiently large 
\begin{align*}
\frac{|F_s|}{N}\geq h_2(D) - \frac\epsilon2.
\end{align*}
Similarly, Theorem~\ref{thm:rateZ} implies that for $N$ sufficiently large 
\begin{align*}
\frac{|F_c|}{N}\leq h_2(D\ast p) + \frac\epsilon2.
\end{align*}
The degradation of BSC$(D\ast p)$ with respect to BSC$(D)$ implies
that $F_s \subset F_c$. 

The bits $F_s$ are fixed to $0$. This is known both to the encoder and the decoder.
A source vector $\by$ is mapped to $\bu_{F_s^c} = f^{\bar{0}}(\by)$ as shown in the Standard Model.
Therefore the average distortion $D_N$ is bounded as 
\begin{align*}
D_N \leq D + 2|F_s|\delta_N \leq D + O(2^{-(N^\beta)}).
\end{align*}
The encoder transmits the vector $\bu_{F_c\backslash F_s}$ to the
decoder. The required rate is 
\begin{align*}
R_N = \frac{|F_c|-|F_s|}{N} \leq h_2(D\ast p) - h_2(p) + \epsilon.
\end{align*} 

It remains to show that at the decoder the block error probability incurred in
decoding $\bX$ is $O(2^{-(N^\beta)})$.

Let $\bar{E}$ denote the quantization error,  $\bar{E} = \bY\oplus\bX$.
The information available at the decoder ($\bY'$) can be expressed as,
\begin{align*}
\bY' = \bX \oplus \bar{E}\oplus \bZ.
\end{align*}

Consider the code $\code_c(\bv)$ for a given $\bv$ and transmission over the 
BSC$(D \ast p)$. Let $\mathcal{E}\subseteq\{0,1\}^N$ denote the set of noise vectors 
of the channel which result in a decoding error under SC decoding.
By the equivalent of Lemma~\ref{lem:symmetryofdist} for the channel coding case,
this set does not depend on $\bv$. 

The block error probability of our scheme can then be expressed as
\begin{align*}
\Pb_N &= \bE[\indicator{\bar{E} \oplus \bar{Z} \in \mathcal{E}}].
\end{align*}

The exact distribution of the quantization error is not known, but
Lemma~\ref{lem:quantdist} provides a bound on the total variation distance
between this distribution and an i.i.d. Ber$(D)$ distribution. Let $\bar{B}$
denote an i.i.d. Ber$(D)$ vector.
Let $P_{\bar{E}}$ and $P_{\bar{B}}$ denote the distribution of $\bar{E}$ and
$\bar{B}$ respectively. Then
\begin{align}\label{eqn:TVQP}
\sum_{\be}\vert P_{\bar{E}}(\be) - P_{\bar{B}}(\be)\vert  \leq 2|F_s|\delta_N
\leq O(2^{-(N^\beta)}).
\end{align}

Let $\Pr(\bar{B},\bar{E})$ denote the so-called {\em optimal coupling} between $\bar{E}$ and
$\bar{B}$. I.e., a joint distribution of $\bar{E}$ and $\bar{B}$
with marginals equal to $P_{\bar{E}}$ and $P_{\bar{B}}$, and satisfying 
\begin{align}\label{eqn:couplingTV}
\Pr(\bar{E} \neq \bar{B}) = \sum_{\be}\vert P_{\bar{E}}(\be) -
P_{\bar{B}}(\be)\vert.
\end{align}

It is known \cite{Aldbook} that such a coupling exists.
Let $\bar{E}$ and $\bar{B}$ be generated according to $\Pr(\cdot,\cdot)$. 
Then, the block error probability can be expanded as 
\begin{align*}
\Pb_N & = \bE[\indicator{\bar{E} \oplus \bar{Z} \in \mathcal{E}}
\indicator{\bar{E}=\bar{B}}]
+ \bE[\indicator{\bar{E} \oplus \bar{Z} \in
\mathcal{E}}\indicator{\bar{E}\neq \bar{B}}]\\
&\leq   \bE[\indicator{\bar{B} \oplus \bar{Z} \in \mathcal{E}}]+
\bE[\indicator{\bar{E}\neq \bar{B}}]
\end{align*}
The first term in the sum refers to the block error probability for the
BSC$(D\ast p)$, which can be bounded as 
\begin{align}\label{eqn:pBDp}
\bE[\indicator{\bar{B}\oplus \bZ \in \mathcal{E}}] \leq \sum_{i\in F_c} Z^{(i)}(D\ast p)
\leq O(2^{-(N^\beta)}).
\end{align}
Using \eqref{eqn:TVQP}, \eqref{eqn:couplingTV} and \eqref{eqn:pBDp} we get
\begin{align*}
P_N^B \leq O(2^{-(N^\beta)}).
\end{align*}

\end{proof}

\subsection{Binary Gelfand-Pinsker Problem}
Let $S$ denote a symmetric Bernoulli random variable. Consider a channel with
state $S$ given by
\begin{align*}
Y = X \oplus S \oplus Z,
\end{align*}
where $Z$ is a Ber$(p)$ random variable. The state $S$ is known to the encoder
a-causally and not known to the decoder. The output of the encoder is
constrained to satisfy $\bE[X]\leq D$, i.e., on average the fraction of 1s it
can transmit is bounded by $D$. This is similar to the power constraint in the
continuous case. The task of the encoder is to transmit a message $M$ to the
decoder with vanishing error probability under the above mentioned input
constraint.

\begin{figure}[ht]
\centering
\input{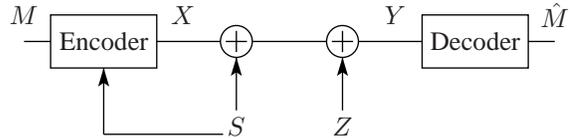}
\caption{The state $S$ is known to the encoder in advance. The weight of the
input $X$ is constrained to $\bE[X]\leq D$.}\label{fig:gelfandpinsker}
\end{figure}

In \cite{BCW03}, it was shown that the achievable rate, weight pairs for this channel are given by
\begin{align*}
u.c.e. \Big\{(R_{\GP}(D),D),(0,0)\Big\},
\end{align*}
where $R_{\GP}(D) = h_2(D)-h_2(p)$, and $u.c.e$ denotes the upper convex envelope.

Similar to the Wyner-Ziv problem, we need a nested code for this problem.
However, they differ in the sense that the role of the channel and source codes
are reversed.
 
Let $\code_c$ denote the polar code defined by the frozen set $F_c$ with
frozen bits $\bu_{F_c}$ set to $0$. Let $\code_s(\bv)$ denote the code defined by
the frozen set $F_s \supset F_c$, with the frozen bits $\bu_{F_c}$ set to $0$
and $\bu_{F_s\backslash F_c} = \bv$.
The code $\code_c$ is designed to be a good channel code for the BSC$(p)$
and the codes $\code_s(\bv)$ are designed to be good source codes for
distortion $D$. 
This implies that the code
$\code_c$ can be partitioned into $\code_s(\bv)$ for $\bv \in \{0,1\}^{F_s\backslash
F_c}$, i.e., $\code_c = \cup_{\bv}\code_s(\bv)$.

The frozen bits $\bV = \bU_{F_s\backslash F_c}$ are determined by the message $M$ that
is transmitted. The encoder
compresses the state vector $\bar{S}$ to a vector
$\bU_{F_s^c}$ through the map $\bU_{F_s^c} = f^{\bar{U}_{F_s}}(\bar{S})$. Let
$\bar{S}'$ be the reconstruction vector
$\bar{S}'=\hat{f}^{\bar{U}_{F_s}}(f^{\bar{U}_{F_s}}(\bar{S}))$.
The encoder sends the vector $\bX = \bar{S}\oplus\bar{S}'$ through the channel.
Since the codes $\code_s(\bV)$ are good source codes, the expected distortion
$\frac{1}{N}\bE[\disto(\bar{S} ,\bar{S}')]$ (hence the average weight of $\bX$) 
is close to $D$ (see Lemma~\ref{lem:symmetryofdist}). Since the code $\code_c$
is designed for the BSC$(p)$, the decoder will succeed in decoding the codeword
$\bar{S}\oplus\bX = \bar{S}'$ (hence the message $\bV$) with high probability. 

Here we focus on achieving the rates of the form $R_{\GP}(D)$. The
remaining rates can be achieved by appropriate time-sharing with the pair
$(0,0)$.

\begin{theorem}
[Optimality for the Gelfand-Pinsker Problem]\label{the:mainGP}
Let $S$ be a symmetric Bernoulli random variable. Fix $D$, $0 < D < \frac12$.
For any rate $R < h_2(D)-h_2(p)$ and any $0 < \beta < \frac12$,
there exists a sequence of polar codes of length $N$
so that under SC encoding using randomized rounding at the
encoder and SC decoding at the decoder, the achievable rate
satisfies
\begin{align*}
R_N > R,
\end{align*}
with the expected weight of $X$, $D_N$, satisfying
\begin{align*}
D_N & \leq D + O(2^{-(N^\beta)}). 
\end{align*}
and the block error probability satisfying
\begin{align*}
\Pb_N \leq O(2^{-(N^\beta)}).
\end{align*}
The encoding as well as decoding complexity of these codes is $\Theta(N \log(N))$.
\end{theorem}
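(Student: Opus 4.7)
The plan is to mirror the Wyner-Ziv proof (Theorem~\ref{the:mainWZ}) with the roles of source and channel codes interchanged. I will construct a nested pair $\code_s(\bv) \subset \code_c$ where $\code_c$ is a good channel code for BSC$(p)$ and, for each coset label $\bv$, $\code_s(\bv)$ is a good source code at distortion $D$. The message is carried in $\bV = \bU_{F_s\setminus F_c}$; the encoder quantizes the state $\bar{S}$ (with this $\bV$ fixed) to some $\bar{S}' \in \code_s(\bV) \subset \code_c$ and transmits $\bX = \bar{S} \oplus \bar{S}'$. At the decoder, $\bY = \bX \oplus \bar{S} \oplus \bZ = \bar{S}' \oplus \bZ$, which is precisely a transmission of a $\code_c$-codeword through BSC$(p)$, so the receiver recovers $\bar{S}'$ by SC-decoding $\code_c$ and reads off $\bV$ from $(\bar{S}' H_n^{-1})_{F_s\setminus F_c}$.

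Concretely, take $\delta_N = \tfrac{1}{2N}2^{-N^\beta}$ and set
\begin{align*}
F_s & = \{i : Z^{(i)}(D) \geq 1 - 2\delta_N^2\},\\
F_c & = \{i : Z^{(i)}(p) \geq 1 - 2\delta_N^2\}.
\end{align*}
Since $p < D$, BSC$(D)$ is degraded with respect to BSC$(p)$, so $Z^{(i)}(p) \leq Z^{(i)}(D)$ for every $i$, which gives the required nesting $F_c \subset F_s$. Applying the complementary form of polarization (Theorem~\ref{thm:rateZbar}, i.e. $|\{i: Z^{(i)} \geq 1-2\delta_N^2\}|/N \to 1-I(W)$) to each of the two channels yields $|F_s|/N \to h_2(D)$ and $|F_c|/N \to h_2(p)$, so the message rate $R_N = |F_s\setminus F_c|/N$ approaches $h_2(D) - h_2(p) = R_{\GP}(D)$ and exceeds any prescribed $R < R_{\GP}(D)$ for $N$ large.

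The weight/distortion analysis is inherited from the source-coding machinery. The encoder runs SC source encoding on $\bar{S}$ with $\bU_{F_c}=0$ and $\bU_{F_s\setminus F_c}=\bV$, using randomized rounding on $F_s^c$. By Lemma~\ref{lem:symmetryofdist} the resulting distortion $\bE[\disto(\bar{S},\bar{S}')]$ is independent of the frozen bits, so the Standard Model analysis carries over verbatim: Lemma~\ref{lem:QversusP} combined with Lemma~\ref{lem:Zdelta} delivers $\bE[\disto(\bar{S},\bar{S}')] \leq ND + 2|F_s|\delta_N = N(D + O(2^{-N^\beta}))$. Since $\bX = \bar{S}\oplus\bar{S}'$, this quantity is exactly $\bE[\wt(\bX)]$, and the per-symbol weight constraint $\bE[X] \leq D + O(2^{-N^\beta})$ follows. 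For the block error, Ar\i kan's standard bound together with the symmetry of BSC$(p)$ (which makes the error probability independent of the transmitted $\code_c$-codeword) gives $\Pb_N \leq \sum_{i \in F_c^c} Z^{(i)}(p)$.

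The main obstacle is driving this last sum down to $O(2^{-N^\beta})$. In the Wyner-Ziv proof this step is essentially free, because $F_c$ there is defined via the small-$Z$ threshold $\delta_N$, so that $\sum_{i \in F_c^c} Z^{(i)} \leq N\delta_N$ immediately. Here the nesting $F_c \subset F_s$ forces the large-$Z$ threshold $1-2\delta_N^2$ on $F_c$, and $F_c^c$ now contains every index with $Z^{(i)}(p) < 1-2\delta_N^2$. Splitting
\begin{align*}
\sum_{i \in F_c^c} Z^{(i)}(p) \leq N\delta_N + \bigl|\{i : \delta_N \leq Z^{(i)}(p) < 1-2\delta_N^2\}\bigr|,
\end{align*}
the first term is harmless, but the second ``middle'' term is only $o(N)$ from the bare polarization~\eqref{eqn:Zestimate2}. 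The crucial step beyond the source-coding and Wyner-Ziv arguments is to invoke the quantitative two-sided polarization (Theorem~\ref{thm:rateZbar} paired with~\eqref{eqn:Zestimate2}) to show that this middle regime has cardinality $O(N\cdot 2^{-N^{\beta'}})$ for some $\beta'>\beta$, thereby closing the block-error bound at the advertised rate. The encoding/decoding complexity remains $\Theta(N\log N)$, since every operation reduces to an SC pass on the polar transform.
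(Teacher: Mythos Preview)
Your overall architecture matches the paper's: nested codes with $\code_c$ good for BSC$(p)$, cosets $\code_s(\bv)$ good source codes at distortion $D$, message carried in $\bV=\bU_{F_s\setminus F_c}$, encoder quantizes $\bar S$, decoder sees a $\code_c$-codeword through BSC$(p)$. The distortion argument via Lemma~\ref{lem:symmetryofdist} is also exactly what the paper does.

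The gap is in the block-error step. You define $F_c$ with the \emph{large}-$Z$ threshold $1-2\delta_N^2$; this does make $F_c\subset F_s$ immediate from degradation, but it leaves $F_c^c$ containing every index with $Z^{(i)}(p)<1-2\delta_N^2$, in particular the entire ``middle''. You then claim that Theorem~\ref{thm:rateZbar} together with~\eqref{eqn:Zestimate2} gives $|\{i:\delta_N\le Z^{(i)}(p)<1-2\delta_N^2\}| = O(N\cdot 2^{-N^{\beta'}})$. They do not: those results say only that the \emph{fraction} of such indices tends to $0$; they say nothing whatsoever about the speed, and the speed of polarization is in fact only polynomial in $N$, not stretched-exponential. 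Worse, since each middle index can contribute a $Z^{(i)}$ close to $1$, your split $\sum_{i\in F_c^c}Z^{(i)}(p)\le N\delta_N + |\text{middle}|$ would actually require the middle to have cardinality $O(2^{-N^\beta})$, i.e., to be eventually empty. The argument therefore does not close.

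The paper sidesteps this by defining $F_c$ with the \emph{small}-$Z$ threshold, $F_c=\{i:Z^{(i)}(p)\ge\delta_N\}$, exactly as in channel coding and in the Wyner--Ziv proof. Then $F_c^c=\{i:Z^{(i)}(p)<\delta_N\}$ and the block-error bound is immediate: $\Pb_N\le\sum_{i\in F_c^c}Z^{(i)}(p)\le N\delta_N=O(2^{-N^\beta})$. The nesting $F_c\subset F_s$ is then asserted from the degradation of BSC$(D)$ with respect to BSC$(p)$. Your instinct that the two thresholds $\delta_N$ and $1-\delta_N^2$ are mismatched for a purely degradation-based argument is reasonable, but redefining $F_c$ as you did turns a subtle point into a fatal one.
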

\begin{proof}
Let $\epsilon > 0$ and $0 < \beta < \frac12$ be some constants. Let $Z^{(i)}(q)$
denote the $Z^{(i)}$s computed with $W$ set to BSC$(q)$. Let $\delta_N =
\frac{1}{N}2^{-(N^{\beta})}$.
Let $F_s$ and $F_c$ denote the sets
\begin{align}
F_s & = \{i: Z^{(i)}(D) \geq 1-\delta_N^2\},\label{eqn:frozenGPs}
 \\
F_c & = \{i: Z^{(i)}(p) \geq \delta_N\}.\label{eqn:frozenGPc}
\end{align} 
Theorem~\ref{thm:rateZbar} implies that for $N$ sufficiently large 
\begin{align*}
\frac{|F_s|}{N}\geq h_2(D) - \frac\epsilon2.
\end{align*}
Similarly, Theorem~\ref{thm:rateZ} implies that for $N$ sufficiently large 
\begin{align*}
\frac{|F_c|}{N}\leq h_2(p) + \frac\epsilon2.
\end{align*}
The degradation of BSC$(D)$ with respect to BSC$(p)$ implies
that $F_c \subset F_s$. 
The vector $\bu_{F_{s}\backslash F_c}$ is defined by the message that is
transmitted.
Therefore, the rate of transmission is 
\begin{align*}
\frac{|F_s| -|F_c|}{N} \geq h_2(D) - h_2(p) -\epsilon.
\end{align*}

The vector $\bar{S}$ is compressed using the source code with frozen set $F_s$.
The frozen vector $\bu_{F_s}$ is defined in two stages. The subvector
$\bu_{F_c}$ is fixed to $0$ and is known to both the transmitter and the
receiver.
The subvector $\bu_{F_s\backslash F_c}$ is defined by the message being transmitted. 

Let $\bar{S}$ be mapped to a reconstruction vector $\bar{S}'$.
Lemma~\ref{lem:symmetryofdist} implies that the average distortion 
of the Standard Model is independent of the value of the frozen bits. This implies 
\begin{align*}
\bE[\bar{S}\oplus \bar{S}'] \leq 
D+2|F_s|\delta_N \leq D+O(2^{-(N^\beta)}).
\end{align*}
Therefore, a transmitter which sends $\bX = \bar{S} \oplus \bar{S}'$ will on
average be using $D+O(2^{-(N^\beta)})$ fraction of $1$s.
The received vector is given by 
\begin{align*}
\bY = \bar{X} \oplus \bar{S} \oplus\bar{Z} =
\bar{S}'\oplus\bar{Z}. 
\end{align*}
The vector $\bar{S}'$ is a codeword of $\code_c$, the code designed for the
BSC$(p)$ (see \eqref{eqn:frozenGPc}).
Therefore, the block error probability of the SC decoder in decoding $\bar{S}'$
(and hence $\bar{V}$) is bounded as 
\begin{align*}
\Pb_N \leq \sum_{i\in F_c^c} Z^{(i)}(p) \leq O(2^{-(N^{\beta})}).
\end{align*}
\end{proof}

\subsection{Storage in Memory With Defects}
Let us briefly discuss another standard problem in the literature that fits within the
Gelfand-Pinsker framework but where the state is non-binary. 
Consider the problem of storing data on a computer
memory with defects and noise, explored in \cite{HeG83} and \cite{Tsy75}.
Each memory cell can be in three possible states, say $\{0,1,\ast\}$.
The state $S =0 \;(1)$ means that the value of the cell is stuck at $0\; (1)$ and
$S=\ast$ means that the value of the cell is flipped with probability $D$. Let
the probability distribution of $S$ be 
\begin{align*}
\Pr(S=0) = \Pr(S=1) = p/2,\;\;\; \Pr(S=\ast) = 1-p.
\end{align*}
The optimal storage capacity when the whole state realization is known in
advance only to the encoder is $(1-p)(1-h_2(D))$.

\begin{theorem}
[Optimality for the Storage Problem]\label{the:mainstorage}
For any rate $R < (1-p)(1-h_2(D))$ and any $0 < \beta < \frac12$,
there exists a sequence of polar codes of length $N$
so that under SC encoding using randomized rounding at the
encoder and SC decoding at the decoder, the achievable rate
satisfies
\begin{align*}
R_N > R,
\end{align*}
and the block error probability satisfying
\begin{align*}
\Pb_N \leq O(2^{-(N^\beta)}).
\end{align*}
The encoding as well as decoding complexity of these codes is $\Theta(N \log(N))$.
\end{theorem}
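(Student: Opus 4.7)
The plan is to adapt the Gelfand-Pinsker polar-code construction of Theorem~\ref{the:mainGP} to the ternary-state setting of the storage channel. Viewed as a Gelfand-Pinsker channel, the storage channel has $S \in \{0,1,*\}$ and $Y$ is a non-additive function of $(X, S)$: $Y = S$ on defects and $Y = X \oplus Z$ on good cells with $Z \sim \text{Ber}(D)$. The first step is to pick an auxiliary random variable $V$ together with a deterministic map $X = f(V, S)$ whose single-letter Gelfand-Pinsker rate $I(V;Y) - I(V;S)$ equals the target capacity $(1-p)(1-h_2(D))$ of \cite{HeG83,Tsy75}. Once this characterization is in hand, the proof mirrors Theorem~\ref{the:mainGP} almost verbatim.

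Concretely I would (a) compute the Bhattacharyya parameters $Z^{(i)}(V;Y)$ and $Z^{(i)}(V;S)$ of the two polarized channels induced by $V$; (b) define nested frozen sets $F_c = \{i : Z^{(i)}(V;Y) \geq \delta_N\}$ and $F_s = \{i : Z^{(i)}(V;S) \geq 1 - \delta_N^2\}$ in the spirit of \eqref{eqn:frozenGPs}--\eqref{eqn:frozenGPc}, and deduce $F_c \subset F_s$ from the statistical degradation of $V \to S$ with respect to $V \to Y$; (c) encode via SC with randomized rounding, fixing $\bU_{F_c}$ to $0$, placing the message on $\bU_{F_s \setminus F_c}$, and quantizing $\bar{S}$ into $\bU_{F_s^c}$ under the test channel $V \to S$; (d) transmit $\bX = f(\bV, \bar{S})$ componentwise; (e) decode via SC on $\bY$ under the effective channel $V \to Y$ to recover $\bV$ and hence the message. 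Rate and error analysis then follow the template of Theorem~\ref{the:mainGP}: the rate is $|F_s \setminus F_c|/N \to I(V;Y) - I(V;S) = (1-p)(1-h_2(D))$ by polarization, and the block error probability is bounded by $\sum_{i \in F_c^c} Z^{(i)}(V;Y) = O(2^{-N^\beta})$ via the refined rate-of-polarization theorem. The discrepancy between the induced distribution on $\bV$ and the i.i.d.\ one assumed in the channel-decoding step is handled as in the proof of Theorem~\ref{the:mainWZ}, combining Lemma~\ref{lem:quantdist} with an optimal coupling.

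The main obstacle is step (a): the capacity $(1-p)(1-h_2(D))$ is well established but no natural binary $V$ yields it in closed form. I would attempt a ternary $V \in \{0,1,e\}$ where $V = e$ signals that the cell is a defect (so that, after polarization, the decoder effectively treats that coordinate as an erasure) and $V$ is uniform on $\{0,1\}$ on good cells; this requires invoking $q$-ary polar constructions, for which the same polarization, rate, and error estimates carry over. An alternative route is to decompose $\bar{S}$ into its defect indicator $\bar{E} \in \{0,1\}^N$ and its stuck-value $\bar{T} \in \{0,1\}^N$ and build a two-layer binary polar construction: an outer Gelfand-Pinsker-style layer handling the BEC$(p)$-like defect pattern via the encoder's side information on $\bar{E}$, composed with an inner BSC$(D)$ polar code on the good-cell fraction. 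Verifying that a single SC pass on $\bY$ jointly inverts both layers without rate loss, and that the Bhattacharyya bookkeeping across layers still matches $(1-p)(1-h_2(D))$, is where I expect the bulk of the technical effort to lie.
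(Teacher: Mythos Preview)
Your Gelfand--Pinsker template---nested frozen sets, message on $F_s\setminus F_c$, SC quantization of the state, SC decoding from $\bY$, coupling via Lemma~\ref{lem:quantdist}---is exactly the paper's route. The gap is the ``main obstacle'' you flag: a binary auxiliary \emph{does} work, and no ternary polar code or two-layer construction is needed.

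Take the auxiliary to be $V=X$, the bit actually written into the cell, with uniform prior. The test channel $V\to S$ is then the binary symmetric erasure channel $\text{BSEC}(p,D)$ of Figure~\ref{fig:BEStest}: with probability $1-p$ output $*$, otherwise pass $V$ through $\BSC(D)$. A direct computation gives $I(V;S)=p(1-h_2(D))$, so the source-code frozen set $F_s=\{i:Z^{(i)}(p,D)\ge 1-\delta_N^2\}$ has $|F_s|/N\to 1-p(1-h_2(D))$ by Theorem~\ref{thm:rateZbar}. For the channel side, once $\bX$ is the SC reconstruction of $\bar S$ under this test channel, the stuck cells are read as their stuck value, which differs from $\bX$ by the quantization error (close to $\text{Ber}(D)$ by Lemma~\ref{lem:quantdist}), while the good cells flip with probability $D$; hence the effective channel $V\to Y$ is just $\BSC(D)$, and $F_c=\{i:Z^{(i)}(D)\ge\delta_N\}$ with $|F_c|/N\to h_2(D)$. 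The nesting $F_c\subseteq F_s$ is immediate because $\text{BSEC}(p,D)$ is degraded with respect to $\BSC(D)$ (erase a $\BSC(D)$ output with probability $1-p$). The rate is then
\[
\frac{|F_s|-|F_c|}{N}\;\to\;\bigl[1-p(1-h_2(D))\bigr]-h_2(D)=(1-p)(1-h_2(D)),
\]
and the block-error bound $\Pb_N\le\sum_{i\in F_c^c}Z^{(i)}(D)=O(2^{-N^\beta})$ follows exactly as in Theorem~\ref{the:mainGP}, after the optimal-coupling step you already outlined. So drop the ternary-$V$ and layered alternatives; the binary BSEC auxiliary closes the argument with no new machinery.
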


The problem can be framed as a Gelfand-Pinsker setup with state $S \in \{0,1,\ast\}$. 
As seen before, the nested construction for such a problem consists of a good source
code which partitions into cosets of a good channel code. We still need to
define what the corresponding source and coding problems are. 

{\em Source Code: } The source code is designed to compress the ternary source $S$ to
the binary alphabet $\{0,1\}$ with design distortion $D$. The distortion function
is $\disto(0,1)=1,\;\disto(\ast,1)=\disto(\ast,0) = 0,$. The test channel for
this problem is a binary symmetric erasure channel (BSEC) shown in Figure~\ref{fig:BEStest}. The
compression of this source is explained in Section~\ref{sec:discussion}. Let
$Z^{(i)}(p,D)$ denote the Bhattacharyya values of BSEC$(p,D)$ defined in
Figure~\ref{fig:BEStest}. The frozen set $F_s$ is defined as 
\begin{align*}
F_s & = \{i: Z^{(i)}(p,D) \geq 1-\delta_N^2\}.
\end{align*} 
The rate distortion function for this problem is given by $p(1-h_2(D))$. Therefore, for
sufficiently large $N$, $|F_s|/N$ can be made arbitrarily close to
$1-p(1-h_2(D))$. 

{\em Channel code:} The channel code is designed for BSC$(D)$. The frozen set $F_c$ is
defined as 
\begin{align*}
F_c & = \{i: Z^{(i)}(D) \geq \delta_N\}.
\end{align*} 
Therefore, for sufficiently large $N$, $|F_c|/N$ can be made arbitrarily close to
$h_2(D)$. Degradation of BSEC$(p,D)$ with respect to BSC$(D)$ implies $F_c \subseteq F_s$.

{\em Encoding:} The frozen bits $\bU_{F_c}$ is fixed to $\bar{0}$. The
vector $\bU_{F_s\backslash F_c}$ is defined by the message to be stored.
Therefore, the achievable rate is 
\begin{align*}
R_N=\frac{|F_s| - |F_c|}{N} \geq (1-p)(1-h_2(D)) -\epsilon 
\end{align*}
for any $\epsilon > 0$. Compress the source sequence using the function
$f^{\bU_{F_s}}(\bar{S})$ and store the reconstruction vector $\bX =
f^{\bU_{F_s}}(f^{\bU_{F_s}}(\bar{S}))$ in the memory.
As shown in the Wyner-Ziv setting, the quantization noise is close to Ber$(D)$ for the
stuck bits. Therefore, a fraction $D$ of the stuck bits differ from $\bX$.

{\em Decoding:} When the decoder reads the memory, the stuck bits are read
as it is and the remaining bits are flipped with probability $D$. This is
equivalent to seeing $\bX$ through a channel BSC$(D)$. Since the channel
code is defined for BSC$(D)$, the decoding will be successful with high
probability and the message $\bU_{F_s\backslash F_c}$ will be recovered.

\subsection{One Helper Problem}
Let $Y$ be a BSS and let $Y'$ be correlated to $Y$ as $Y'=Y\oplus Z$, where $Z$
is a Ber$(p)$ random variable. The encoder has access to $Y$ and the helper has access to
$Y'$. The aim of the decoder is to reconstruct $Y$ successfully. As the name suggests, the
role of the helper is to assist the decoder in recovering $Y$. 
This problem was considered by Wyner in \cite{Wyn73}.
\begin{figure}[ht]
\centering
\input{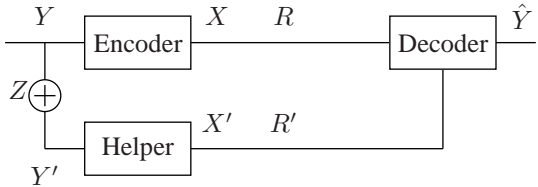}
\caption{The helper transmits quantized version of $Y'$. The decoder uses the
information from the helper to decode $Y$ reliably.}\label{fig:onehelper}
\end{figure}

Let the rates used by the encoder and the helper be $R$ and $R'$ respectively.
Wyner \cite{Wyn73} showed that the required rates $R,R'$ must
satisfy
\begin{align*}
R>h_2(D\ast p),\;\;\; R' > 1-h_2(D),
\end{align*}
for some $D\in[0,1/2]$.
\begin{theorem}
[Optimality for the One Helper Problem]\label{the:mainSCSI}
Let $Y$ be a BSS and $Y'$ be a Bernoulli random variable correlated to $Y$
as $Y' = Y\oplus Z$, where $Z\sim \text{Ber}(p)$. Fix the {\em design}
distortion $D$, $0 < D < \frac12$.
For any rate pair $R > h_2(D\ast p), R'> 1-h_2(D)$ and any $0 < \beta < \frac12$,
there exist sequences of polar codes of length $N$ with rates $R_{N} <R$ and
$R'_{N} < R'$
so that under syndrome computation at the encoder, 
SC encoding using randomized rounding at the
helper and SC decoding at the decoder, they
achieve the block error probability satisfying
\begin{align*}
\Pb_N \leq O(2^{-(N^\beta)}).
\end{align*}
The encoding as well as decoding complexity of these codes is $\Theta(N \log(N))$.
\end{theorem}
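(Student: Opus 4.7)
The plan is to combine the lossy source coding scheme of Theorem~\ref{the:main} at the helper with a Slepian--Wolf style polar syndrome encoding at the main encoder, and to analyse the joint decoder using the same coupling trick used in the Wyner--Ziv proof (Theorem~\ref{the:mainWZ}).

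First, at the helper, I would apply the Standard Model to the source $\bY'$ with frozen set
\begin{align*}
F_s = \{i : Z^{(i)}(D) \geq 1 - 2\delta_N^2\}, \qquad \delta_N = \tfrac{1}{N}2^{-N^\beta}.
\end{align*}
The helper transmits $\bU'_{F_s^c} = f^{\bar{0}}(\bY')$ and the decoder reconstructs $\bX' = \hat{f}^{\bar{0}}(f^{\bar{0}}(\bY'))$. By Theorem~\ref{thm:rateZbar} we get $R'_N = |F_s^c|/N \leq 1-h_2(D) + \epsilon$, and by Lemma~\ref{lem:quantdist} the quantization error $\bar{E} = \bY' \oplus \bX'$ has total variation distance $O(2^{-N^\beta})$ from an i.i.d.\ Ber$(D)$ vector $\bar{B}$. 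Crucially, since $\bY = \bY' \oplus \bZ$, we can write $\bY = \bX' \oplus \bar{E} \oplus \bZ$, so that from the decoder's viewpoint $\bX'$ is (approximately) the result of passing $\bY$ through a BSC$(D \ast p)$.

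Next, at the main encoder, I would perform a Slepian--Wolf style polar syndrome encoding of $\bY$ targeted at this effective channel. Take
\begin{align*}
F_c = \{i : Z^{(i)}(D \ast p) \geq \delta_N\},
\end{align*}
compute $\hat{\bU} = \bY H_n^{-1}$, and transmit $\hat{\bU}_{F_c}$ over the noiseless rate-$R$ link. By Theorem~\ref{thm:rateZ} this costs $R_N = |F_c|/N \leq h_2(D \ast p) + \epsilon$. The decoder, which now has $\bX'$ and $\hat{\bU}_{F_c}$, runs the standard polar SC decoder for channel $W = \mathrm{BSC}(D\ast p)$ on input $\bX'$, using $\hat{\bU}_{F_c}$ as the known ``frozen'' bits, and outputs $\hat{\bY} = \hat{\bU} H_n$.

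The error analysis proceeds exactly as in Theorem~\ref{the:mainWZ}. If $\bar{E}$ were truly i.i.d.\ Ber$(D)$, then $\bar{E} \oplus \bZ$ would be i.i.d.\ Ber$(D \ast p)$ and the conditional block error probability of SC decoding would be bounded by $\sum_{i \in F_c^c} Z^{(i)}(D\ast p) = O(2^{-N^\beta})$. To handle the fact that $\bar{E}$ is only close to i.i.d.\ Ber$(D)$, I would invoke the optimal coupling between $\bar{E}$ and an i.i.d.\ Ber$(D)$ vector $\bar{B}$ used in \eqref{eqn:couplingTV}, so that $\Pr(\bar{E} \neq \bar{B}) = O(2^{-N^\beta})$ by Lemma~\ref{lem:quantdist}, and split the error event accordingly:
\begin{align*}
\Pb_N \leq \Pr(\bar{B} \oplus \bZ \in \mathcal{E}) + \Pr(\bar{E} \neq \bar{B}) \leq O(2^{-N^\beta}).
\end{align*}
The complexity bound $\Theta(N\log N)$ is inherited from the helper's SC encoder (Theorem~\ref{the:main}) and the decoder's SC channel decoder (\cite{Ari08}); syndrome computation at the encoder is just one evaluation of $H_n^{-1}$ which has the same complexity. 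The main subtle point is that the quantization error $\bar{E}$ produced by the randomized-rounding encoder is \emph{not} independent of anything nice a priori, and in particular not independent of $\bZ$ as a formal process; the coupling argument is what lets us replace $\bar{E}$ by the ideal $\bar{B}$ in the channel-decoder error analysis without having to reason about the joint law of $\bar{E}$ and $\bZ$ directly.
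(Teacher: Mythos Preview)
Your proposal is correct and follows essentially the same route as the paper's sketch: quantize $\bY'$ at the helper with the Standard Model for design distortion $D$, syndrome-encode $\bY$ at the main encoder for BSC$(D\ast p)$, and reuse the Wyner--Ziv coupling argument (Lemma~\ref{lem:quantdist} plus optimal coupling) to control the effective noise $\bar{E}\oplus\bZ$. One small remark: your closing caveat about $\bar{E}$ and $\bZ$ is overly cautious, since $\bY'$ and $\bZ$ are in fact independent (uniform $\bY$ XORed with independent $\bZ$ makes $\bY'$ uniform and independent of $\bZ$), so $\bar{E}$, being a randomized function of $\bY'$, is genuinely independent of $\bZ$ and the coupling only needs to handle the gap between $P_{\bar{E}}$ and i.i.d.\ $\mathrm{Ber}(D)$.
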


For this problem, we require a good channel code at the encoder
and a good source code at the helper. We will explain the code
construction here. The rest of the proof is similar to the previous setups.

{\em Encoding: } The helper quantizes the vector $\bY'$ to $\bX'$ with a design
distortion $D$. This compression can be achieved with rates arbitrarily close to
$1-h_2(D)$.

%The encoder's job is to compress $\bY$ such that a decoder with access to
%$\bX'$ can recover $\bY$ completely. 
%This problem can be converted to a channel coding
%problem as follows. 
The encoder designs a code for the BSC$(D\ast p)$. Let $F$ denote the frozen
set.
The encoder computes the syndrome $\bU_F = (\bY H_n^{-1})_F$
and transmits it to the decoder. The rate involved in such an operation is $R =
|F|/N$. Since the fraction $|F|/N$ can be made arbitrarily close
to $h_2(D\ast p)$, the rate $R$ will approach $h_2(D\ast p)$.

{\em Decoding: }
The decoder first reconstructs the vector $\bX'$. 
The remaining task is to decode the codeword $\bY$ from the observation $\bX'$.
As shown in the Wyner-Ziv setting, the quantization noise $\bY\oplus \bX'$
is very ``close" to Ber$(D\ast p)$. 
Note that the decoder knows the syndrome $\bU_F = (\bY H_n^{-1})_F$, where the
frozen set $F$ is designed for the BSC$(D\ast p)$. Therefore, the task of the
decoder is to recover the codeword of a code designed for BSC$(D\ast p)$ when the
noise is close to Ber$(D\ast p)$. Hence the decoder will succeed with high probability.

\section{Complexity Versus Gap}
We have seen that polar codes under SC encoding achieve the
rate-distortion bound when the blocklength $N$ tends to infinity.
It is also well-known that the encoding as well as decoding complexity
grows like $\Theta(N \log(N))$. How does the complexity grow as a function
of the gap to the rate-distortion bound? This is a much more subtle
question.

To see what is involved in being able to answer this question, consider the Bhattacharyya constants
$Z^{(i)}$ defined in \eqref{eqn:channelZ}. 
Let $\tilde{Z}^{(i)}$ denote a re-ordering of these values in an increasing order, i.e., 
 $\tilde{Z}^{(i)} \leq \tilde{Z}^{(i+1)}$, $i=0, \dots, N-2$.
Define
\begin{align*}
m_N^{(i)} & = \sum_{j=0}^{i-1} \tilde{Z}^{(i)}, \\
M_N^{(i)} & = \sum_{j=N-i}^{N-1} \sqrt{2(1-\tilde{Z}^{(i)})}.
\end{align*}

For the binary erasure channel there is a simple recursion to compute
the $\{Z^{(i)}\}$ as shown in \cite{Ari08}.  For general channels
the computation of these constants is more involved but the basic
principle is the same.

For the channel coding problem we then get an upper bound on the block error
probability $\Pb_N$
as a function the rate $R$ of the form
\begin{align*}
(\Pb_N, R) = (m_N^{(i)}, \frac{i}{N}).
\end{align*} 
On the other hand, for the source coding problem,
we get an upper bound on the distortion $D_N$ as a function of the rate of the form
\begin{align*}
(D_N, R) = ( D+M_N^{(i)}, \frac{i}{N}).
\end{align*}
Now, if we knew the distribution of $Z^{(i)}$s it would allow us
to determine the rate-distortion performance achievable for this
coding scheme for any given length.  The complexity per bit is always
$\Theta(\log N)$.

Unfortunately, the computation of the quantities $m_N^{(i)}$ and
$M_N^{(i)}$ is likely to be a challenging problem. Therefore, we
ask a simpler question that we can answer with the estimates we
currently have about the $\{Z^{(i)}\}$.

Let $R = R(D)+ \delta$,  where $\delta >0$.  How does the complexity
per bit scale with respect to the gap between the actual (expected)
distortion $D_N$ and the design distortion $D$? Let us answer this
question for the various low-complexity schemes that have been
proposed to date.

{\em Trellis Codes:}
In \cite{ViO74} it was shown that, using trellis codes and Viterbi
decoding, the average distortion scales like $D+O(2^{-K E(R)})$,
where $E(R) > 0$ for $\delta >0$ and $K$ is the constraint length.
The complexity of the decoding algorithm is $\Theta(2^K N)$. Therefore,
the complexity per bit in terms of the
gap is given by $O(2^{(\log\frac1g)})$.

{\em Low Density Codes:}
In \cite{CiM05} it was shown that under optimum encoding the gap
is $O(\sqrt K 2^{-K\Delta})$, for some $\Delta > 0$, where $K$ is
the average degree of the parity check node. Assuming that using
BID we can achieve this distortion, the complexity is given by
$\Theta(2^K N)$. Therefore, the complexity per bit in terms of the gap
is given by $O(2^{(\log \frac1g)})$.

{\em Polar Codes:}
For polar codes, the complexity is $\Theta(N \log N)$ and the gap is
$O(2^{-(N^\beta)})$ for any $\beta < \frac12$. Therefore, the complexity per bit in terms of
the gap is $O(\frac1{\beta} \log\log\frac1g)$.
This is considerably lower than for the two previous schemes.

\section{Discussion and Future Work}\label{sec:discussion}
We have considered the lossy source coding problem for the BSS 
and the Hamming distortion. The reconstruction alphabet in this case 
is also binary and the test channel ``$W$'' is a BSC. 

Consider the slightly more general scenario of a $q$-ary source
with a binary {\em reconstruction} alphabet. Assume further
that the test channel, call it $W$, is such that the marginal induced by the
source distribution on the reconstruction alphabet is uniform. 

\begin{example}[Binary Erasure Source]
Let the source alphabet be $\{0,1,\ast\}$. Let $S$ denote the source variable
with distribution
\begin{align*}
\Pr(S=1) = \Pr(S=0) = p/2,\; \;\;\Pr(S=\ast) = 1-p.
\end{align*}
Let the distortion function be 
\begin{align}\label{eqn:dist3}
\disto(0,\ast) = \disto(1,\ast) = 0,\;\;\disto(0,1) = 1.
\end{align} For a design distortion $D$, the test channel 
$W:\{0,1\}\to \{0,1,\ast\}$ is shown in Figure~\ref{fig:BEStest}.
Note that the distribution induced on the
input of the channel is uniform.
\begin{figure}
\begin{center}
\begin{picture}(85,100)(0,-10)
\thicklines
\put(0,0){\circle*{4}}
\put(0,0){\line(1,0){80}}
\put(0,0){\line(1,1){80}}
\put(0,0){\line(2,1){80}}
\put(0,80){\circle*{4}}
\put(0,80){\line(1,0){80}}
\put(0,80){\line(2,-1){80}}
\put(0,80){\line(1,-1){80}}
\put(80,0){\circle*{4}}
\put(80,40){\circle*{4}}
\put(80,80){\circle*{4}}
\put(-7,0){\makebox(0,0)[c]{$1$}}
\put(-7,80){\makebox(0,0)[c]{$0$}}
\put(87,0){\makebox(0,0)[c]{$1$}}
\put(87,40){\makebox(0,0)[c]{$\ast$}}
\put(87,80){\makebox(0,0)[c]{$0$}}
\put(40,85){{\makebox(0,0)[b]{$p(1-D)$}}}
\put(40,-5){{\makebox(0,0)[t]{$p(1-D)$}}}
\put(40,62){\makebox(0,0)[b]{$\bar{p}$}}
\put(40,18){\makebox(0,0)[t]{$\bar{p}$}}
\put(22,52){\makebox(0,0)[r]{$pD$}}
\put(22,28){\makebox(0,0)[r]{$pD$}}
\end{picture}
\caption{\label{fig:BEStest}The test channel for the binary erasure source.}
\end{center}
\end{figure}
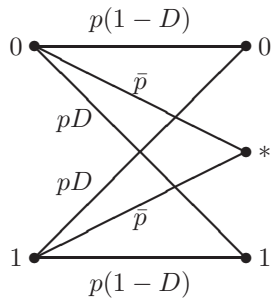
\end{example}

For this setup one can obtain results mirroring
Theorem~\ref{the:main}. More precisely, one can show that the optimum
rate-distortion tradeoff can again be achieved by polar codes
together with SC encoding and randomized-rounding. The proof is analogous
to the proof of Theorem~\ref{the:main}. The only change in the proof consists of replacing
the BSC$(D)$ with the appropriate test channel $W$.
This is the source coding equivalent of Ar\i kan's channel coding result \cite{Ari08}, where
it was shown that polar codes achieve the symmetric mutual information $I(W)$
for any B-DMC.

A further important generalization is the compression of {\em non-symmetric} sources.
Let us explain the involved issues by means of the channel coding problem.
Consider an asymmetric B-DMC, e.g., the $Z$-channel. Due to the asymmetry, the capacity-achieving input distribution
is in general not the uniform one. To be concrete, assume that it is $(p(0)=\frac13, p(1)=\frac23)$.
This causes problems for any scheme which employs linear codes, since linear codes
induce uniform marginals. To get around this problem, ``augment" the channel to a $q$-ary input channel by
duplicating some of the inputs.  For our running example, Figure~\ref{fig:Zchannel} shows
the ternary channel which results when duplicating the input ``$1$.''
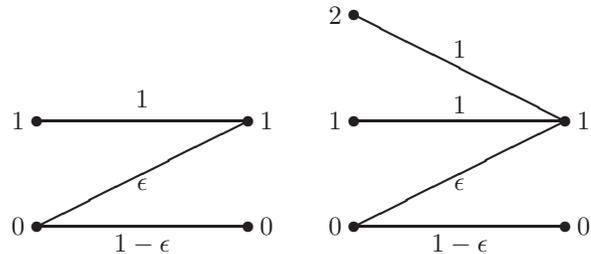
\begin{figure}[!h]
\begin{center}
\begin{picture}(200,90)
\thicklines
\put(0,-40){
\put(0,40){\circle*{4}}
\put(0,40){\line(1,0){80}}
\put(0,40){\line(2,1){80}}
\put(0,80){\circle*{4}}
\put(0,80){\line(1,0){80}}
\put(80,40){\circle*{4}}
\put(80,80){\circle*{4}}
\put(-7,40){\makebox(0,0)[c]{$0$}}
\put(-7,80){\makebox(0,0)[c]{$1$}}
\put(87,40){\makebox(0,0)[c]{$0$}}
\put(87,80){\makebox(0,0)[c]{$1$}}
\put(40,85){{\makebox(0,0)[b]{$1$}}}
\put(40,37){{\makebox(0,0)[t]{$1-\epsilon$}}}
\put(40,58){\makebox(0,0)[t]{$\epsilon$}}
}
\put(120,0){
\put(0,40){\circle*{4}}
\put(0,80){\circle*{4}}
\put(0,40){\line(1,0){80}}
\put(0,0){\circle*{4}}
\put(0,80){\line(2,-1){80}}
\put(0,0){\line(1,0){80}}
\put(0,0){\line(2,1){80}}
\put(80,40){\circle*{4}}
\put(80,0){\circle*{4}}
\put(-7,0){\makebox(0,0)[c]{$0$}}
\put(-7,40){\makebox(0,0)[c]{$1$}}
\put(-7,80){\makebox(0,0)[c]{$2$}}
\put(87,40){\makebox(0,0)[c]{$1$}}
\put(87,0){\makebox(0,0)[c]{$0$}}
\put(40,64){{\makebox(0,0)[b]{$1$}}}
\put(40,43){{\makebox(0,0)[b]{$1$}}}
\put(40,-3){{\makebox(0,0)[t]{$1-\epsilon$}}}
\put(40,18){\makebox(0,0)[t]{$\epsilon$}}
}
\end{picture}
\end{center}
\caption{\label{fig:Zchannel} The Z-channel and its corresponding augmented
channel with ternary input alphabet.}
\end{figure}
Note that the capacity-achieving input distribution for this ternary-input channel is the
uniform one. 
Assume that we can construct a ternary polar code which achieves the
symmetric mutual information of this new channel. (For binary-input channels it
was shown by Ar\i kan \cite{Ari08} that one can achieve the symmetric mutual information and 
there is good reason to believe that an equivalent result holds for $q$-ary input channels.)
Then this gives rise to a capacity-achieving 
coding scheme for the original binary $Z$-channel by
mapping the ternary set $\{0, 1, 2\}$ into the binary set $\{0, 1\}$ in the following way; 
$\{1, 2\} \mapsto 1$ and $0 \mapsto 0$.

More generally, by augmenting the input alphabet and constructing
a code for the extended alphabet, we can achieve rates
arbitrarily close to the capacity of a $q$-ary DMC, assuming
only that we know how to achieve the symmetric mutual information. 

A similar remark applies to the setting of source coding. 
By extending the reconstruction alphabet if necessary and
by using only test channels that induce a uniform distribution
on this extended alphabet one can achieve a rate-distortion performance
arbitrarily close to the Shannon bound, assuming only that 
for the uniform case we can get arbitrarily close.

The previous discussion shows that perhaps the most important generalization
is the construction of polar codes for both source and channel coding for the setting of $q$-ary alphabets.

In Section~\ref{sec:extensions} we have considered some scenarios beyond basic source coding.
E.g., we considered binary versions of the Wyner-Ziv problem as well as the
Gelfand-Pinsker problem. This list is by no means exhaustive.

One possible further generalization is to have source codes with a faster
convergence speed. In \cite{KSU09} it was shown that, by considering larger
matrices (instead of $G_2$), it is possible to obtain better exponents for the
block error probability of the channel coding problem. Such a generalization for source coding
would result in better exponents in the convergence of the average distortion to the
design distortion.

\section*{Acknowledgment}
We would like to thank Eren {\c S}a{\c s}o{\u g}lu and 
Emre Telatar for useful discussions during the development of this paper.
In particular, we would like to thank Emre for his help in proving Lemma~\ref{lem:lowerboundZ}.

\appendix
The proof of \eqref{eqn:Zestimate1} and \eqref{eqn:Zestimate2} is based on the
following approach. For any channel $W:\mathcal{X} \to \mathcal{Y}$ the channels
$W^{[i]} : \mathcal{X} \to \mathcal{Y}\times\mathcal{Y}\times U_0^{i-1}$ are
defined as follows. Let $W^{[0]}$ denote the channel law
\begin{align*}
W^{[0]}(y_0,y_1\mid u_0) = \frac12 \sum_{u_1} W(y_0\mid u_0\oplus u_1)
W(y_1\mid u_1),
\end{align*}
and let $W^{[1]}$ denote the channel law
\begin{align*}
W^{[1]}(y_0,y_1, u_0\mid u_1) =\frac12 W(y_0\mid u_0\oplus u_1)
W(y_1\mid u_1).
\end{align*}

Define a random variable $W_n$ through a tree process
$\{W_n; n\geq 0\}$ with
\begin{align*}
W_0 & = W, \\
W_{n+1} & = W_{n}^{[B_{n+1}]},
\end{align*}
where $\{B_n; n\geq 1\}$ is a sequence of i.i.d.\ random variables defined on a
probability space $(\Omega, \mathcal{F}, \mu )$, and where $B_n$ is a symmetric 
Bernoulli random variable. Defining $\mathcal{F}_0 =
\{\emptyset, \Omega\}$ and $\mathcal{F}_n = \sigma (B_1,\dotsc, B_n)$ for $n
\geq 1$, we augment the above process by the process
$\{Z_n; n\geq 0\} := \{Z(W_n); n\geq 0\}$. The relevance of this process is that 
$W_n \in \{W^{(i)}\}_{i=0}^{2^n-1}$ and moreover the symmetric distribution of
the random variables $B_i$ implies

\begin{align}
\Pr(Z_n\in (a,b)) & =\frac{\lvert\left\{i\in\{0,\dotsc,2^n-1\}: Z^{(i)} \in
(a, b) \right\} \rvert}{2^n}.\label{eqn:Z-count}
\end{align}
In \cite{Ari08} it was shown that 
\begin{align*}
\lim_{n\to \infty }\Pr(Z_n < {2^{-5n/4}}) =   I(W).
\end{align*}
which implies \eqref{eqn:Zestimate1}. 
In \cite{ArT08} the polynomial decay (in terms of $N=2^n$) was improved to exponential decay as
stated below.
\begin{theorem}[Rate of $Z_n$ Approaching $0$ \cite{ArT08}]\label{thm:rateZ}
Given a B-DMC $W$, and any $\beta < \frac{1}{2}$,
\begin{align*}
\lim_{n \to \infty} \Pr(Z_n \leq 2^{-2^{n\beta}}) = I(W).
\end{align*}
\end{theorem}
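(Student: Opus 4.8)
\emph{Proof proposal.} The plan is to bootstrap the claimed estimate from the basic polarization result of \cite{Ari08} --- namely that $Z_n$ converges almost surely to a $\{0,1\}$-valued limit $Z_\infty$ with $\Pr(Z_\infty = 0) = I(W)$ --- together with the two elementary recursions of \cite{Ari08}: $Z(W^{[1]}) = Z(W)^2$ and $Z(W^{[0]}) \le 2 Z(W)$. In other words, at each step of the tree process, if $B_{n+1}=1$ the Bhattacharyya parameter is \emph{squared}, while if $B_{n+1}=0$ it at most \emph{doubles}. The key point is that once $Z_n$ is small, squaring is an extremely fast contraction and it happens with probability $\tfrac12$ at every step, whereas the doublings can slow this down only by a bounded factor per step. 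To make this quantitative it is convenient to pass to the doubly-logarithmic scale: set $V_n = -\log_2 Z_n \in [0,\infty]$ and, whenever $V_n \ge 1$, $T_n = \log_2 V_n$. Then a ``$+$'' step ($B_{n+1}=1$) sends $T_{n+1} = T_n + 1$ exactly, while a ``$-$'' step ($B_{n+1}=0$) sends $T_{n+1} \ge T_n + \log_2(1 - 2^{-T_n}) \ge T_n - 2^{-T_n + 2}$, valid whenever $T_n \ge 2$. Thus, as long as $T_n$ stays above a moderately large level $L$, the increments of $T_n$ are i.i.d.\ and bounded in $[-\eta_0,1]$ with drift at least $\tfrac{1-\eta_0}{2} > 0$, where $\eta_0$ can be made as small as we wish by taking $L = L(\eta_0)$ large.

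First (Phase~1), fix $\delta > 0$ and a large constant $C$. By the basic polarization result (equivalently, by the statement $\lim_k \Pr(Z_k < 2^{-5k/4}) = I(W)$ recalled in the appendix), there is a \emph{fixed} $m = m(\delta, C)$ with $\Pr(V_m \ge C) \ge I(W) - \delta$. Everything that follows is carried out conditionally on the event $\{V_m \ge C\}$.

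Second (Phase~2), I would run the process from time $m$ onward, tracking $T_k = \log_2 V_k$, and let $\tau = \inf\{k \ge m : T_k \le L\}$. Before $\tau$ the increments of $T_k$ are bounded and have strictly positive drift, so an exponential (Cram\'er-type) supermartingale argument gives $\Pr(\tau < \infty \mid V_m \ge C) \le e^{-c(\log_2 C - L)}$, which tends to $0$ as $C \to \infty$. On $\{\tau = \infty\}$ one has $T_k \ge L$ for all $k \ge m$, hence $T_n \ge T_m + N_+ - \eta_0(n - m - N_+)$, where $N_+ \sim \mathrm{Bin}(n-m,\tfrac12)$ counts the $+$ steps; a Chernoff bound on $N_+$ then yields $T_n \ge n(\tfrac12 - \epsilon)$ for all large $n$, except on an event of probability $e^{-c'(n-m)}$, i.e.\ $V_n \ge 2^{n(\frac12 - \epsilon)}$ and therefore $Z_n \le 2^{-2^{n(\frac12 - \epsilon)}}$. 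Choosing $\eta_0$ and the Chernoff slack small enough that $\tfrac12 - \epsilon > \beta$ --- this is exactly where the hypothesis $\beta < \tfrac12$ enters --- we conclude that, conditionally on $\{V_m \ge C\}$, $\Pr(Z_n \le 2^{-2^{n\beta}}) \ge 1 - \epsilon(C) - o_n(1)$ with $\epsilon(C) \to 0$.

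Assembling the two phases, $\liminf_n \Pr(Z_n \le 2^{-2^{n\beta}}) \ge (I(W) - \delta)(1 - \epsilon(C))$; letting $C \to \infty$ and then $\delta \to 0$ gives $\liminf_n \Pr(Z_n \le 2^{-2^{n\beta}}) \ge I(W)$. For the matching upper bound, note $\Pr(Z_n \le 2^{-2^{n\beta}}) \le \Pr(Z_n \le \tfrac12)$, and since $Z_n \to Z_\infty \in \{0,1\}$ almost surely the bounded indicators $\indicator{Z_n \le 1/2}$ converge almost surely to $\indicator{Z_\infty = 0}$, so $\Pr(Z_n \le \tfrac12) \to \Pr(Z_\infty = 0) = I(W)$; hence the limit equals $I(W)$. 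The main obstacle --- and the only genuinely new ingredient beyond \cite{Ari08} --- is the circularity in Phase~2: the fast (squaring) decay only kicks in once $Z_n$ is already small, while the $-$ steps can inflate it again, so one must show that the doubly-logarithmic variable $T_n$, once moderately large, stays large with overwhelming probability. The stopping-time/exponential-supermartingale estimate above is what controls this, and arranging the resulting drift constant to lie strictly above $\beta$ for every $\beta < \tfrac12$ is precisely what forces the $\tfrac12$ threshold in the statement.
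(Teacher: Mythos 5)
The paper does not prove this theorem---it is stated as a citation of \cite{ArT08}---but your proposal correctly reconstructs the standard Ar\i kan--Telatar bootstrapping argument, which is exactly the technique the paper's own Theorem~\ref{thm:rateZbar} appeals to (reduce to a $(x^2,\,2x)$ recursion, then ``adapt the proof of \cite{ArT08}''). One phrasing to tighten: in Phase~2 the increments of $T_n$ are not literally i.i.d.\ (the $-$ step increment is $\log_2(1-2^{-T_n})$, which depends on the current $T_n$), but while $T_n\ge L$ they are dominated below by the i.i.d.\ $\{+1,-\eta_0\}$ walk driven by the same $B_k$, and that stochastic domination up to the stopping time $\tau$ is all that the exponential-supermartingale escape estimate and the terminal Chernoff bound actually require.
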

Of course, this implies \eqref{eqn:Zestimate2}.
For lossy source compression, the important quantity is the  rate at which the random variable 
$Z_n$ approaches $1$ (as compared to $0$).
Let us now show the result mirroring Theorem~\ref{thm:rateZ} for this case, using similar techniques as in \cite{ArT08}.
\begin{theorem}[Rate of $Z_n$ Approaching $1$]\label{thm:rateZbar}
Given a B-DMC $W$, and any $\beta < \frac{1}{2}$,
\begin{align*}
\lim_{n \to \infty} \Pr(Z_n \geq 1- 2^{-2^{n\beta}}) = 1-I(W).
\end{align*}
\end{theorem}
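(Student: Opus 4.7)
Plan: I would mirror the proof of Theorem~\ref{thm:rateZ} in \cite{ArT08}, with the roles of the two operations $W \mapsto W^{[0]}$ and $W \mapsto W^{[1]}$ swapped. The backbone of the Arikan--Telatar argument is that under $[1]$ the Bhattacharyya parameter satisfies $Z(W^{[1]}) = Z(W)^2$ (so small $Z$ squares), whereas under $[0]$ it satisfies $Z(W^{[0]}) \leq 2Z(W)$ (so it at most doubles). For the convergence of $Z$ to $1$ I would track the dual quantity $1 - Z_n$ and establish the two analogous bounds, for some absolute constant $C$ and all $W$ with $Z(W)$ close enough to $1$:
\begin{align*}
1 - Z(W^{[1]}) &= \bigl(1-Z(W)\bigr)\bigl(1+Z(W)\bigr) \leq 2\bigl(1-Z(W)\bigr), \\
1 - Z(W^{[0]}) &\leq C\,\bigl(1-Z(W)\bigr)^{2}.
\end{align*}
The second inequality is the crucial new ingredient.

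Given these two bounds, the proof splits into two parts. First, almost-sure convergence: the conservation law $I(W^{[0]}) + I(W^{[1]}) = 2 I(W)$ makes $I_n := I(W_n)$ a bounded martingale, so $I_n$ converges a.s.\ to some $I_\infty$, and the recursions force $Z_\infty \in \{0,1\}$ a.s. The martingale property then gives $\Pr(Z_\infty = 1) = 1 - I(W)$, which is the limit appearing in the statement.

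Second, the rate: pick $\zeta > 0$ small enough that the squaring bound above holds on $[1-\zeta,\,1]$, and let $T = \inf\{n : Z_n \geq 1-\zeta\}$. On $\{Z_\infty = 1\}$ one has $T < \infty$ a.s., and for any $\eta > 0$ there is $m$ with $\Pr(T \leq m) \geq 1 - I(W) - \eta$. For $n \geq m$, writing $\alpha_n = \log\frac{1}{1-Z_n}$, the two bounds above yield
\begin{align*}
\alpha_{n+1} \geq 2\alpha_n - \log C \text{ under } [0], \qquad \alpha_{n+1} \geq \alpha_n - \log 2 \text{ under } [1].
\end{align*}
A Chernoff bound shows that among the $n-T$ i.i.d.\ symmetric Bernoulli steps after time $T$, the number of $[0]$ steps is at least $(\tfrac12 - \eta)(n-T)$ with probability tending to $1$. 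Composing the recursions along such a sample path (and using that $\zeta$ is fixed so that $\alpha_T$ dominates the $\log C$ correction) gives $\alpha_n \geq 2^{(\frac12 - \eta)(n-T) - O(1)}$ eventually, and hence $1 - Z_n \leq 2^{-2^{n\beta}}$ for every $\beta < \tfrac12$ on this event. Combining with $\Pr(T \leq m) \geq 1 - I(W) - \eta$ and letting $\eta \to 0$ completes the proof.

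The main obstacle is the squaring inequality $1 - Z(W^{[0]}) \leq C(1-Z(W))^2$ near $Z = 1$. The standard bounds $Z(W) \leq Z(W^{[0]}) \leq 2Z(W) - Z(W)^2$ only sandwich $1 - Z(W^{[0]})$ between $(1-Z(W))^2$ and $1 - Z(W)$, i.e.\ they give the \emph{lower} endpoint of the desired range but not the upper. One needs a sharper lower bound on $Z(W^{[0]})$ when the channel is very noisy, which must be derived from a Hellinger-affinity or total-variation type estimate that uses more than just $Z(W)$. This is presumably the content of the lemma referred to in the acknowledgement as \texttt{lowerboundZ}, whose proof is credited to E.\ Telatar; once it is in place, the rest of the argument is a direct dualisation of \cite{ArT08}.
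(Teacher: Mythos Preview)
Your plan is correct and matches the paper's approach: the key lemma is indeed Lemma~\ref{lem:lowerboundZ}, which supplies the missing lower bound on $Z(W^{[0]})$, and the rest is the Ar\i kan--Telatar bootstrap. The only difference is that the paper tracks $X_n = 1 - Z_n^2$ rather than $1 - Z_n$; with this substitution the lemma gives the clean global recursions $X_{n+1} \leq X_n^2$ and $X_{n+1} \leq 2X_n$ (no constant $C$, no restriction to a neighbourhood of $1$), so one can invoke \cite{ArT08} verbatim instead of redoing the stopping-time argument.
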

\begin{proof}
%If $W$ is symmetric, then all the resulting channels $\{W^{(i)}\}_{i=0}^{2^n-1}$ are
%symmetric \cite{Ari08}. 
Using Lemma~\ref{lem:lowerboundZ} the random variable $Z_{n+1}$ can be bounded as,
\begin{align*}
Z_{n+1} &\geq \sqrt{2Z_n^2 - Z_n^4} \text{ w.p. }\frac12,\\
Z_{n+1} &= Z_n^2 \text{ w.p. }\frac12. 
\end{align*}
Then, with probability $\frac12$,
$Z_{n+1}^2 \geq {1-(1-Z_n^2)^2}$. This implies that
$1-Z_{n+1}^2 \leq (1-Z_{n}^2)^2$.
Similarly, with probability $\frac12$,
\begin{align*}
1-Z_{n+1}^2 &= 1-Z_n^4 \leq 2 (1-Z_n^2).
\end{align*}
Let $X_n$ denote $X_n = 1-Z_n^2$. Then $\{X_n:n\geq 0\}$
satisfies
\begin{align*}
X_{n+1} \leq X_n^2 \text{ w.p. }\frac12,\\
X_{n+1} \leq 2X_n \text{ w.p. }\frac12.
\end{align*}
By adapting the proof of \cite{ArT08}, we can show that for any $\beta <
\frac12$,
\begin{align*}
\lim_{n\to\infty}\Pr(X_n \leq 2^{-2^{n\beta}}) = 1-I(W).
\end{align*}
Using the relation $X_n = 1- Z_n^2  \geq 1- Z_n$, we get 
\begin{align*}
\lim_{n\to\infty}\Pr(1- Z_n \leq 2^{-2^{n\beta}}) = 1-I(W).
\end{align*}
\end{proof}
\begin{lemma}[Lower Bound on $Z$]\label{lem:lowerboundZ}
Let $W_1$ and $W_2$ be two B-DMCs and let $X_1$ and $X_2$ be their inputs with a
uniform prior. Let $Y_1\in \mathcal{Y}_1$ and $Y_2 \in \mathcal{Y}_2$ denote the
outputs.
Let $W$ denote the channel between $X = X_1\oplus X_2$ and the output $(Y_1,Y_2)$, i.e.,
\begin{align*}
W(y_1,y_2\mid x) = \frac12 \sum_{u}W_1(y_1\mid x \oplus
u)W_2(y_2\mid u).
\end{align*}
Then 
\begin{align*}
Z(W) \geq \sqrt{Z(W_1)^2 + Z(W_2)^2 - Z(W_1)^2Z(W_2)^2}.
\end{align*}
\end{lemma}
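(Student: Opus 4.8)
The plan is to reduce the inequality to a two–fold application of Jensen's inequality after a convenient change of variables. For a B-DMC $V$ with uniform input, write each output letter $y$ through $\pi_V(y)=\tfrac12\bigl(V(y\mid 0)+V(y\mid 1)\bigr)$ and $\Delta_V(y)=\bigl(V(y\mid 0)-V(y\mid 1)\bigr)/\bigl(V(y\mid 0)+V(y\mid 1)\bigr)\in[-1,1]$ (letters with $\pi_V(y)=0$ contribute nothing to any of the quantities below and may be discarded). Then $V(y\mid 0)=\pi_V(y)(1+\Delta_V(y))$ and $V(y\mid 1)=\pi_V(y)(1-\Delta_V(y))$, so that $Z(V)=\sum_y\pi_V(y)\sqrt{1-\Delta_V(y)^2}=\bE\bigl[\sqrt{1-\Delta_V(Y)^2}\bigr]$, where $Y$ is drawn from the output distribution $\pi_V$.

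First I would substitute this parametrization into the definition of $W$. A direct computation gives $W(y_1,y_2\mid 0)=\pi_{W_1}(y_1)\pi_{W_2}(y_2)(1+\Delta_1\Delta_2)$ and $W(y_1,y_2\mid 1)=\pi_{W_1}(y_1)\pi_{W_2}(y_2)(1-\Delta_1\Delta_2)$, where $\Delta_j=\Delta_{W_j}(y_j)$. Taking the geometric mean and summing over $(y_1,y_2)$, and using that $\pi_{W_1}\otimes\pi_{W_2}$ is the output distribution of $W$ under a uniform input, I obtain $Z(W)=\bE\bigl[\sqrt{1-\Delta_1^2\Delta_2^2}\bigr]$, where $\Delta_1$ and $\Delta_2$ are now \emph{independent} random variables (because the outputs of $W_1$ and $W_2$ are). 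Setting $S=\sqrt{1-\Delta_1^2}$ and $T=\sqrt{1-\Delta_2^2}$, the identity $1-\Delta_1^2\Delta_2^2=1-(1-S^2)(1-T^2)=S^2+T^2-S^2T^2$ turns the target into
\begin{align*}
\bE\bigl[\sqrt{S^2+T^2-S^2T^2}\bigr]\;\geq\;\sqrt{(\bE S)^2+(\bE T)^2-(\bE S)^2(\bE T)^2},
\end{align*}
with $\bE S=Z(W_1)$, $\bE T=Z(W_2)$, and $S,T\in[0,1]$ independent.

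It then remains to prove that $g(s,t)=\sqrt{s^2+t^2-s^2t^2}$ satisfies $\bE[g(S,T)]\ge g(\bE S,\bE T)$ for independent $[0,1]$-valued $S,T$. For this I would check that $g$ is convex in each argument separately: fixing $t$, the map $s\mapsto g(s,t)=\sqrt{(1-t^2)s^2+t^2}$ has the form $s\mapsto\sqrt{cs^2+d}$ with $c,d\ge 0$, whose second derivative is $cd\,(cs^2+d)^{-3/2}\ge 0$ (the degenerate cases $c=0$ or $d=0$ are linear in $|s|$, hence also convex), and the same holds in $t$ by symmetry. Conditioning on $T$ and invoking independence, Jensen's inequality in the first variable gives $\bE[g(S,T)]\ge\bE[g(\bE S,T)]$, and a second application of Jensen in the second variable gives $\bE[g(\bE S,T)]\ge g(\bE S,\bE T)$, which is the claim. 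The only genuinely delicate part is the bookkeeping of the first two paragraphs — the change of variables that exhibits $Z(W)$ as the expectation of $\sqrt{1-\Delta_1^2\Delta_2^2}$ over a product measure; once that representation is established, the separate-convexity/Jensen step is routine.
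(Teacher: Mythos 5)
Your proof is correct and takes essentially the same route as the paper's: both express $Z(W)$ as a two-variable expectation over a product measure and then apply Jensen's inequality coordinate by coordinate. The two arguments differ only by a change of variables and measure — you work with the output distribution $\pi_j$ and $S_j=\sqrt{1-\Delta_j^2}$ (so $\bE[S_j]=Z(W_j)$), whereas the paper works with the normalized Bhattacharyya measure $P_j(y)\propto\sqrt{W_j(y\mid 0)W_j(y\mid 1)}$ and $A_j(y)=\sqrt{W_j(y\mid 0)/W_j(y\mid 1)}+\sqrt{W_j(y\mid 1)/W_j(y\mid 0)}=2/S_j(y)$ (so $\bE_{P_j}[A_j]=2/Z(W_j)$) — and under this reparametrization the integrands and the two Jensen steps coincide exactly, with your observation $S_j\in[0,1]$ playing the role of the paper's AM--GM bound $A_j\ge 2$.
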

\begin{proof}
Let $Z = Z(W)$ and $Z_i = Z(W_i)$. $Z$ can be expanded as follows.
\begin{align*}
&Z =  \sum_{y_1,y_2}\sqrt{W(y_1,y_2\mid 0) W(y_1,y_2\mid 1) }\\
% = & \frac{1}{2}\sum_{y_1,y_2} \Big[(W_1(y_1 \mid 0)W_2(y_2\mid 0) + W_1(y_1 \mid
%0)W_2(y_2\mid 0))\\
%& \phantom{===} (W_1(y_1 \mid 0)W_2(y_2\mid 1) + W_1(y_1 \mid 1)W_2(y_2\mid 0))
 %   \Big]^{\frac12}\\
& = \frac{1}{2}\sum_{y_1,y_2} \Big[W_1(y_1 \mid 0)W_2(y_2\mid 0) W_1(y_1 \mid 0)W_2(y_2\mid 1) \\
&\phantom{===}+ W_1(y_1 \mid 0)W_2(y_2\mid 0)W_1(y_1 \mid 1)W_2(y_2\mid 0)\\ 
&\phantom{===}+ W_1(y_1 \mid 1)W_2(y_2\mid 1)W_1(y_1 \mid 0)W_2(y_2\mid 1)\\ 
&\phantom{===}+ W_1(y_1 \mid 1)W_2(y_2\mid 1)W_1(y_1 \mid 1)W_2(y_2\mid 0)\Big]^{\frac12}\\
& = \frac{Z_1Z_2}{2} \sum_{y_1,y_2} P_1(y_1) P_2(y_2)\\
&\sqrt{\frac{W_1(y_1\mid 0)}{W_1(y_1\mid 1)}+
\frac{W_1(y_1\mid 1)}{W_1(y_1\mid 0)}+ \frac{W_2(y_2\mid 0)}{W_2(y_2\mid 1)}+
\frac{W_2(y_2\mid 1)}{W_2(y_2\mid 0)}}
\end{align*}
where $P_i(y_i)$ denotes
\begin{align*}
P_i(y_i) = \frac{\sqrt{W_i(y_i\mid 0)W_i(y_i\mid 1)}}{Z_i}.
\end{align*}
Note that $P_i$ is a probability distribution over $\mathcal{Y}_i$.
Let $\bE_i$ denote the expectation with respect to $P_i$ 
and let
\begin{align*}
A_i(y) \triangleq \sqrt{\frac{W_i(y\mid 0)}{W_i(y\mid 1)}} +
\sqrt{\frac{W_i(y\mid 1)}{W_i(y\mid 0)}}.
\end{align*}
Then $Z$ can be expressed as 
\begin{align*}
Z = \frac{Z_1Z_2}{2}\bE_{1,2}\left[\sqrt{\left(A_1(Y_1)\right)^2
+\left(A_2(Y_2)\right)^2-4 }\right].
\end{align*}
The arithmetic-mean geometric-mean inequality implies that $A_i(y) \geq 2$.
Therefore, for any $y_i \in \mathcal{Y}_i$, $A_i(y_i)^2 -4 \geq 0$.
Note that the function $f(x) = \sqrt{x^2+a}$ is convex for $a\geq 0$. Applying Jensen's inequality
first with respect to the expectation $\bE_1$ and then with respect to $\bE_2$, we get
\begin{align*}
Z &\geq \frac{Z_1Z_2}{2}\bE_2\left[\sqrt{\left(\bE_1\left[A_1(Y_1)\right]\right)^2
+\left(A_2(Y_2)\right)^2 -4}\right]\\
&\geq
\frac{Z_1Z_2}{2}\sqrt{\left(\bE_1\left[A_1(Y_1)\right]\right)^2+\left(\bE_2\left[A_2(Y_2)\right]\right)^2-4}.
\end{align*}
The claim follows by substituting $\bE_i[A_i(Y_i)] = \frac{2}{Z_i}$.
\end{proof}

\bibliographystyle{IEEEtran} 
\bibliography{lth,lthpub}

\newcommand{\SortNoop}[1]{}
\begin{thebibliography}{10}
\providecommand{\url}[1]{#1}
\csname url@rmstyle\endcsname
\providecommand{\newblock}{\relax}
\providecommand{\bibinfo}[2]{#2}
\providecommand\BIBentrySTDinterwordspacing{\spaceskip=0pt\relax}
\providecommand\BIBentryALTinterwordstretchfactor{4}
\providecommand\BIBentryALTinterwordspacing{\spaceskip=\fontdimen2\font plus
\BIBentryALTinterwordstretchfactor\fontdimen3\font minus
  \fontdimen4\font\relax}
\providecommand\BIBforeignlanguage[2]{{%
\expandafter\ifx\csname l@#1\endcsname\relax
\typeout{** WARNING: IEEEtran.bst: No hyphenation pattern has been}%
\typeout{** loaded for the language `#1'. Using the pattern for}%
\typeout{** the default language instead.}%
\else
\language=\csname l@#1\endcsname
\fi
#2}}

\bibitem{Sha59}
C.~E. Shannon, ``Coding theorems for a discrete source with a fidelity
  criterion,'' \emph{IRE Nat. Conv. Rec., pt. 4}, vol.~27, pp. 142--163, 1959.

\bibitem{CoT91}
T.~M. Cover and J.~A. Thomas, \emph{Elements of Information Theory}.\hskip 1em
  plus 0.5em minus 0.4em\relax New York: Wiley, 1991.

\bibitem{Gob62}
T.~J. Goblick, Jr., ``Coding for discrete information source with a distortion
  measure,'' Ph.D. dissertation, MIT, 1962.

\bibitem{Ber71}
T.~Berger, \emph{Rate Distortion Theory}.\hskip 1em plus 0.5em minus
  0.4em\relax London: Prentice Hall, 1971.

\bibitem{ViO74}
A.~J. Viterbi and J.~K. Omura, ``Trellis encoding of memoryless disctre-time
  sources with a fidelity criterion,'' \emph{IEEE Transactions on Information
  Theory}, vol.~20, no.~3, pp. 325--332, 1974.

\bibitem{MaY03a}
Y.~Matsunaga and H.~Yamamoto, ``A coding theorem for lossy data compression by
  ldpc codes,'' \emph{IEEE Trans. Inform. Theory}, vol.~49, no.~9, pp.
  2225--2229, 2003.

\bibitem{WaM07}
M.~J. Wainwright and E.~Martinian, ``Low-density graph codes that are optimal
  for source/channel coding and binning,'' \emph{IEEE Trans. Inform. Theory},
  2009.

\bibitem{MaY03}
E.~Martinian and J.~Yedidia, ``Iterative quantization using codes on graphs,''
  in \emph{Proc. of the Allerton Conf. on Commun., Control, and Computing},
  Monticello, IL, USA, 2003.

\bibitem{Mur04}
T.~Murayama, ``Thouless-anderson-palmer approach for lossy compression,''
  \emph{J. Phys. Rev. E: Stat. Nonlin. Soft Matter Phys.}, vol.~69, 2004.

\bibitem{CMZ05}
S.~Ciliberti, M.~M{\'e}zard, and R.~Zecchina, ``Lossy data compression with
  random gates,'' \emph{Physical Rev. Lett.}, vol.~95, no. 038701, 2005.

\bibitem{BMZ03}
A.~Braunstein, M.~M{\'e}zard, and R.~Zecchina, ``Survey propagation: algorithm
  for satisfiability,'' e-print: cs.CC//0212002.

\bibitem{WaM05}
M.~J. Wainwright and E.~Maneva, ``Lossy source coding via message-passing and
  decimation over generalized codewords of {LDGM} codes,'' in \emph{Proc. of
  the IEEE Int. Symposium on Inform. Theory}, Adelaide, Australia, Sept. 2005,
  pp. 1493--1497.

\bibitem{FiF07}
T.~Filler and J.~Fridrich, ``Binary quantization using belief propagation with
  decimation over factor graphs of {LDGM} codes,'' in \emph{Proc. of the
  Allerton Conf. on Commun., Control, and Computing}, Monticello, IL, USA,
  2007.

\bibitem{GVW08}
A.~Gupta, S.~Verd{\'u}, and T.~Weissman, ``Rate-distortion in near-linear
  time,'' in \emph{Proc. of the IEEE Int. Symposium on Inform. Theory},
  Toronto, Canada, July 6 - July 11 2008, pp. 847--851.

\bibitem{Ari08}
E.~{Ar\i kan}, ``Channel polarization: A method for constructing
  capacity-achieving codes for symmetric binary-input memoryless channels,''
  \emph{submitted to IEEE Trans. Inform. Theory}, 2008.

\bibitem{ArT08}
E.~{Ar\i kan} and E.~{Telatar}, ``{On the rate of channel polarization},'' July
  2008, available from ``http://arxiv.org/pdf/0807.3917''.

\bibitem{Dum04}
I.~Dumer, ``Recursive decoding and its performance for low-rate reed-muller
  codes,'' \emph{IEEE Transactions on Information Theory}, vol.~50, no.~5, pp.
  811--823, 2004.

\bibitem{For01}
G.~D. Forney, Jr., ``Codes on graphs: Normal realizations,'' \emph{IEEE Trans.
  Inform. Theory}, vol.~47, no.~2, pp. 520--548, Feb. 2001.

\bibitem{MRS07}
A.~Montanari, F.~Ricci-Tersenghi, and G.~Semerjian, ``Solving constraint
  satisfaction problems through belief propagation-guided decimation,'' in
  \emph{Proc. of the Allerton Conf. on Commun., Control, and Computing},
  Monticello, USA, Sep 26--Sep 28 2007.

\bibitem{HKU09}
N.~Hussami, S.~B. Korada, and R.~Urbanke, ``Polar codes for channel and source
  coding,'' in \emph{sumitted to ISIT}, 2009.

\bibitem{WyZ76}
A.~Wyner and J.~Ziv, ``The rate-distortion function for source coding with side
  information at the decoder,'' \emph{IEEE Transactions on Information Theory},
  vol.~22, no.~1, pp. 1--10, 1976.

\bibitem{GeP83}
S.~I. Gelfand and M.~S. Pinsker, ``Coding for channel with random parameters,''
  \emph{Problemy Peredachi Informatsii}, vol. 9(1), pp. 19--31, 1983.

\bibitem{ZSE02}
R.~Zamir, S.~Shamai, and U.~Erez, ``Nested linear/lattice codes for structured
  multiterminal binning,'' \emph{IEEE Transactions on Information Theory},
  vol.~48, no.~6, pp. 1250--1216, 2002.

\bibitem{CPR03}
J.~Chou, S.~S. Pradhan, and K.~Ramachandran, ``Turbo and trellis-based
  constructions for source coding with side information,'' in \emph{Data
  Compression Conference}, Mar. 2003.

\bibitem{PrR03}
S.~S. Pradhan and K.~Ramchandran, ``Distributed source coding using syndromes
  (discus): design and construction,'' \emph{IEEE Transactions on Information
  Theory}, vol.~49, no.~3, pp. 626--643, 2003.

\bibitem{LXG03}
A.~D. Liveris, Z.~Xiong, and C.~N. Georghiades, ``Nested convolutional/turbo
  codes for the binary wyner-ziv problem,'' in \emph{Proceedings of the
  International Conference on Image Processing}, Sept. 2003, pp. 601--604.

\bibitem{YSXZ08}
Y.~Yang, V.~Stankovic, Z.~Xiong, and W.~Zhao, ``On multiterminal source code
  design,'' \emph{IEEE Transactions on Information Theory}, vol.~54, no.~5, pp.
  2278--2302, 2008.

\bibitem{CPR01}
J.~Chou, S.~S. Pradhan, and K.~Ramachandran, ``Turbo coded trellis-based
  constructions for data embedding: Channel coding with side information,'' in
  \emph{Proceedings of the Asilomar Conference}, Nov. 2001, pp. 305--309.

\bibitem{ErB05}
U.~Erez and S.~ten Brink, ``A close-to-capacity dirty paper coding scheme,''
  \emph{IEEE Transactions on Information Theory}, vol.~51, no.~10, pp.
  3417--3432, 2005.

\bibitem{SLSX05}
Y.~Sun, A.~D. Liveris, V.~Stankovic, and Z.~Xiong, ``Near-capacity dirty-paper
  code designs based on tcq and ira codes,'' in \emph{Proc. of the IEEE Int.
  Symposium on Inform. Theory}, Sept. 2005, pp. 184--188.

\bibitem{SlW73}
D.~Slepian and J.~Wolf, ``Noiseless coding of correlated information sources,''
  \emph{IEEE Transactions on Information Theory}, vol.~19, no.~4, pp. 471--480,
  1973.

\bibitem{Wyn73}
A.~D. Wyner, ``A theorem on the entropy of certain binary sequences and
  applications: Part {II},'' \emph{IEEE Trans. Inform. Theory}, vol.~19, no.~6,
  pp. 772--777, Nov. 1973.

\bibitem{Aldbook}
D.~Aldous and J.~A. Fill, \emph{Reversible Markov chains and random walks on
  graphs}.\hskip 1em plus 0.5em minus 0.4em\relax Available at
  www.stat.berkeley.edu/users/aldous/book.html.

\bibitem{BCW03}
R.~J. Barron, B.~Chen, and G.~W. Wornell, ``The duality between information
  embedding and source coding with side information and some applications,''
  \emph{IEEE Trans. Inform. Theory}, vol.~49, no.~5, pp. 1159--1180, 2003.

\bibitem{HeG83}
C.~Heegard and A.~A.~E. Gamal, ``On the capacity of computer memory with
  defects,'' \emph{IEEE Transactions on Information Theory}, vol.~29, no.~5,
  pp. 731--739, 1983.

\bibitem{Tsy75}
B.~S. Tsybakov, ``Defect and error correction,'' \emph{Problemy Peredachi
  Informatsii}, vol.~11, pp. 21--30, Jul.-Sep. 1975.

\bibitem{CiM05}
S.~Ciliberti and M.~M{\'e}zard, ``The theoretical capacity of the parity source
  coder,'' \emph{Journal of Statistical Mechanics:Theory and Experiment},
  vol.~1, no. 10003, 2005.

\bibitem{KSU09}
S.~B. Korada, E.~{\c Sa\c so\u glu}, and R.~Urbanke, ``Polar codes:
  Characterization of exponent, bounds, and constructions,'' \emph{submitted to
  IEEE Trans. Inform. Theory}, 2009.

\end{thebibliography}
\end{document}